\documentclass{article}
\usepackage[utf8]{inputenc}
\usepackage[english]{babel}

\usepackage[margin = 1in]{geometry}

\usepackage{amsmath}
\usepackage{amsthm}
\usepackage{thmtools}
\usepackage{graphicx}
\usepackage[colorlinks=true, allcolors=lizDkPur]{hyperref}
\usepackage[nameinlink,capitalise]{cleveref}

\usepackage[style=numeric, sorting=none, backend=biber,maxnames=10]{biblatex}
\usepackage{amssymb}
\usepackage{amsfonts}
\usepackage{bbm}
\usepackage[dvipsnames]{xcolor}
\usepackage{tikz}
\usetikzlibrary{positioning}
\usetikzlibrary{shapes.geometric}
\usepackage[normalem]{ulem}

\usepackage{pgfplots}
\pgfplotsset{compat=1.18}
\usepackage{adjustbox}
\usepackage{tikz}
\usepackage{tikz-cd}
\tikzcdset{scale cd/.style={every label/.append style={scale=#1},
    cells={nodes={scale=#1}}}}
\usepackage{quiver}
\usepackage{authblk}
\usepackage[safe]{tipa}
\usepackage{float}
\usepackage[pdf]{pstricks}
\usepackage{pgf}
\usepackage{scalerel}
\usepackage{algpseudocode}
\usepackage{graphbox}
\usepackage{multirow}
\usepackage{thm-restate}
\usepackage{thmtools}
\usepackage{mathtools}
\usepackage{hyperref}
\usepackage{booktabs}

\newtheorem{theorem}{Theorem}[section]
\newtheorem{corollary}[theorem]{Corollary}
\newtheorem{proposition}[theorem]{Proposition}

\newtheorem{lemma}[theorem]{Lemma}
\theoremstyle{definition}
\newtheorem{definition}[theorem]{Definition}
\newtheorem{example}[theorem]{Example}
\newtheorem{remark}[theorem]{Remark}

\newenvironment{examplet}
{\begin{example}
}
{ 
$\hfill \triangleleft$
\end{example} 
}

\definecolor{ltgray}{gray}{0.9}
\definecolor{gray}{rgb}{0.95,0.95,0.96}
\definecolor{dkgray}{rgb}{0.7,0.7, 0.735}
\definecolor{ltblue}{rgb}{0.55,0.55, 0.95}
\definecolor{ltGreen}{rgb}{0.25,0.65, 0.25}
\definecolor{dkgreen}{RGB}{0, 100, 0}
\definecolor{dkred}{rgb}{0.75,0.0, 0.0}
\definecolor{ltred}{rgb}{0.95,0.95, 0.85}
\definecolor{utahRed}{rgb}{.8, 0, 0}
\definecolor{oregonGreen}{rgb}{0, .41, .163}
\definecolor{albanyPurple}{rgb}{0.4, 0, .55}
\definecolor{figcolor}{RGB}{234, 181, 134} 
\definecolor{msugreen}{RGB}{24, 69, 59}
\definecolor{osured}{RGB}{187, 0, 0}
\definecolor{osugray}{RGB}{102, 102, 102}
\definecolor{msuAndosu}{RGB}{97,38,32}

\definecolor{lizRed}{HTML}{D94E3D}
\definecolor{lizDkPur}{HTML}{743B75}
\definecolor{lizLtPur}{HTML}{B87BBC}
\definecolor{lizBlue}{HTML}{26B0E7}
\definecolor{lizYellow}{HTML}{F3A61B}

\hypersetup{
    colorlinks=true,
    linkcolor=lizDkPur,
    filecolor=lizBlue,      
    urlcolor=lizDkPur,
    citecolor = lizBlue,
    pdftitle={Canopies},
    }
    
\usepackage{cleveref}

\crefname{figure}{Figure}{Figures}
\crefname{section}{Section}{Sections}
\crefname{equation}{Equation}{Equations}
\crefname{remark}{Remark}{Remarks}
\crefname{theorem}{Theorem}{Theorems}
\crefname{definition}{Definition}{Definitions}
\crefname{lemma}{Lemma}{Lemmas}
\crefname{proposition}{Proposition}{Propositions}
\crefname{table}{Table}{Tables}

\newcommand{\C}{\mathbb{C}}

\newcommand{\cE}{\mathcal{E}}

\newcommand{\N}{\mathbb{N}}
\newcommand{\R}{\mathbb{R}}
\renewcommand{\S}{\mathbb{S}}

\newcommand{\Z}{{\mathbb Z}}

\newcommand{\cD}{\mathcal{D}}

\newcommand{\Func}{\mathrm{Func}}

\DeclareMathOperator{\cl}{\mathrm{cl}}
\newcommand{\Dgm}{\mathrm{Dgm}}
\newcommand{\DgmA}{\mathrm{dgm_A}}
\newcommand{\DgmAsp}{\mathrm{DGM_A}} 
\newcommand{\DgmD}{\mathrm{dgm_D}}
\newcommand{\DgmDsp}{\mathrm{DGM_D}} 
\newcommand{\e}{\varepsilon}
\renewcommand{\phi}{\varphi}
\newcommand{\inv}{^{-1}}

\newcommand{\id}{\mathrm{Id}}
\DeclareMathOperator{\Id}{\id}

\usepackage{xargs}

\newcommand{\mycomment}[1]{}

\newcommand{\define}[1]{{\bf \boldmath{#1}}}
\newcommand{\intS}[2][n]{\mathbb{I}^{#1}\left[#2\right)}

\title{Canopies: A Generalization of Vines and Vineyards for Parameterized Persistence}
\date{}
\usepackage{authblk}

\author[1]{Barbara Giunti}
\author[2]{Elizabeth Munch}
\affil[1]{University at Albany -- SUNY}
\affil[2]{Michigan State University}

\begin{document}
\maketitle

\begin{abstract}
In this paper, we provide a new construction for studying parameterized persistence, called a \emph{canopy}. 
We give two versions of this construction: the A-canopy, retaining all information about points on the diagonal of the persistence diagram; and the D-canopy, encoding the information of the ``standard" persistence diagram. 
We do this by making a simple but major modification in the persistence bundle representation information: namely, rather than tracking a point in the persistence diagram, we instead track some choice of pairs of simplices that created said point. 
This viewpoint is a combinatorial version of tracking the chain complex information rather than just the output of persistence. 
We show how to construct the canopies from any filtered filtration function, proving, using the algebraic structure of filtered chain complexes, that different choices of pairs result in homeomorphic structures. 
Finally, we showcase the power of our approach by using canopies to define \emph{vines} even in the presence of points with multiplicity; to discuss monodromy; and to obtain some immediate results linking non-trivial monodromy in the persistent homology transform with the existence of \emph{non-Hausdorff points} in the canopy.
\end{abstract}

\section{Introduction}

Topological Data Analysis (TDA), and its spearhead, Persistent Homology (PH), have been proven useful in hundreds of applications \cite{donut}. 
The vast majority of persistence applications assume a single input filtration function $f\colon K \to \R$ where $f(\sigma) \leq f(\tau)$ for any faces $\sigma \subseteq \tau$, then study the homology of a filtration of sublevelsets to get a single persistence diagram \cite{DiFabio2013}.
However, input data often has more than one parameter involved which cannot be studied with standard persistence without making additional choices or restrictions.
In this paper, we focus on the case of continuous input function data $f \colon K \times B \to \R$ defined on a finite simplicial complex $K$, where for any fixed $p \in B$, $f_p\coloneqq f(\cdot, p)\colon K \to \R$ is a filtration function; this $f$ is called a \emph{fibered filtration function} (a term coined by \cite{Hickok2026}). 
Taking the sublevelset persistence for each $f_p$ results in a continuous function $F\colon B \to \Dgm$ where $\Dgm$ is the space of persistence diagrams, which we call \emph{parameterized persistence}. 
Handling data in this form requires more structure than standard persistence since the diagrams are evolving with the changing parameter in $B$.  

By way of disambiguation, we note a distinction here between the \emph{parameterized persistence} we study, and \emph{multiparameter/multidimensional persistence} \cite{Carlsson2009,botnan_lesnick2023} which is also extensively studied in the literature. 
In the multi-parameter persistence setting, an input function is given in the form $f\colon K \to \R^n$ and then homology of sublevelsets $\{\sigma \in K \mid f_i(\sigma) \leq a_i,\; i \in \{1,\ldots,n\} \}$ leads to a persistence module indexed over $\R^n$ rather than $\R$ like in standard persistence. 
The closest relation to the structure we study are cases where the multiparameter persistence module is studied via linear slices (e.g.~\cite{Lesnick2022,CaiKimMemoliWang2021,FernandesOudotPetit2025,BerkoukPetit2022,LesnickWright2025,Piekenbrock2024}) which does result in parameterized persistence.

The beginning of studying data in the parameterized persistence setting was focused on a 1-parameter family of persistence diagrams, termed vineyards \cite{CohenSteiner2006,Morozov2008,Munch2013}. 
In this context $B = [a,b] \subset \R$ is some interval, and the name came from visualizing the evolving persistence diagram $F\colon [a,b] \to \Dgm$ in a 3rd dimension, making the trajectories of the off-diagonal points coming in and out of the diagonal hyperplane look like vines. 
In \cite{CohenSteiner2006}, computation of a vineyard was done by maintaining the reduced boundary matrix over the course of swapping an adjacent in the total order of simplices, at a time of $O(n)$ per transposition. 
Further improvements were made by \cite{Piekenbrock2024} building on ideas from \cite{Busaryev2010}, where knowledge of the upcoming transpositions can be used to speed computation time. 
Vineyards have been used in the context of applications, including 
music \cite{Bergomi2020},
dynamical systems \cite{Algar2021},
neuroscience \cite{Yoo2016,Salch2021},
geospatila data \cite{Hickok2022}, 
plant biology \cite{migicovsky_rootstock_2019}, and fluid dynamics \cite{Soler2018}.
In addition, \cite{Turner2023} is an intensive study of the algebraic structure of vineyards but only away from points of multiplicity. 
Further, \cite{Munch2015,Tchitchek2025} study summary diagrams akin to means in the setting of evolving persistence.

More recently, a larger cohort of input structures have emerged which fit in this parameterized persistence framework for $B$ not simply an interval. 
The most notable of these is the persistent homology transform (PHT) \cite{Turner2014}, which in its basic form assumes as input a simplicial complex $K$ embedded in $\R^d$. 
In this case, we have a fibered filtration function called the \emph{directional transform} defined as $f\colon K \times \S^{d-1} \to \R$.  
This fibered filtration comes from a function defined on the vertex set by $f(v, \omega) = \langle x_v, \omega\rangle$ where $x_v$ are the embedded coordinates of vertex $v$ and $\langle \cdot, \cdot \rangle$ is the standard inner product, and then extending the function to all of $K$ using the lower star filtration. 
Of course, a different choice of function on the embedded simplicial complex results in other types of fibered filtration functions. 
For example, the \emph{distance transform} is a fibered filtration function $f \colon K \times \R^n \to \R$ for $K$ embedded in $\R^n$, which is extended from the function on the vertex set given by $f(x_v,y) = d(x_v,y)$. 
This distance to a point transform can be extended to studying a \emph{distance to flat transform} given similarly by the distance to some line, plane, or hyperplane \cite{Turkes2022,Onus2024,Chambers2025}. 
Another fibered filtration function arises from studying a point cloud moving around called a \emph{dynamic metric space} \cite{Kim2020a,Kim2020} and the function $f\colon K \times \R \to \R_{\geq0}$ is defined for $K$ the complete simplicial complex on the same number of vertices as points. 
These settings naturally arise in the context of flocking and swarming applications. 

An issue that permeates the use of vineyards in applications, and parameterized persistence more generally, is that of a lack of unique choice of pairing for the vines that can occur at points of multiplicity. 
Indeed, to these authors' knowledge, there is no existing, well-defined definition of the word ``vine'' in the literature, despite the term colloquially being used for the connected trajectories from the points of the vineyard. 
This is indeed well defined away from points of multiplicity; however, if there comes a time when an instant of the path $F(p)$ has a point of multiplicity in the persistence diagram, one must make a choice as to how to continue the two trajectories, and this choice is not always unique. 

In order to avoid the issues of truly tracking the proper matching of the vines, avoidance tactics have arisen in practice to handle vineyard data.  
For example, rather than actually tracking the persistence points via bases, a common tactic is to simply work with pointwise calculations for each $F(p)$ such as the maximum persistence; e.g.~\cite{Tymochko2020a,Myers2019}. 
In other cases \cite{giusti2025signatures,garcia2021tracking,Soler2019}, the full time series of persistence diagrams are used, and any trajectory matching is done via the bottleneck or Wasserstein distance matching. 
As noted before, this is reasonable away from points of multiplicity, but has no unique choice available when these points are reached. 

A more complex variant of the pointwise computation trick is to replace the time series of persistence diagrams with a simplified topological descriptor. 
If we replace a diagram with a Betti curve, the result is a CROCKER plot \cite{Topaz2015,Xian2022}, where now the entire construction can be visualized as a matrix for analysis and has been used for studying aphid swarming \cite{Ulmer2018} and dynamical systems \cite{Guezel2022,Tanweer2024}.
If we replace the homology computation with the Euler characteristic, the resulting simplification again can be visualized  more easily in a matrix. 
For example, the analogue of the PHT is the Euler Characteristic Transform \cite{Turner2014,Munch2024} which has found extensive use in applications due to its speed of computation \cite{Wang2021,Amezquita2021,CisewskiKehe2023,Yahiaoui2026,Ayub2026}.
If we restrict the dimension to studying the 0-dimensional persistence of dynamic metric spaces, the resulting structure can be studied via a formigram to focus only on the changing clustering \cite{Kim2018}. 

Another avoidance tactic is doing zigzag persistence \cite{Carlsson2010}; however, choices must be made in order to compute zigzag persistence from data arising in the form $f \colon K \times [a,b] \to \R$. 
For example, we can choose a collection of values $a=a_0<a_1<\cdots<a_k=b$ and a fixed sublevelset parameter $K_{s,a_i} = f_{a_i} \inv(-\infty,s]$. 
The upshot of this method is that a single persistence diagram is returned that can give information about tracking of classes across the parameterization time $[a,b]$. 
This was utilized, for example, in \cite{Tymochko2020} to track an evolving dynamical system and \cite{Myers2023} to track information across a temporal network. 
While available code has recently become much faster \cite{Dey2021,Dey2022}, the downside of this framework is that it is still not a full representation of the available information in the parameterized filtration function, as is our goal here.

Thus, the main focus of our work is based in the recent seminal work of Hickok \cite{Hickok2026,Hickok2022Computation}, termed persistence bundles. 
In that work, she defined and studied fibered filtration functions, and proved that for a generic perturbation of the input, the base space $B$ has a stratification where the ordering of the simplices induced by the filtration function is constant. 
She also constructed what she called \emph{persistence bundles}\footnote{We emphasize that this is distinct from other uses of bundles in the TDA literature, where one is finding coordinates for the dataset itself rather than using bundles to study the algebraic structures, e.g.~\cite{Turow2025}.}, which consist of a fiber-bundle-like structure $\pi \colon E \to B$, where the points of $E$ are of the form $(p,b,d)$ for each point in the persistence diagram of function $f_p$ at birth, death coordinates $(b,d)$. 
However, to truly be a fiber bundle \cite{Husemoller1994}, every point $p \in B$ has a neighborhood $U \ni p$ so that $\pi\inv(U)$ is homeomorphic to $U \times F$ for some fiber $F$. 
In the case of persistence bundles, points entering the diagonal mean the inverse image of a neighborhood might not have the same fiber everywhere, so persistence bundles are not truly fiber bundles (see also \cite[Remark 3.3]{Hickok2026}).
This paper arose from studying exactly what that failure meant, how to control it, and what structure we can actually promise. 
Other work has also studied parameterized persistence, in specific or general settings, but always with a focus on tracking points as being points encoded via (birth, death) coordinates \cite{Fasy2024}. 
We take a different tactic by encoding points by a choice of pairing for tracking purposes.

Of additional interest is that as these parameterized persistence settings have been studied, an increasing library of examples have been found showing that the parameterized persistence with proper tracking of points can result in very interesting and complex structure. 
This behavior has been collectively termed \emph{monodromy}\footnote{Again for disambiguation, we note that we study monodromy arising in the algebraic structures rather than monodromy in the topological space, e.g.~\cite{Burghelea2013,Burghelea2017}.}, where following a closed path in the base space can result in tracked points in the persistence diagram swapping locations. 
Known to these authors, the first mention fitting this context of monodromy in the literature was \cite{Cerri2013}, where a 1-parameter path of lines in a 2-parameter persistence module exhibited this behavior. 
In \cite{Arya2024}, a simple spiral example was shown to exhibit monodromy in the PHT, and further it was proved that for a restricted case of embedded shapes, the PHT does not exhibit monodromy. 
However, we also know that the resulting structure can be as complex as desired since \cite{Chambers2025} proved that any knot can be generated by the paths of points in parameterized persistence defined by the distance transform (specifically, the \emph{radial transform}).

In parallel, there is a growing body of literature on basis matching, decomposition, and the underlying structure encoded at the level of filtered chain complexes \cite{Chacholski2020,Chacholski2021,Chacholski2026}.
The fact that the information enclosed on the ``diagonal'', i.e., the bars of length zero in the barcode that are technically not present at the homological level, is necessary in many settings, prominently for the inverse problem \cite{Fasy2018,Fasy2025,Fasy2022}, is not new. 
It is also known that it encodes richer information than the usual barcode \cite{Chacholski2020,Chacholski2021}. 
Nevertheless, in the literature the focus seems to mostly be on matching points in the persistence diagram, rather than the generating pairs of simplices, perhaps because there are only a few and recent stability results that take into account points on the diagonal \cite{Memoli2023}. 
Therefore, the full potential of the information encoded at the chain complex level is still untapped.
While care needs to be taken when using directly pairings of simplices, since they are in general not uniquely defined (rather, there are several different possible representatives), it is possible to compute the augmented persistent diagram from any fixed pairing, as argued in \cite{Chacholski2020}.
Moreover, as hinted in \cite{Chacholski2020} and made explicit here, the resulting persistence diagram is independent of the choice of pairing, and we have precise ways to move from one pairing to another preserving all the needed structure.

\paragraph{Our contribution}
In this paper, we provide a new construction for studying parameterized persistence, called a \emph{canopy}. 
We give two versions of this construction, the A-canopy (\cref{def:A-canopy}), which can be likened to augmented diagrams retaining all information about points on the diagonal of the persistence diagram; and the D-canopy (\cref{def:D-canopy}), which represents that information which would still be retained in the ``standard" persistence diagram after the contents of the diagonal are forgotten. 
We do this by making a simple but major modification in the persistence bundle representation information: namely rather than tracking a point in $E$ by points in the persistence diagram $(p,\text{birth},\text{death}) \in B \times \R \times \R$, we instead track them by some choice of pairs of simplices which created those points under some choice of total ordering of the simplicial complex compatible with the filtration function, $(p,\sigma_b, \sigma_d)$. 
This viewpoint is a combinatorial version of tracking the chain complex information rather than just the output of persistence. 
We solve the problem of lack of unique matching by defining a topology on $E$ that arises from tracking points via the algebraic structure presented in \cite{Chacholski2020}, specifically from the pairings of simplices that generate the indecomposables of the filtered chain complexes at each base point. 
While a pairing is not, in general, uniquely defined (i.e., for the same filtered function there could be different pairings generating the indecomposables), these indecomposables are isomorphic, and thus there is a homeomorphism between the topologies of any two different pairing choices.
This construction of an A-canopy has a similar flavor to étale spaces of sheaves, or perhaps to branched coverings; however our result is that pairing issues appear as non-Hausdorff points in the space which is not a standard assumption for those constructions.
We note that it is also more broadly applicable than the cellular sheaf construction of \cite[Sec.~6]{Hickok2026} since this does not require a stratification on the base space to be defined.

In \cref{thm:A-bundle-decomp}, we show that every fibered filtration function has an A-canopy, and this construction is unique up to isomorphism (\cref{thm:A-bundle-unique}). 
We show that away from the non-Hausdorff points, the A-canopy for an input fibered filtration function is a true fiber bundle (\cref{thm:A-bundle-decomp-generic}). 
We also give a similar structure theorem for the D-canopy that preserves aspects of the fiber bundle structure where possible; namely, away from the non-Hausdorff points (\cref{thm:D-bundle-decomp}). 
This provides an answer to a question of Turner \cite[Sec.~8]{Turner2023}, although perhaps not as satisfying as one might wish. 
Indeed, it says that rather than finding a way to ``properly'' match points in the diagram when there is ambiguity, instead the topologically correct choice according to the construction is to make no choice at all, and consider the two paths as one. 
However, the construction allows for more control: even in the presence of these issues, we can define and study A-canopies, and some points of multiplicity do not result in non-Hausdorff issues.
For these points, there is a canonical choice of where to go when two paths cross, thus allowing us to not throw the baby out with the bathwater.
We call the non-Hausdorff points with no canonical choice \textit{structural}, as they are exactly where all desired structure in A- and D-canopies break; as opposed to \textit{coincidental} points of multiplicity which do not require additional care.
We conclude the paper with immediate implications arising from the A- and D-canopy constructions in \cref{sec:implications}.

\section{Setting}
\label{sec:Background}

In this section, we set up the stage for the constructions in the paper, focusing on bundles and the required algebraic aspects of persistent homology.

\subsection{Hausdorff points and spaces}

While the definition of a Hausdorff space is well-established, here we need a specification of it for single points:

\begin{definition}
\label{def:non-Haus-point}
Fix a topological space $X$.
A point $x \in X$ is a \define{Hausdorff point} if for all $y \in X$, there are open sets $U \ni x$, $V \ni y$ such that $U \cap V = \emptyset$.
Otherwise $x \in X$ is called a \define{non-Hausdorff point}. 
\end{definition}

While the definition of a non-Hausdorff \textit{point} is not standard, this definition is set up so that if a space has no non-Hausdorff points, then it is a Hausdorff \textit{space}. 

\subsection{Bundles and Fiber Bundles} 
\label{ssec:bundles}

Here we give the basic definitions of bundles and fiber bundles since we need only the basics for our construction; we direct the interested reader to \cite{Husemoller1994,Michor2008,Hatcher2002} for additional details.

\begin{definition}
    A \define{bundle} is a triple $(E,\pi,B)$ where $\pi\colon E \to B$ is a map. 
    The space $E$ is called the \define{total space}; $B$ is called the \define{base space}; and the map $\pi$ is called the projection of the bundle. 
    For each $p \in B$, the space $\pi \inv(p)$ is the \define{fiber} of the bundle over $p$.
\end{definition}

\begin{definition}
    A \define{fiber bundle} is a 4-tuple $(E,B,\pi,F)$ where $E$, $B$, and $F$ are topological spaces, and $\pi\colon E \to B$ is a continuous surjection. 
    We require that for every $p \in B$, there is an open neighborhood $p \in U \subset B$ (called a \define{trivializing neighborhood}) such that there is a homeomorphism $\pi$ such that the diagram 
    \begin{equation*}
    \begin{tikzcd}
    \pi \inv(U) \ar[r,"\phi"] \ar[d,"\pi"] 
    & U \times F \ar[dl,"{\text{proj}_1}"] \\ 
    U
    \end{tikzcd}
    \end{equation*}
    commutes.
\end{definition}

We note that our fiber will often be a set of $N$ elements with the discrete topology; we denote this throughout by $[N]\coloneqq \{ 1,\ldots,N\}$.
In the case of a fiber bundle with a discrete fiber like what we study, the result is a \emph{covering space} \cite[Ex.~4.42]{Hatcher2002}, however we use the term \emph{fiber bundle} in this paper to emphasize the connection with persistence bundles. 

As this will be useful later, we recall some basic results on covering spaces of spheres. 
First, by the classification theorem for covering spaces (see e.g.~\cite[Thm.~1.38]{Hatcher2002}), there is a bijection between the set of basepoint preserving isomorphism classes of path-connected covering spaces and the set of subgroups of $\pi_1(X,x_0)$. 
This means that for an $n$-dimensional sphere $S^n$ with $n\geq 2$ where $\pi_1(S^n) = 0$, the only path-connected cover is the trivial cover $\Id \colon S^n \to S^n$. 
For the covering space of a circle, $S^1$, the only options are the universal cover, $\pi \colon \R \to \S^1$ defined by $\pi(t) = (\cos t, \sin t)$; or the $n$-sheeted cover $\pi \colon \S^1 \to \S^1$ given by $z \mapsto z^n$ viewing $z \in \C$ with $|z|=1$.

\subsection{Pairings and persistence diagrams}

We will build two types of persistence diagrams from a given input: one which tracks all pairings and one which only tracks the pairings with non-zero persistence (the off-diagonal points).
Let $\overline{\R} = \R \cup \{ \infty\}$ be the extended real line. 

\begin{definition}
\label{def:AllTheHs}
    The \define{upper half space} and its interior are given by 
    \[H = \{(x,y) \in \R \times \overline\R\mid x\leq y\} \qquad \text{and} \qquad 
    \mathring{H} = \{(x,y)\in \R \times \overline\R \mid x < y \},
    \]
    respectively. 
    The subset  $\{(x,x) \in \R^2\}\subset H$ is called the \emph{diagonal} and is denoted by $\Delta$. 
    We treat $\Delta$ as a single element, with topology induced by the quotient $H \rightarrow H/\sim$ for $(x,x) \sim (y,y)$ $\forall x,y \in \R$.
    We define $H'$ to be 
    \[H' =  \mathring{H} \cup \{ \Delta\}\, .\]  
\end{definition}

The default assumption in the literature is to have persistence diagrams without points in the diagonal. 
However, persistence diagrams tracking diagonal information have become increasingly common; these are sometimes called 
augmented diagrams (such as in \cite{Fasy2022,Belton2020,Micka2020}),
verbose diagrams (as in \cite{Fasy2025,Usher2016})
or ephemeral diagrams (such as in \cite{Memoli2023,Berkouk2021,Chazal2016}).
Here we use the term augmented, and with connection to music theory, call the standard persistence diagrams diminished.

\begin{definition}
    An \define{augmented persistence diagram (A-diagram)} is a finite multiset $X \subset H$. 
    A \define{diminished persistence diagram (D-diagram)} is a finite multiset $X \subset \mathring{H}$. 
    The space of augmented persistence diagrams is denoted $\DgmAsp$, and the space of diminished persistence diagrams is denoted $\DgmDsp$.
\end{definition}

Note that because we might have more than one class born and dying at the same pair of function values, we need these sets to be treated with multiplicity. 
Also, any A-diagram can be converted into the relevant D-diagram by simply forgetting the points in $\Delta$; however without additional input information, we can not go the other way.
We next show how to get an augmented and diminished persistence diagram from an input function. 
To do so, we recall that a \define{filtered chain complex} is a functor from $\mathbb{R}$ (see as a category) to the category of non-negative chain complexes over vector spaces, such that all chain maps from $s$ to $t$, for all $s\leq t\in\R$, are monomorphisms.

\begin{definition}
Assume we are given a finite simplicial complex $K$. 
A \define{filtration function} is a function $f\colon K \to \R$ such that $\sigma \subseteq \tau$ implies $f(\sigma) \leq f(\tau)$. 
The \define{sublevelset} of $K$ at value $a$ is the subcomplex $K_a = f\inv(-\infty,a]$. 
The \define{(sublevelset) filtered chain complexes associated with $f$} is the filtered chain complex given by the sublevelset filtration of $K$.
\end{definition}

Note that this definition is setup so that $K_a$ is always a subcomplex of simplicial complex $K$, and also that $K_a \subseteq K_b$ for any $a \leq b$. 
A common method to build a filtration function from a function defined only on the vertices is that of the \define{lower star filtration}, where given $f\colon V \to \R$, we abuse notation to define $f \colon K \to \R$ by $f(\sigma) = \max\{f(v) \mid v \in \sigma\}$.
It is bookkeeping to check that the lower star filtration is a filtration function for any choice of function on the vertex set.

We will construct the sublevelset persistence diagram from this input setup as follows. 

\begin{definition}
    A \define{consistent} ordering of $K$ with respect to $f$ is a order $\prec_f$ on the simplices of $K$ such that if $\sigma \subseteq \tau$ or if $f(\sigma) < f(\tau)$, then $\sigma \prec_f \tau$. 
    We call such an order a \define{total consistent order} if it is a total order.
    Moreover, there is a unique \define{minimal consistent order} where every relation $\sigma \prec_f \tau$ implies either $\sigma \subseteq \tau$ or $f(\sigma) \leq f(\tau)$ (or both).
\end{definition}

Note that a consistent \emph{total} ordering for a function is not unique in general; however, uniqueness is achieved if the function values of the simplices are all unique. 
In particular, if $f$ takes different values on every simplex of $K$, then the minimal (and only) consistent order is a total order. 
On the other hand, given a partial minimal consistent order, there could possibly be many different total consistent orders.

\begin{definition}
\label{def:refinement}
    A total order $\prec_T$ is a \define{refinement} of a partial order $\prec_p$ if $a \prec_p b$ implies $a \prec_T b$.
\end{definition}

Given a consistent order (total or partial), we can track the pairings that give rise to the interval modules in the decomposition of the sublevelset persistence module. 
In order to handle infinite classes, we introduce a symbol $\star$ to allow a simplex giving birth to an infinite class being paired to something. 
For this reason, we denote $\overline{K} = K \cup \{\star\}$, with the assumption that $f(\star) = \infty$. 
With this structure, we have the following theorem. 

\begin{theorem}[{\cite[Prop. 3.2 and 3.3]{Chacholski2020}}]\label{thm:simplices_dec}
Fix a consistent ordering $\prec_f$ of $f$. 
Then there exists a subset $P_\prec \subset K \times \overline{K}$ such that the filtered chain complex given by the sublevelset filtration of $\prec_f$ decomposes into a finite direct sum of interval sphere $\intS{f(\sigma), f(\tau)}$ for $(\sigma,\tau)\in P_\prec$, where $f(\star)=\infty$. 
\end{theorem}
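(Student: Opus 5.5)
The plan is to reduce to a total consistent order and then run the standard boundary-matrix reduction, reading the decomposition off the reduced matrix.

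\textbf{Refinement.} First I would note that if $\prec_f$ is only partial, we may refine it to a total consistent order $\prec_T$ in the sense of \cref{def:refinement}. A linear extension exists because the relation generated by faces and strict function inequalities is acyclic. Refining does not change the sublevelset filtration. Each $K_a$ is a union of simplices with $f \le a$, so it is closed under faces, and it is an initial segment of $\prec_T$. Hence the filtered chain complex associated with $f$ is recovered from the simplexwise filtration $\emptyset=L_0\subset L_1\subset\cdots\subset L_n=K$ determined by $\prec_T$, by reindexing $L_i$ at value $f(\sigma_i)$. It therefore suffices to decompose the simplexwise filtered chain complex and then push forward along the monotone reindexing $i\mapsto f(\sigma_i)$.

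\textbf{Reduction.} Order the simplices $\sigma_1\prec_T\cdots\prec_T\sigma_n$ and let $D$ be the boundary matrix in this basis. Run column reduction by adding earlier columns to later ones (left-to-right), obtaining $R=DV$ with $V$ upper triangular and unipotent, and with all nonzero columns of $R$ having distinct lowest ones (pivots). Define
\[
P_\prec = \{(\sigma_i,\sigma_j): \mathrm{low}(R_j)=i\}\cup\{(\sigma_i,\star): R_i=0 \text{ and } i \text{ is not a pivot}\}.
\]

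\textbf{New basis.} Build the basis as follows.
\begin{itemize}
\item For each pair $(\sigma_i,\sigma_j)$, take the chains $z_j=R_j=\partial(V_j)$ and $c_j=V_j$.
\item For each unpaired zero column $i$, take $V_i$.
\end{itemize}
Because $V$ is unipotent upper triangular, $\{V_j\}$ is a basis of $C_*(K)$ compatible with the filtration: $V_j$ has leading simplex $\sigma_j$. Replacing each $V_i$ for a pivot row $i$ by $R_j$ keeps the filtration-compatibility, since $R_j$ has lowest simplex $\sigma_i$ and lies in $L_i$. This rests on the fact that each index is exactly one of: a pivot row, a nonzero column, or unpaired. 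That trichotomy follows from $\partial^2=0$, which forces a column whose index is a pivot row to reduce to zero.

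\textbf{Decomposition.} In this basis, $\partial c_j=z_j$ and every other basis element is a cycle. So the filtered complex is a direct sum of two kinds of pieces:
\begin{itemize}
\item two-dimensional subcomplexes $\langle z_j\to c_j\rangle$, whose filtration steps are $\sigma_i$ and $\sigma_j$;
\item one-dimensional pieces $\langle V_i\rangle$, entering at $\sigma_i$.
\end{itemize}
After reindexing by $f$, these are exactly the interval spheres $\intS{f(\sigma), f(\tau)}$, with $f(\star)=\infty$. This includes zero-length ones when $f(\sigma_i)=f(\sigma_j)$.

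\textbf{Main obstacle.} The step I expect to be hardest is the filtration-compatibility of the new basis. I need that each $L_k$ is spanned by the new basis elements whose entry index is at most $k$, so that the splitting is a splitting of filtered complexes and not merely of chain complexes. I would argue this by triangularity: the change-of-basis matrix, after permuting to sort by entry index, is upper unitriangular with respect to the order, which gives the claim.
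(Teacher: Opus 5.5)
Your argument is correct, but it takes a different route from the one the paper relies on. The paper gives no proof of its own. It imports the result from \cite{Chacholski2020}, where the decomposition comes from the split conditions SC1--SC3, with SC$\star$ as the terminal case. One finds an entry $\partial[\sigma',\tau']\neq 0$ whose row has maximal $f$-value in column $\tau'$ and whose column has minimal $f$-value in row $\sigma'$. One then clears that row and column by filtration-preserving operations to split off $\intS{f(\sigma'),f(\tau')}$, and recurses on the rest, without sorting the matrix or choosing a total order.

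You instead refine to a total order, run the standard $R=DV$ column reduction, and use the unitriangular filtered basis $\{V_j\}\cup\{R_j\}\cup\{V_i\}$ to split the simplexwise complex before reindexing by $f$. Your route is elementary and fully explicit, and for a fixed total order it yields a single, computable pairing. The cost is that it only ever produces the pairing of the particular refinement $\prec_T$, so the freedom available at ties has to be recovered by varying the refinement. The split-condition proof, by contrast, is phrased purely in terms of values of $f$. That is exactly what the paper uses later (\cref{dgm_welldef_wrt_consistentordering}, the filtration and combinatorial equivalences, \cref{prop:compatible_pairing}) to move between pairings for different orders and compatible functions.

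One small correction: $K_a$ is an initial segment of $\prec_T$ not because it is closed under faces. The reason is that in a total consistent order, $\sigma\prec_T\tau$ forces $f(\sigma)\le f(\tau)$, since otherwise consistency would give $\tau\prec_T\sigma$.
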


In the above theorem, an \define{interval sphere} is an object $\intS{a,b}$ of the form: 
\[
\begin{tikzcd}[column sep={1.1cm}, ampersand replacement = \&]
\arrow[d, phantom, "\scriptstyle 0" description]  
\& \arrow[d, phantom, "\scriptstyle\cdots" description]
\& \arrow[d, phantom, "\scriptstyle b" description]
\& \arrow[d, phantom, "\scriptstyle\cdots" description]
\& \arrow[d, phantom, "\scriptstyle d" description]  
\& \arrow[d, phantom, "\scriptstyle\cdots" description]
\&
\\
0 \arrow[r]\arrow[d]
\& \cdots \arrow[r]
\& 0 \arrow[r]\arrow[d]
\& \cdots\arrow[r]
\& k\arrow[r, "\id"]\arrow[d, "\id"]
\& \cdots
\& \arrow[l, phantom, "\scriptstyle n+1" description]
\\
0 \arrow[r]
\& \cdots\arrow[r]
\& k \arrow[r, "\id"]
\& \cdots \arrow[r, "\id"]
\& k \arrow[r, "\id"]  
\& \cdots
\& \arrow[l, phantom, "\scriptstyle n\quad" description]
\end{tikzcd}
\]
where $k$ is a field, the degree $n$ is either $0$ or a natural number and the extrema of the interval are such that $0\leq b \leq d\leq \infty \in\overline{\R}$. 
All elements in the other degrees are $0$. 
The horizontal direction represents $\overline{\R}$, and the vertical (downward) direction represents the degrees of the chain complexes. 
We remark that what appears to be an abuse of notation for the intervals (in the symbol of interval sphere) is intentional: if $f(\sigma)=f(\tau)$, then the interval is empty, meaning that the corresponding homology class has 0 lifespan, i.e., it is not seen at the homological level, even if it is present at the filtration level.
\medskip

The fact that the interval spheres are the all and only indecomposables of filtered chain complexes has been shown in this formulation in \cite[Theorem 4.2]{Chacholski2021}, but a structure theorem for filtered chain complexes appeared earlier in the literature in different settings \cite{Usher2016,barannikov1994,deSilva2011Dualities,meehan2017}.
\medskip

\cref{thm:simplices_dec} is proved in \cite{Chacholski2020} using the \emph{split conditions}, which we report here for convenience. 
Given a filtration function $f$ and its associated sublevelset filtered chain complex, we fix a total (not necessarily compatible) ordering on its simplices, and denote by $\partial$ its boundary matrix and by $\partial[\sigma,\tau]$ the element of $\partial$ in row $\sigma$ and column $\tau$.
A pair $(\sigma',\tau')\in K\times K$
is said to satisfy the \textbf{split conditions} \hyperlink{SC1}{\bf SC1-2-3} if 
\begin{align*}
\textbf{SC\hypertarget{SC1}{1}: }& \partial[\sigma',\tau']\neq 0; \\
\textbf{SC\hypertarget{SC2}{2}: }& \sigma' \in \mathrm{argmax}  \{f(\sigma) \ \mid \ \partial[\sigma,\tau']\neq 0, \ \sigma\in K\};
\\
\textbf{SC\hypertarget{SC3}{3}: }& \tau' \in \mathrm{argmin} \{f(\tau)\ \mid \ \partial[\tau,\sigma']\neq 0, \ \tau\in K\} \, .
\end{align*}
Moreover, if $\partial[\sigma,\tau]=0$ for all $\sigma,\tau\in K$, any pair $(\sigma,\ast)\in K\times \{\star\}$ is said to satisfy the split condition \textbf{SC\hypertarget{SC$^\star$}{$\star$}}. 
\medskip

If we assume that the simplices in the matrix $\partial$ are ordered by their evaluations under $f$, split conditions correspond to identifying a \emph{unique pivot}, i.e., the row of the lowest non-zero element in a column which is also the leftmost non-zero element of said row \cite{ZomorodianFOCS2000}. 
The pair \emph{(pivot, column-of-the-pivot)} forms a persistence pair. 
The standard barcode algorithm identifies all such pairs by reducing the matrix with specific row and column operations \cite{ZomorodianFOCS2000}, and could be interpreted as finding all pairs satisfying the split conditions iteratively. 
However, the split conditions do not require the matrix to be ordered in any specific way. 
This would make them, in all likelihood, less efficient for practical implementations, but are extremely convenient for theoretical purposes. 
Indeed, they will allow us, among other things, to trace the pairing across different filtration functions and across different orderings for the same filtration function.

\begin{definition}
Given a filtration function $f\colon K \to \R$, a \define{chosen pairing} is a pairing $P_\prec$ that arises from \cref{thm:simplices_dec}.
\end{definition}

In the above pairing $P_\prec$, each simplex of $K$ appears exactly once, but $\star$ may appear multiple times (specifically, it appears once for each class that does not die). 
Further, as recalled here in \cref{lem:total_order_unique_pairing} for self-containment, if $\prec_f$ is a total order, then a chosen pairing $P_{\prec_f}$ is uniquely determined.

\begin{lemma}
\label{lem:total_order_unique_pairing}
Let $f$ be a filtration function and $\prec$ a consistent total order of a simplicial complex $K$ with respect to $f$. 
Then there is only one chosen pairing $P_\prec$.
\end{lemma}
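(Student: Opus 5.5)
The plan is to identify any chosen pairing $P_\prec$ with the pairing produced by the standard reduction algorithm. Since $\prec$ is total, that pairing is well known to be an invariant of the boundary matrix with rows and columns ordered by $\prec$.

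The tool is the rank function
\[
r(i,j)=\operatorname{rank}\partial[\,\succeq \sigma_i,\ \preceq \tau_j\,],
\]
the rank of the lower-left submatrix of $\partial$ with rows indexed by simplices $\succeq\sigma_i$ and columns by simplices $\preceq\tau_j$. It is invariant under left-to-right column additions and bottom-to-top row additions (the operations preserving the filtration of $\prec$). It also determines the pairs through the pairing-uniqueness lemma of Cohen-Steiner--Edelsbrunner--Morozov: $(\sigma_i,\tau_j)$ is a pair if and only if
\[
r(i,j)-r(i+1,j)-r(i,j-1)+r(i+1,j-1)=1 .
\]
The unpaired simplices, those matched with $\star$, are exactly those appearing neither as a row nor as a column of such a pair.

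The key steps, in order, are as follows.

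\textbf{Step 1.} Since $\prec$ is a total consistent order, the filtered chain complex of $\prec$ is a filtration indexed by the totally ordered simplices, adding one simplex at a time. By \cref{thm:simplices_dec}, a chosen pairing $P_\prec$ comes with a decomposition into interval spheres, and hence with a filtered basis of chains. Each pair $(\sigma,\tau)$ yields basis elements $c_\sigma$, $c_\tau$ with $\partial c_\tau=c_\sigma$. Here $c_\sigma$ has leading simplex $\sigma$ and $c_\tau$ has leading simplex $\tau$, meaning these are the $\prec$-maximal simplices in their supports, since the complex adds one simplex per step.

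\textbf{Step 2.} Change basis from the simplices to $\{c_\rho\}$. Because each $c_\rho$ is $\rho$ plus $\prec$-earlier terms, the change-of-basis matrix is upper unitriangular. The boundary matrix therefore becomes $U^{-1}\partial U$, a matching matrix whose only nonzero entries are $[\sigma,\tau]$ for $(\sigma,\tau)\in P_\prec$.

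\textbf{Step 3.} The rank function $r$ is preserved by this conjugation by unitriangular matrices. For a matching matrix, the alternating sum above equals $1$ exactly at the matched entries. Hence $P_\prec$ is determined by $\partial$ and $\prec$ alone, and any two chosen pairings coincide.

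The main obstacle is Step 1: extracting a compatible unitriangular basis from the abstract decomposition of \cref{thm:simplices_dec}, i.e., checking that each interval sphere $\intS{f(\sigma),f(\tau)}$ is generated by chains whose leading simplices are precisely $\sigma$ and $\tau$. In particular, the decomposition must respect the refined total order $\prec$ and not merely the values of $f$. When $f(\sigma)=f(\tau)$, a basis that only respects $f$-values could mix simplices with equal value, so the argument has to work with the filtration indexed by $\prec$.

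If this is awkward, I would instead argue directly with the split conditions, using their inductive use in \cite{Chacholski2020}. The $\prec$-last simplex $\rho$ either has nonzero boundary, in which case \hyperlink{SC2}{SC2} with ties broken by $\prec$ forces its partner to be the unique $\prec$-maximal face term after reduction. Otherwise $\rho$ must be a birth simplex, and its partner is determined by \hyperlink{SC3}{SC3}. One then splits off the pair and inducts on the size of the complex.
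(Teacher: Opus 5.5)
Your proof is correct, but it takes a different route from the paper's.

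The paper stays inside the split-condition machinery. With a total order, the argmax in SC2 and the argmin in SC3 are singletons, so each split is forced and $P_\prec$ is determined.

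You instead use an invariant. A chosen pairing yields a $\prec$-triangular basis in which $\partial$ becomes a matching matrix. The lower-left rank function is unchanged by triangular conjugation, so it reads the matching off from $\partial$ and $\prec$ alone.

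The obstacle you flag in Step~1 disappears once you commit, as you do, to the $\prec$-indexed filtration. The paper's proof tacitly uses the same reading, since its singleton claim needs ties broken by $\prec$. With SC2--3 evaluated on $f$-values alone, the function $f$ of \cref{ex:combinatorialEquivalentSingleFunction} (where $f(a)=f(b)$ and $f(c)=f(d)$) would admit both $\{(a,c),(b,d)\}$ and $\{(a,d),(b,c)\}$. In the $\prec$-indexed setting each step adds one simplex. A dimension count then gives exactly one generator per simplex, and a generator born at step $k$ has leading simplex $\sigma_k$.

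What your route buys is independence from how the decomposition was produced. The paper shows that each individual split is determined, but it leaves implicit why the order in which simultaneously available pairs are split cannot change later pairs; your invariant settles this automatically. The paper's argument is shorter. Within your framework, an even quicker proof applies uniqueness of interval-sphere decompositions \cite[Theorem 4.2]{Chacholski2021} to the $\prec$-indexed complex, where interval endpoints determine simplices; your rank computation is a hands-on version of that.

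Your fallback sketch is essentially the paper's argument. One correction there: if the $\prec$-last simplex has zero boundary, it has no cofaces and is paired with $\star$, rather than with a partner found via SC3.
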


\begin{proof}
If the order $\prec$ is total, then there are only two simplices satisfying the argmax and argmin conditions in \hyperlink{SC3}{\bf SC2-3}. 
Therefore, for every iteration in which we split a pair, the pair $(\sigma',\tau')$ is uniquely determined, and hence so is the pairing $P_\prec$.
\end{proof}

We can now define the augmented and diminished persistence diagrams of a filtration function using a chosen pairing.

\begin{definition}
\label{def:dgms_from_pairing}
    Fix a consistent ordering $\prec_f$ for a given filtration function $f\colon K \to \R$ and a chosen pairing $P_{\prec_f}$. 
    The $p$-dimensional augmented persistence diagram $\DgmA_p(f)$ for a filtration function is the multiset
    \begin{equation*}
        \DgmA_p(f) = \left\{(f(\sigma),f(\tau)) \mid (\sigma, \tau) \in P_{\prec_f}, \, \dim(\sigma)=p\right\}.
    \end{equation*}
    The $p$-dimensional diminished persistence diagram $\DgmD_p(f)$ is the multiset
    \begin{equation*}
        \DgmD_p(f) = \left\{(f(\sigma),f(\tau)) \mid (\sigma, \tau) \in P_{\prec_f}, \, f(\sigma) \neq f(\tau), \, \dim(\sigma)=p \right\}.
    \end{equation*}
\end{definition}

We note that the augmented version is not the ``standard'' notion of persistence since it is tracking points on the diagonal. 
However, we can forget this information to recover the standard version; here we call these diagrams ``diminished'' to emphasize the opposite of ``augmented.''
\medskip

In order to ensure that \cref{def:dgms_from_pairing} is well-posed, we need to prove that the decomposition into interval spheres is independent of the choice of pairing and of the choice of compatible order. 
We begin by showing the former:

\begin{lemma}
\label{dgm_welldef_wrt_chosenpairing}
Let $P_1$ and $P_2$ be two chosen pairings of a filtration function $f$ with consistent ordering $\prec$. 
Then the augmented and diminished persistence diagrams with respect to $P_1$ and $P_2$ coincide.
\end{lemma}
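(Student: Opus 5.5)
The plan is to reduce the statement to the uniqueness of the decomposition of a filtered chain complex into indecomposables, i.e.\ a Krull--Schmidt-type argument. By \cref{thm:simplices_dec}, each chosen pairing $P_i$ gives an isomorphism of filtered chain complexes
\[
C_\bullet(f) \;\cong\; \bigoplus_{(\sigma,\tau)\in P_i} \intS[\dim\sigma]{f(\sigma),f(\tau)}, \qquad i=1,2,
\]
where $C_\bullet(f)$ is the sublevelset filtered chain complex. This complex depends only on $f$, not on the pairing, so the two direct sums are isomorphic to each other. By \cref{def:dgms_from_pairing}, the augmented diagram $\DgmA_p(f)$ computed from $P_i$ is exactly the multiset of labels $(a,b)$ of the summands $\intS[p]{a,b}$ in this decomposition. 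The claim therefore reduces to this: the multiset of isomorphism types of interval spheres in such a finite decomposition is an invariant of the filtered chain complex.

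To prove that invariance, I would invoke the fact recalled after \cref{thm:simplices_dec} that the interval spheres are all and only the indecomposables \cite[Theorem 4.2]{Chacholski2021}. I would then check that each $\intS[n]{a,b}$ has a local endomorphism ring. Its endomorphisms are determined by a single scalar $\lambda\in k$ acting compatibly in both degrees, because the identity maps along $\overline{\R}$ force the map to be constant. So the endomorphism ring is $k$, which is local. Krull--Schmidt--Azumaya then gives that the multiset of summands up to isomorphism is unique.

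It remains to see that distinct labels give non-isomorphic spheres. The label $(n,a,b)$ can be read off from the object: $n$ is the lowest nonzero degree, $a=\min\{t : \text{degree } n \text{ nonzero at } t\}$, and $b$ is the analogous minimum in degree $n+1$, with $b=\infty$ if that degree is zero. This includes the case $a=b$, which is exactly the diagonal information that homology alone would lose. Hence the multisets $\DgmA_p(f)$ from $P_1$ and $P_2$ coincide for every $p$.

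The diminished diagram $\DgmD_p(f)$ is obtained from $\DgmA_p(f)$ by deleting the points with equal coordinates, so its equality follows immediately.

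The main obstacle is the Krull--Schmidt step. The category is a functor category indexed over $\R$, which need not be Krull--Schmidt in general, so I must make sure the argument uses only that the decomposition here is a finite sum of objects with local endomorphism rings. Azumaya's theorem needs exactly that hypothesis and works in any additive category with split idempotents. If one prefers to avoid categorical generalities, an alternative is to count multiplicities by numerical invariants: compute dimensions of cycles, boundaries and chains at each $t$, or ranks of the maps $C_n(s)\to C_n(t)$ restricted to cycles. These determine the multiplicity of each $\intS[n]{a,b}$ by an inclusion--exclusion over $a$ and $b$.
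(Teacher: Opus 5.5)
Your proposal is correct and follows the paper's route: the two pairings from \cref{thm:simplices_dec} give two decompositions of the same sublevelset filtered chain complex, and the diagrams agree because the decomposition into interval spheres is unique. The paper simply cites that uniqueness from \cite[Theorem 4.2]{Chacholski2021}, whereas you derive it yourself via local endomorphism rings and Krull--Schmidt--Azumaya, together with the observation that $(n,a,b)$ can be read off from $\intS[n]{a,b}$; this is a valid, self-contained justification of the same step.
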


\begin{proof}
By hypothesis and definitions, both $P_1$ and $P_2$ arise from \cref{thm:simplices_dec}. 
This means that, letting $F$ denote the filtered chain complex obtained by the sublevelset filtration along $f$, we have 
\[
F\cong \displaystyle\bigoplus_{(\sigma,\tau)\in P_1}\intS{f(\sigma), f(\tau)} \text{ and } F\cong \displaystyle\bigoplus_{(\sigma',\tau')\in P_2}\intS{f(\sigma'), f(\tau')} \, .
\]
The claim then follows by the uniqueness of the decomposition in \cite[Theorem 4.2]{Chacholski2021}. 
\end{proof}

We note that, for the case of diminished persistence diagrams, the above lemma is also a consequence of \cite{ZomorodianFOCS2000,CrawleyBoevey2015,Oudot2015}.

\begin{lemma}
\label{dgm_welldef_wrt_consistentordering}
Let $\prec_1$ and $\prec_2$ be two consistent orders of a simplicial complex $K$ with respect to a filtration function $f$, and $P_{\prec_1}$ and $P_{\prec_2}$ two chosen pairings of them. 
Then the augmented and diminished persistence diagrams with respect to $P_{\prec_1}$ and $P_{\prec_2}$ coincide.
\end{lemma}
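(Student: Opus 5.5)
The plan is to reduce to \cref{dgm_welldef_wrt_chosenpairing} by observing that the filtered chain complex does not depend on the consistent order at all. The sublevelset filtered chain complex $F$ associated with $f$ is the functor $a \mapsto C_\ast(K_a)$ with $K_a = f\inv(-\infty,a]$. This is determined by $f$ alone: a consistent order $\prec_f$ only serves as bookkeeping, namely a choice of ordered basis of simplices used to write the boundary matrix and run the split conditions. Consequently, \cref{thm:simplices_dec} applied with $\prec_1$ yields an isomorphism
\[
F \cong \bigoplus_{(\sigma,\tau)\in P_{\prec_1}} \intS{f(\sigma), f(\tau)},
\]
and applied with $\prec_2$ yields
\[
F \cong \bigoplus_{(\sigma',\tau')\in P_{\prec_2}} \intS{f(\sigma'), f(\tau')}.
\]
Both decompositions are of the \emph{same} object $F$.

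The key steps, in order, are as follows.
\begin{enumerate}
\item Verify that "the filtered chain complex given by the sublevelset filtration of $\prec_f$" in \cref{thm:simplices_dec} is indeed the sublevelset filtered chain complex of $f$. Each interval sphere is indexed by $f$-values rather than by positions in the order, and a simplex $\sigma$ enters at parameter $f(\sigma)$ regardless of where it sits in $\prec_f$. Any refinement within a level set $f\inv(a)$ collapses when we index by $\R$.
\item Invoke the uniqueness of the decomposition of filtered chain complexes into interval spheres \cite[Theorem 4.2]{Chacholski2021}. This is a Krull--Schmidt type statement, since the interval spheres are indecomposable with local endomorphism rings. It gives a bijection between $P_{\prec_1}$ and $P_{\prec_2}$ preserving the multiset of triples (degree $n$, $f(\sigma)$, $f(\tau)$), where the degree is $\dim\sigma$.
\item Read off \cref{def:dgms_from_pairing}. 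The multiset $\{(f(\sigma),f(\tau)) : \dim\sigma = p\}$ agrees for the two pairings, so the augmented diagrams coincide. The diminished diagrams are obtained from the augmented ones by deleting the points with $f(\sigma)=f(\tau)$, so they coincide as well.
\end{enumerate}

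The main obstacle is step 1. The split conditions \hyperlink{SC1}{\bf SC1-2-3} are phrased in terms of $f$-values, so the argmax/argmin sets are determined by $f$ and not by the order. However, when the order is total, ties are broken by the order, and one must make sure that the resulting decomposition is still a decomposition of $F$ indexed over $\R$ rather than over the finer poset given by the total order. I would handle this by noting that any filtered chain complex indexed by the total order pushes forward along the monotone map $\sigma \mapsto f(\sigma)$ to the $\R$-indexed complex $F$. Under this pushforward, each interval sphere for the total order becomes $\intS{f(\sigma),f(\tau)}$, which may have empty homological lifespan, exactly as the paper's convention on $\intS{a,b}$ with $a=b$ allows. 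Since direct sums commute with this pushforward, both pairings give interval-sphere decompositions of $F$, and step 2 then applies.
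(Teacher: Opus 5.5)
Your proposal is correct and is essentially the paper's argument: the paper notes that the split conditions depend only on the values of $f$ and not on the consistent order. Hence both $P_{\prec_1}$ and $P_{\prec_2}$ give interval-sphere decompositions of the same sublevelset filtered chain complex, and \cref{dgm_welldef_wrt_chosenpairing} (uniqueness of the decomposition, \cite[Theorem 4.2]{Chacholski2021}) finishes the proof. Your step 1 checks that a tie-breaking total order yields the same $\R$-indexed complex, with each summand becoming $\intS{f(\sigma),f(\tau)}$; this is a more careful spelling-out of the paper's one-line observation.
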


\begin{proof}
It is sufficient to notice that the split conditions \hyperlink{SC1}{\bf SC1-2-3} depend only on the values of $f$, and are independent of the consistent ordering chosen. 
The claim then follows by \cref{dgm_welldef_wrt_chosenpairing}.
\end{proof}

\begin{proposition}
\label{prop:Num_Pairs_Constant}
Let $K$ be a simplicial complex. 
Then there exists $m\in\mathbb{N}\cup\{0\}$ such that $|P_{\prec_f}|=m$ for every filtration function $f$ over $K$. 
Moreover, there exists $m'\in\mathbb{N}\cup\{0\}$ such that $|\{\sigma \in K \mid (\sigma,\star)\in P_{\prec_f}\}|=m'$ for every filtration function $f$ over $K$.
\end{proposition}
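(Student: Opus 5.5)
The plan is to reduce both counts to invariants of $K$ that do not involve $f$: the number of simplices and the Betti numbers of $K$.

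For the first claim, \cref{thm:simplices_dec} and the remark after it say that each simplex of $K$ appears exactly once in $P_{\prec_f}$. It appears either as the first coordinate of a pair $(\sigma,\tau)$ or as the second coordinate $\tau\neq\star$. Write $a$ for the number of pairs with $\tau\neq\star$ and $b$ for the number of pairs $(\sigma,\star)$. Counting simplices gives
\[ |K| = 2a + b. \]
The total number of pairs is $m=a+b$. So it suffices to show that $b$ is independent of $f$; this is the second claim, with $m'=b$. Then $a=(|K|-b)/2$ and $m=(|K|+b)/2$ follow.

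For the second claim, evaluate the decomposition at $t=\infty$, or at any $t\geq \max_\sigma f(\sigma)$, where $K_t=K$. The filtered chain complex there is the simplicial chain complex $C_\bullet(K;k)$. It decomposes as a direct sum of the values at $t$ of the interval spheres $\intS{f(\sigma),f(\tau)}$.
\begin{itemize}
\item If $\tau\neq\star$, then $t\geq f(\tau)$. The value at $t$ is the contractible complex $k\xrightarrow{\id}k$, which contributes nothing to homology.
\item If $\tau=\star$, then $f(\star)=\infty>t$. The value at $t$ is a single copy of $k$ in degree $\dim\sigma$, with zero differential.
\end{itemize}
Taking homology, and using that homology commutes with finite direct sums, gives
\[ b=\sum_n \dim_k H_n(K;k). \]
This is the total Betti number of $K$, which depends only on $K$ (and the fixed field $k$), not on $f$. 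One also gets the finer statement that the number of $\star$-pairs with $\dim\sigma=n$ equals $\beta_n(K)$.

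The step needing the most care is justifying that each simplex appears exactly once in any chosen pairing. This counting fact is what links pairs to $|K|$. It follows from \cref{thm:simplices_dec}: the filtered complex has a basis of simplices, and each interval sphere $\intS{f(\sigma),f(\tau)}$ contributes total rank $2$ when $\tau\neq\star$ and rank $1$ when $\tau=\star$. Comparing dimensions at $t$ large gives
\[ |K|=\dim C_\bullet(K) = 2a+b \]
directly, without appealing to the combinatorics of which simplex sits where. So this dimension count is the argument I would actually use.

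A smaller point: if the ordering $\prec_f$ is only partial, the pairing may not be unique. This does not matter, because the argument above applies to any chosen pairing. The count is also consistent with \cref{dgm_welldef_wrt_chosenpairing} and \cref{dgm_welldef_wrt_consistentordering}.
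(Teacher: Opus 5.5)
Your proof is correct and uses the same strategy as the paper. The $\star$-pairs are counted by the Betti numbers of $K$, and every other simplex lies in exactly one finite pair. You derive both facts from the interval-sphere decomposition evaluated at some $t\ge\max_\sigma f(\sigma)$: homology gives $m'=\sum_n\beta_n(K)$, and total dimension gives $|K|=2a+b$. The paper instead asserts these facts with a citation and then counts positive and negative simplices degree by degree.

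Your closed form $m=(|K|+m')/2$ is the right value of $|P_{\prec_f}|$ with the $\star$-pairs included; for example, a single vertex gives $|P_{\prec_f}|=1$. The paper's displayed $m=\frac12\bigl(-m'+\sum_i n_i\bigr)$ counts only the finite pairs, though that still suffices for the existence claim.
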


This proof hinges on the fact that filtration functions take values in $\R$ and not in $\overline{\R}$. 
Hence, all simplices in $K$ have a finite filtration value and will eventually appear in the filtration.

\begin{proof} 
Let $n_i$ be the number of $i$-simplices in $K$, for $i=0,1,\dots,\dim(K)$. 
Let $\beta_i$ be the $i$th Betti number of $K$, i.e., the rank of the $i$th homology group of $K$, for $i=0,1,\dots,\dim(K)$. 
Then we have that, for all $i=0,1,\dots,\dim(K)$, $\beta_i$-many simplices are paired with $\star$, since they give birth to $\beta_i$-many essential classes. 
All other simplices are paired and each pair generates an interval sphere (see \cite[Paragraphs 2.9 and 2.10]{Chacholski2020}). 
In other words, the remaining, if any, $p_0\coloneqq n_0-\beta_0$ $0$-simplices are paired with as many $1$-simplices. 
If $p_1\coloneqq n_1-\beta_1-p_0>0$, then we have $p_1$-many positive $1$-simplices paired with as many $2$-simplices, and so forth for all degrees up to the maximal dimension of $K$. 
Thus, $m'=\sum_{i=0}^{\dim(K)}\beta_i$ and $m=\dfrac{1}{2}\left(-m'+\sum_{i=0}^{\dim(K)}n_i\right)$.
\end{proof}

\subsection{Compatible functions}

As we are studying evolving functions, we next give an explicit definition for relating them based on persistence.

\begin{definition}
    Let $f,g\colon K\to \mathbb{R}$ be two filtration functions. 
The functions $f$ and $g$ are \define{compatible} if $f(\sigma) < f(\tau)$ $\implies$ $g(\sigma)\leq g(\tau)$ and $g(\sigma') < g(\tau')$ $\implies$ $f(\sigma')\leq f(\tau')$.
\end{definition}

\begin{examplet}
\label{ex:L_compatiblefunctions}
In order to understand \cref{def:comb_equiv_compatible_functions}, we next introduce an example which will be used throughout the paper, which is adapted from the excellent example of \cite[Fig.~4-6]{Hickok2026}. 
We will emphasize this particular example by calling the simplicial complex shown at left in  \cref{fig:CompatibleFunctions} by $L$. 
Throughout, we will focus on the 1-dimensional homology induced by the choice of function value on the simplices $a$, $b$, $c$, and $d$. 
We can assume that all other vertices and edges entered at some earlier time than these four (WLOG at value 0 for these examples), noting that they will be internally paired in persistence to give the 0-dimensional diagram: the infinite 0-dimensional class with one of the vertices, and three other (vertex, edge) pairs for the rest. 
For each function in \cref{fig:CompatibleFunctions}, we show the portion of the Hasse diagram for the compatible order corresponding to the simplices $\{a,b,c,d\}$. 
The functions $f$ and $g$ are compatible: for example, $g(a) < g(b)$ and $f(a) = f(b)$. 
Note that $f$ and $h$ are also compatible. 
However, $g$ and $h$ are not compatible since $g(a) <g(b)$ but $h(b)<h(a)$. 
We provide this example in part to emphasize the fact that compatibility is \textit{not} an equivalence relation. 
\end{examplet}

\begin{figure}[h]
    \centering
    \includegraphics[width=0.7\linewidth]{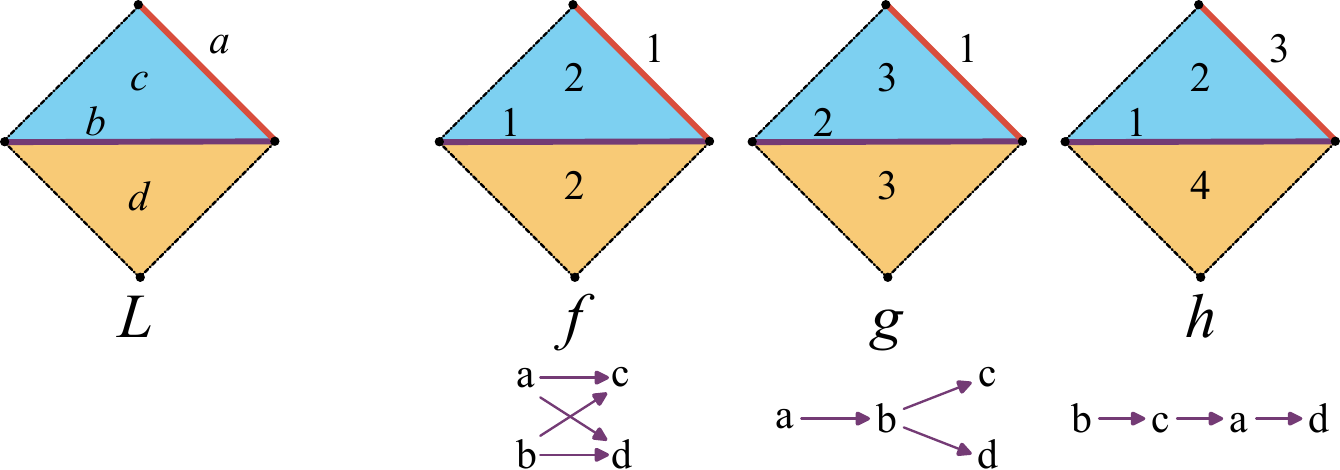}
    \caption{Example adapted from \cite{Hickok2026}. At left is the simplicial complex $L$ used as a running example. At right are three functions $f,g,h\colon K \to \R$ specified on edges $a$ and $b$, and triangles $c$ and $d$. Below each function is the Hasse diagram of the minimal compatible orderings $\prec_f$, $\prec_g$, $\prec_h$ for the named simplices. We assume that all other vertices and edges have function value 0 but generally ignore them for the sake of our examples. }
    \label{fig:CompatibleFunctions}
\end{figure}

We next give a definition to relate the pairs arising from two different consistent orderings for the same filtration function.
\begin{definition}
\label{def:filt_equiv}
    Let $f\colon K\to \mathbb{R}$ be a filtration function and $\prec_1$ and $\prec_2$ be two consistent orderings with respect to $f$. 
    A \define{filtration equivalence of pairings} $P_{\prec_1}$ and $P_{\prec_{2}}$ is an isomorphism $\nu \colon P_{\prec_1} \to P_{\prec_{2}}$ such that for all $\nu(\sigma,\tau) = (\sigma',\tau')$,
    \begin{itemize}
        \item $(f(\sigma),f(\tau)) = (f(\sigma'),f(\tau'))$, and 
        \item if $\sigma \neq \sigma'$ and $\tau \neq \tau'$, then there exists a consistent ordering $\prec_S$ of $f$ $(\sigma,\tau'),(\sigma',\tau)\in P_{\prec_s}$. 
    \end{itemize}
    If $\nu(\sigma,\tau) = (\sigma',\tau')$, we say $(\sigma, \tau) \in P_{\prec_1}$ and $(\sigma',\tau') \in P_{\prec_{2}}$ are \define{filtration equivalent with respect to $\nu$}. 
\end{definition}

The idea behind this definition is that there is some choice of consistent order and pairings with respect to $f$ such that both $\sigma$ and $\sigma'$ satisfy the split conditions (\hyperlink{SC1}{\bf SC1-2-3}) both with $\tau$ and $\tau'$, which does not happen if the simplices correspond to two different features (such as, two different components). The condition on $\nu$ tells us that $\sigma$ and $\tau$ split together, as well as $\sigma'$ and $\tau'$, and to ensure that they are representing ``the same feature'', we added the other possible pairings.

\begin{examplet}
\label{ex:combinatorialEquivalentSingleFunction}
Continuing the example of \cref{ex:L_compatiblefunctions},
 consider the function $g$ shown in \cref{fig:CompatibleFunctions}. 
We have two options for total orders refining the shown minimal compatible ordering: $a \prec_1 b \prec_1 c \prec_1 d$ and $a \prec_{2} b \prec_{2} d \prec_{2} c$. 
The first results in persistence pairing $P_{\prec_1} = \{(a,d), (b,c)\}$ and the second results in pairing $P_{\prec_{2}} = \{(a,c), (b,d) \}$. 
Because $f(a) \neq f(b)$, the only available  combinatorial equivalence is $\nu \colon (a,d) \mapsto (a,c);\, (b,c) \mapsto (b,d)$. 

Next consider the function $f$ in \cref{fig:CompatibleFunctions}. 
In this case, we take both $\prec_1$ and $\prec_2$ to be the same total order $b \prec a \prec d \prec c$, which results in pairings $P_{\prec} = \{(a,c), (b,d)\}$. 
There is an obvious isomorphism filtration equivalence given by the identity on $P_{\prec}$. 
However, if we choose $\prec_S$ to be $a \prec_S b \prec_S c \prec_S d$, which has pairings $P_{\prec_S} = \{(a,d), (b,c) \}$,
we have that the isomorphism  $\nu\colon P_\prec \to P_{\prec};\, (a,c) \mapsto (b,d)$ and $(b,d) \mapsto (a,c)$ is a filtration equivalence.
For example, fixing $(\sigma,\tau) = (a,c)$ and $(\sigma',\tau') = (b,d)$, we have that $(\sigma',\tau)$ and $(\sigma,\tau')$ are both in $P_{\prec_S}$. 
\end{examplet}

\begin{proposition}\label{prop:compatible_pairing}
Let $f$ and $g$ be compatible filtration functions. 
For any choice of consistent orderings and pairings $\prec_f$ with $P_{\prec_f}$, and $\prec_g$ with $P_{\prec_g}$, there is an order $\prec'$ and pairing $P_{\prec'}$, compatible with both $f$ and $g$, which gives filtration equivalences 
$\nu_1\colon P_{\prec_f} \to P_{\prec'}$ with respect to $f$ 
and 
$\nu_2\colon P_{\prec'} \to P_{\prec_g}$ with respect to $g$. 
\end{proposition}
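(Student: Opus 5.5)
The plan is to build a single total order $\prec'$ that is consistent with both $f$ and $g$, and then derive the two filtration equivalences from the facts already established for a fixed filtration function.

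\textbf{Step 1: a common order.} Define a relation $R$ on $K$ by putting $\sigma\, R\, \tau$ whenever $\sigma \subseteq \tau$, or $f(\sigma)<f(\tau)$, or $g(\sigma)<g(\tau)$. Let $\prec$ be its transitive closure. I will show that $\prec$ is acyclic, so that it is a partial order.
\begin{itemize}
\item Along any $R$-chain, $f$ is weakly increasing. A face relation does not decrease $f$ because $f$ is a filtration function. An $f$-strict step increases $f$ directly. A $g$-strict step $g(\sigma)<g(\tau)$ forces $f(\sigma)\le f(\tau)$ by compatibility.
\item The same argument shows $g$ is weakly increasing along any $R$-chain.
\item In a cycle, both $f$ and $g$ would therefore be constant. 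Then no strict step can occur, so every step is a face inclusion. A cycle of face inclusions forces all its simplices to be equal, so no nontrivial cycle exists.
\end{itemize}
Take $\prec'$ to be any total refinement of $\prec$ (\cref{def:refinement}). By construction, $\prec'$ is a total consistent order for $f$ and also for $g$. By \cref{lem:total_order_unique_pairing}, it has a unique chosen pairing $P_{\prec'}$; this pairing is the same set of pairs whether we regard the split conditions as taken with respect to $f$ or to $g$.

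\textbf{Step 2: the equivalences.} It suffices to build $\nu_1\colon P_{\prec_f}\to P_{\prec'}$ with respect to $f$; the map $\nu_2$ is built symmetrically with $g$ in place of $f$ and then inverted. Both $\prec_f$ and $\prec'$ are consistent orders for $f$. By \cref{dgm_welldef_wrt_consistentordering}, the two pairings give the same augmented diagram, so the multisets $\{(f(\sigma),f(\tau))\}$ agree. I will therefore define $\nu_1$ separately on each value class $(x,y)$, as a bijection between the pairs in $P_{\prec_f}$ and in $P_{\prec'}$ having that value. Pairs present in both pairings are sent to themselves, which makes the second bullet of \cref{def:filt_equiv} vacuous for them.

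\textbf{Step 3: the main obstacle.} For the remaining pairs in a value class, I must match $(\sigma,\tau)$ with $(\sigma',\tau')$, where $\sigma\ne\sigma'$ and $\tau\ne\tau'$, so that some consistent order $\prec_S$ of $f$ contains both $(\sigma,\tau')$ and $(\sigma',\tau)$ in its pairing.
\begin{itemize}
\item First I will reduce to adjacent transpositions. Any two total consistent orders of $f$ are connected by swaps of adjacent simplices that have equal $f$-values and are not faces of one another. This holds because the set of linear extensions of the minimal consistent order is connected under adjacent swaps.
\item I will then track the unique pairing across each swap, using the vineyard transposition analysis of \cite{CohenSteiner2006} read through the split conditions \hyperlink{SC1}{\bf SC1-2-3}. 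A single swap either leaves the pairing unchanged, or exchanges the partners of the two swapped simplices. Swapping $\sigma$ and $\sigma'$ (two births) trades their deaths. Swapping $\tau$ and $\tau'$ (two deaths) trades their births.
\item Since $f$ takes the same values before and after, a swap of two births within one value class yields exactly the crossed pairs $(\sigma,\tau')$ and $(\sigma',\tau)$ required by \cref{def:filt_equiv}. The pre-swap order then serves as the witness $\prec_S$.
\item Composing these matchings along the chain of swaps gives a bijection that preserves values. The difficulty is that the witness condition is not obviously preserved under composition. I plan to handle this by choosing $\prec_S$ as the intermediate order in the chain that realizes the crossed pairing. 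If that fails, I will instead work directly with the split conditions, which depend only on $f$: I will show that whenever $\sigma,\sigma'$ and $\tau,\tau'$ share $f$-values and both pairs satisfy the split conditions, the crossed pairs can simultaneously be split for a suitable reordering of the equal-value simplices.
\end{itemize}

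If $\prec_f$ or $\prec_g$ is only partial, the chosen pairing comes from some refinement, and the same argument applies to that refinement.
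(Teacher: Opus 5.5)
Your Steps 1 and 2 are sound, and they already reach the level of the paper's own argument. The paper obtains a common order and pairing by citing \cite[Lemma 4.4]{Chacholski2026}; your acyclicity argument for $R$ is a correct, self-contained substitute. It then takes $\nu_1,\nu_2$ to be the value-preserving bijections coming from uniqueness of the interval-sphere decomposition, as in the proof of \cref{dgm_welldef_wrt_chosenpairing}, without treating the crossed-pair clause of \cref{def:filt_equiv} separately.

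The gap is Step 3, and it is not just an unfinished verification: what you set out to prove there is false. If a value $(x,y)$ has multiplicity one, the crossed-pair clause can never be witnessed. A chosen pairing containing both $(\sigma,\tau')$ and $(\sigma',\tau)$ would yield two summands $\mathbb{I}^n[x,y)$, contradicting uniqueness of the decomposition \cite[Theorem 4.2]{Chacholski2021}. So a filtration equivalence must send the unique pair with that value to a pair sharing its birth or its death simplex.

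Now take the boundary of the tetrahedron on $u,v,w,p$, with $f=0$ on $u,v,w,p,uv,pu,pv,puv$; $f=1$ on $vw,uw$; $f=2$ on $pw$; $f=3$ on $pvw$; and $f=4$ on $uvw,puw$. Let $\prec_1,\prec_2$ be total consistent orders that agree except that $vw\prec_1 uw$ and $uvw\prec_1 puw$, while $uw\prec_2 vw$ and $puw\prec_2 uvw$. Reducing gives $(w,vw),(uw,uvw),(puw,\star)\in P_{\prec_1}$ and $(w,uw),(vw,puw),(uvw,\star)\in P_{\prec_2}$; all other pairs, including $(pw,pvw)$, are common. The value $(1,4)$ has multiplicity one, and its representatives $(uw,uvw)$ and $(vw,puw)$ share no simplex, so no filtration equivalence $P_{\prec_1}\to P_{\prec_2}$ exists.

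With $g=f$, $\prec_f=\prec_1$, $\prec_g=\prec_2$, your Step 1 allows $\prec'=\prec_2$, and then $\nu_1$ cannot exist. The same example refutes your fallback lemma. It also shows exactly how composing along adjacent swaps breaks down: each swap can be matched through a shared simplex, $(uw,uvw)\mapsto(vw,uvw)\mapsto(vw,puw)$, but the composite admits no witness.

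What is missing is a rule for choosing $\prec'$ in terms of both $\prec_f$ and $\prec_g$; the freedom in the intermediate order has to be used. In the example, order level $1$ as in $\prec_1$ and level $4$ as in $\prec_2$. This gives $(uw,puw)\in P_{\prec'}$, together with $(w,vw)$ and $(uvw,\star)$. The pair $(uw,puw)$ shares $uw$ with $(uw,uvw)$ and $puw$ with $(vw,puw)$, so both $\nu_1$ and $\nu_2$ can be built without any witness. A proof along your lines needs a general construction of this kind and an argument that it always works. An arbitrary common refinement, followed by a comparison of two arbitrary consistent orders of one function, cannot succeed. Separately, your closing claim that a chosen pairing of a partial order always comes from some total refinement also needs justification.
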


\begin{proof}
By \cite[Lemma 4.4]{Chacholski2026}, given $f$ and $g$ compatible filtration functions, there exists a pairing $P_{\prec'}$ such that the sublevelset filtered chain complexes of $f$ and $g$ decompose into interval spheres given by the pairing $P_{\prec'}$. 
The isomorphisms $\nu_1$ and $\nu_2$ are then given by the isomorphism in defined in the proof of \cref{dgm_welldef_wrt_chosenpairing} applied twice: once between $P_{\prec_f}$ and $P_{\prec'}$ seen as consistent ordering with respect to $f$, and once between $P_{\prec'}$ and $P_{\prec_g}$ seen as consistent ordering with respect to $g$.
\end{proof}

The concept and term behind the next definition have been taken from the dissemination of the work \cite{Brooks2026} in research conversations with the authors.

\begin{definition}
\label{def:comb_equiv_compatible_functions}
    Let $f,g\colon K\to \mathbb{R}$ be two compatible filtration functions with total orderings $\prec_f$ and $\prec_g$.
    A \define{combinatorial equivalence of $P_{\prec_f}$ and $P_{\prec_g}$} is an isomorphism 
    $\nu = \nu_2 \circ \nu_1 \colon P_{\prec_f} \to P_{\prec_g}$  given by \cref{prop:compatible_pairing}.
    If such an equivalence exists, we say $P_{\prec_f}$ and $P_{\prec_g}$ are 
    \define{combinatorially equivalent}.
\end{definition}

\begin{examplet}
\label{ex:L_combinatorialEquivalent}
Continuing \cref{ex:L_compatiblefunctions,ex:combinatorialEquivalentSingleFunction}, we consider functions $f$ and $g$ from \cref{fig:CompatibleFunctions} which we have already noted are compatible. 
Say we are given total order refinements $b \prec_1 a \prec_1 c \prec_1 d$ for $f$ and $a \prec_2 b \prec_2 d \prec_2 c$ for $g$.
Then setting $\prec'$ to be $a \prec' b \prec' c \prec' d$, we have that $\prec'$ is a consistent ordering for both $f$ and $g$, resulting in the pairing $P_{\prec'} = \{ (a,d), (b,c) \}$. 
The only allowed filtration equivalence $\nu_2\colon P_{\prec'} \to P_{\prec_2}$ is $(a,d) \mapsto (a,c)$, $(b,c) \mapsto (b,d)$. 
However, there are two choices of isomorphism for $\nu_1 \colon P_{\prec_1} \to P_{\prec'}$, so we have either a filtration equivalence given by the composition
\begin{align*}
\begin{matrix}
   P_{\prec_1} & \xrightarrow{\nu_1} & P_{\prec'}& \xrightarrow{\nu_2} & P_{\prec_2}\\ 
   (b,d) & \mapsto & (a,d)  & \mapsto & (a,c) \\
   (a,c) & \mapsto & (b,c) & \mapsto & (b,d)
\end{matrix}
\end{align*}
or a filtration equivalence given by the composition
\begin{align*}
\begin{matrix}
   P_{\prec_1} & \xrightarrow{\nu_1'} & P_{\prec'}& \xrightarrow{\nu_2} & P_{\prec_2}\\ 
   (b,d) & \mapsto & (b,c) & \mapsto & (b,d)\\
   (a,c) & \mapsto & (a,d)  & \mapsto & (a,c). 
\end{matrix}
\end{align*}
\end{examplet}

\subsection{Distances}

As we will be studying maps $B \to \DgmAsp$ and $B \to \DgmDsp$, we need a notion of a metric on the diagrams to be able to define a topology and thus have a well-defined notion of continuity. 
For $\DgmDsp$, we choose the standard $q$-th Wasserstein distance, where we can fix any $q \in [1,\infty)$. 

\begin{definition}
Fix $q \in [1,\infty)$. 
Given a pair of D-diagrams $X$ and $Y$, considered as multisets of points in $\mathring{H}$, a partial matching is a bijection $\eta\colon X' \to Y'$ for subsets $X' \subseteq X$ and $Y' \subseteq Y$. 
The cost of a partial matching is given by 
\[
c(\eta) = \left( 
\sum_{x \in X'} \|x-\eta(x)\|_q^q 
+ \sum_{z \in (X \setminus X') \cup( Y \setminus Y')} \|z - \Delta\|_q^q
\right)^{1/p}.
\]
The \define{$q$-th Wasserstein distance} is 
\[
W_q(X,Y) = \inf_{\eta} c(\eta)
\]
where the infimum is taken over all partial matchings of $X$ and $Y$.
\end{definition}
    
\begin{definition}
    The space of persistence D-diagrams with the $q$-Wasserstein distance, $q \in \N$ is denoted $\DgmDsp$, or $\DgmDsp_q$ when $q$ needs to be emphasized. 
\end{definition}

Following \cite{Chacholski2026}, we have a related distance for the augmented persistence diagrams.
\begin{definition}
Fix $q \in [1,\infty)$. 
Given a pair of A-diagrams $X$ and $Y$ with the same cardinality, considered as multisets of points in $H$, a perfect matching is a bijection $\eta\colon X \to Y$. 
The cost of a partial matching is given by 
\[
c(\eta) = \left( \sum_{x \in X} \|x-\eta(x)\|_q^q \right)^{1/p}
\]
The \define{$q$-th Wasserstein distance} is 
\[
W_q(X,Y) = \inf_{\eta} c(\eta)
\]
where the infimum is taken over all perfect matchings of $X$ and $Y$.
\end{definition}

\begin{definition}
The space of persistence A-diagrams with the $q$-Wasserstein distance is denoted $\DgmAsp$, for $q\in\mathbb{N}$, or $\DgmAsp_q$ when $q$ needs to be emphasized.
\end{definition}

\subsection{Fibered filtration functions} 
\label{ssec:fiberedFiltFcn}

The main idea of this work is to study evolving functions on a simplicial complex. 
So, our assumed input is as follows.

\begin{definition}[\cite{Hickok2026}]
    A \define{fibered filtration function} is a  function 
    $g\colon K \times B \to \R$, where $B$ is a topological space, $K$ is a finite simplicial complex,  $g_p\colon K \to \R;\, g_p(\sigma) \coloneqq g(p,\sigma)$ is a filtration function on $K$ for every $p \in B$, and $g_\sigma\colon B \to \R$ is continuous for every $\sigma \in K$. 
\end{definition}

Note that this definition assumes a fixed simplicial complex $K$. 
Generalizations to a complex dependent on the point $p \in B$ would be an interesting extension of this work, but are outside the scope of what we do here. 
In any case, assuming a fixed $K$ encompasses many use-cases of our interest, including the persistent homology transform and its variants. 

Combining this definition with the A-diagram and d-diagrams from above, we have an induced map to persistence diagrams:

\begin{definition}
Given a fibered filtration function $g\colon K \times B \to \R$, the \define{induced A-persistence map} is 
\begin{equation*}
\begin{array}{rccc}
G_A\colon   & B & \longrightarrow &  \DgmAsp  \\
     & p & \mapsto & \DgmA(f_p)
\end{array}
\end{equation*}
Similarly, the \define{induced D-persistence map} is 
\begin{equation*}
\begin{array}{rccc}
G_d \colon  & B & \longrightarrow &  \DgmDsp  \\
     & p & \mapsto & \DgmD(f_p)
\end{array}
\end{equation*}
When obvious from the context, we will drop the $A$ and $D$ subscripts.

Whether or not they came from a fibered filtration function, we say a \define{$B$-parameterized (A-; respectively D-) diagram} is a function of the form $B\to \DgmAsp$ (respectively $B \to \DgmDsp$).
\end{definition}

\begin{theorem}\label{dgms_welldef_continuous}
Given a fibered filtration function, $g\colon K \times B \to \R$, both $G_A$ and $G_D$ are uniquely defined and are continuous in their respective metrics. 
\end{theorem}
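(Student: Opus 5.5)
Uniqueness and continuity are separate claims; I would handle uniqueness first and then prove continuity by working at the level of pairings rather than diagrams.

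\textbf{Uniqueness.} For each $p\in B$, $g_p$ is a filtration function on $K$, so it has at least one consistent ordering. One choice is the minimal consistent order. Another is any total refinement, obtained for instance by sorting by $(g_p(\sigma),\dim\sigma)$ and breaking remaining ties arbitrarily. By \cref{thm:simplices_dec}, such an ordering gives at least one chosen pairing. By \cref{dgm_welldef_wrt_chosenpairing} and \cref{dgm_welldef_wrt_consistentordering}, the multisets $\DgmA(g_p)$ and $\DgmD(g_p)$ do not depend on the ordering or the pairing. Hence $G_A$ and $G_D$ are well-defined functions.

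\textbf{Continuity of $G_A$.} Fix $p\in B$ and $\e>0$. Since $K$ is finite and each $g_\sigma$ is continuous, there is a neighborhood $U\ni p$ on which two things hold:
\begin{itemize}
\item $|g_\sigma(p')-g_\sigma(p)|<\delta$ for every simplex $\sigma$;
\item whenever $g_p(\sigma)<g_p(\tau)$, also $g_{p'}(\sigma)<g_{p'}(\tau)$.
\end{itemize}
The second condition is obtained by taking $2\delta$ smaller than the minimum positive gap among the finitely many values $g_p(\sigma)$.

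For $p'\in U$ this makes $g_p$ and $g_{p'}$ compatible. One direction is immediate. For the other, suppose $g_{p'}(\sigma)<g_{p'}(\tau)$ but $g_p(\sigma)>g_p(\tau)$; the preservation of strict inequalities would then force $g_{p'}(\tau)<g_{p'}(\sigma)$, a contradiction.

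\cref{prop:compatible_pairing} then gives a single order $\prec'$ and a single pairing $P_{\prec'}$ that is a chosen pairing for both $g_p$ and $g_{p'}$. Using this common pairing, $(\sigma,\tau)\mapsto(\sigma,\tau)$ defines a perfect matching between $\DgmA(g_p)$ and $\DgmA(g_{p'})$. Each matched point moves by at most $\delta$ in each coordinate. The $\star$-pairs sit at $(g(\sigma),\infty)$ in both diagrams, so they contribute only the finite coordinate difference. By \cref{prop:Num_Pairs_Constant} the number of pairs $m$ is constant, so the cost is at most $(2m)^{1/q}\delta$, which can be made smaller than $\e$.

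\textbf{Continuity of $G_D$.} Use the same common-pairing matching, with one adjustment. Any pair that is diagonal in one diagram but off-diagonal in the other is matched to $\Delta$ instead. Its distance to $\Delta$ is at most a constant times $\delta$, because $|g_{p'}(\sigma)-g_{p'}(\tau)|\le 2\delta$ whenever $g_p(\sigma)=g_p(\tau)$. This gives the same type of bound, so $W_q(G_D(p),G_D(p'))\to0$ as $p'\to p$.

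\textbf{Main obstacle.} The crux is obtaining a single pairing valid for both $g_p$ and $g_{p'}$. \cref{prop:compatible_pairing} supplies this, but only once compatibility has been checked, which is why the neighborhood must be chosen so that strict order relations at $p$ persist at $p'$. As a fallback, one could bypass pairings altogether and use the stability results of \cite{Chacholski2026} for A-diagrams together with classical Wasserstein stability for D-diagrams.
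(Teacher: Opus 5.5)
Your proof is correct, but it follows a different route from the paper's. Uniqueness you handle the same way the paper does, via the well-definedness lemmas.

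For continuity, the paper gives no argument of its own and simply cites the stability theorems: \cite{Chazal2016,skrabaturner2020} for $G_D$, and \cite[Theorem 6.6]{Chacholski2026} for $G_A$. Your suggested fallback is exactly the paper's proof.

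You instead build the local estimate from the paper's own pairing machinery:
\begin{itemize}
\item Near $p$, every strict inequality of $g_p$ persists, so $g_p$ and $g_{p'}$ are compatible.
\item \cref{prop:compatible_pairing} then gives one pairing that is a chosen pairing for both functions.
\item The identity on that pairing is a perfect matching in which each pair moves by less than $\delta$ per coordinate. This gives $W_q\le(2m)^{1/q}\delta$ for A-diagrams.
\item The same matching works for D-diagrams. By your choice of $U$, a pair off-diagonal at $p$ stays off-diagonal at $p'$. So the only pairs sent to $\Delta$ are those on the diagonal at $p$, and these have persistence less than $2\delta$ at $p'$.
\end{itemize}

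The paper's citations buy brevity and statements that hold globally, not just near a fixed $p$.

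Your argument buys two things. First, it is nearly self-contained. Its only external input is the common-pairing lemma behind \cref{prop:compatible_pairing}, which is much lighter than a stability theorem. Here you could even produce the common order by hand: sort by $g_p+g_{p'}$, then by dimension, which gives a total order consistent with both functions. Second, it gives pair-level control: each individual simplex pair moves by less than $\delta$ near $p$, not merely the optimal matching. That is closer to the pair tracking the canopy construction is built on.
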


\begin{proof}
The uniqueness is a direct result of \cref{dgm_welldef_wrt_consistentordering}.
Continuity of $G_D$ comes from the stability theorems such as \cite{Chazal2016,skrabaturner2020}.
Continuity of $G_A$ comes from \cite[Theorem 6.6]{Chacholski2026}.
\end{proof}

\begin{examplet}
\label{ex:L_2-param}
We conclude this section with the full, 2-parameter version of our motivating example described in \cref{ex:L_compatiblefunctions}, adapted from \cite{Hickok2026}. 
Given the simplicial complex $L$ from \cref{fig:CompatibleFunctions}, we define the fibered filtration function 
\begin{equation}
\label{eq:AbbyExampleFunction}
\begin{matrix}
f \colon & L \times \R^2 & \longrightarrow & \R \\
& (a,x,y) & \longmapsto & s(y)+2\\
& (b,x,y) & \longmapsto & -s(y) + 2\\
& (c,x,y) & \longmapsto & s(x) + 4\\
& (d,x,y) & \longmapsto & -s(x) + 4\\
& (\sigma, x,y) & \longmapsto & 0 & \text{ for all other } \sigma
\end{matrix}
\end{equation}
where $s(t) = \frac{1-e^{-t}}{1+e^{-t}}$ is the shifted sinusoidal function taking values in $[-1,1]$; see \cref{fig:AbbyExamplePosets} for a sketch. 
The exact function does not matter as much as the order induced in the various quadrants, which is visualized in the center of \cref{fig:AbbyExamplePosets}. 
The induced 1-dimensional map $G_A \colon B \to \DgmAsp$ gives a diagram $G_A(p)$ for each $(x,y) \in \R^2$ with exactly two off-diagonal points (in this case, nothing is on the diagonal in dimension 1, however, all finite 0-dimensional points will be on the diagonal). 
The function value of the two points evolves with the choice of point $(x,y) \in \R^2$. 
For this example, the diminished diagram function $G_D \colon \R^2 \to \DgmDsp$ results in the same diagrams since all points are off-diagonal. 
\end{examplet}

\begin{figure}[h]
\centering
\begin{minipage}{.2\textwidth}
\centering 

\includegraphics[width =.5\textwidth,align=c]{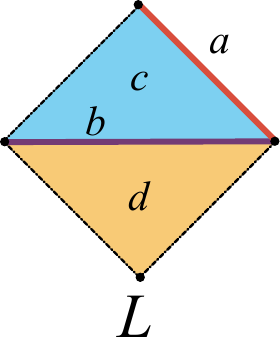}

\includegraphics[width = \textwidth, align = c]{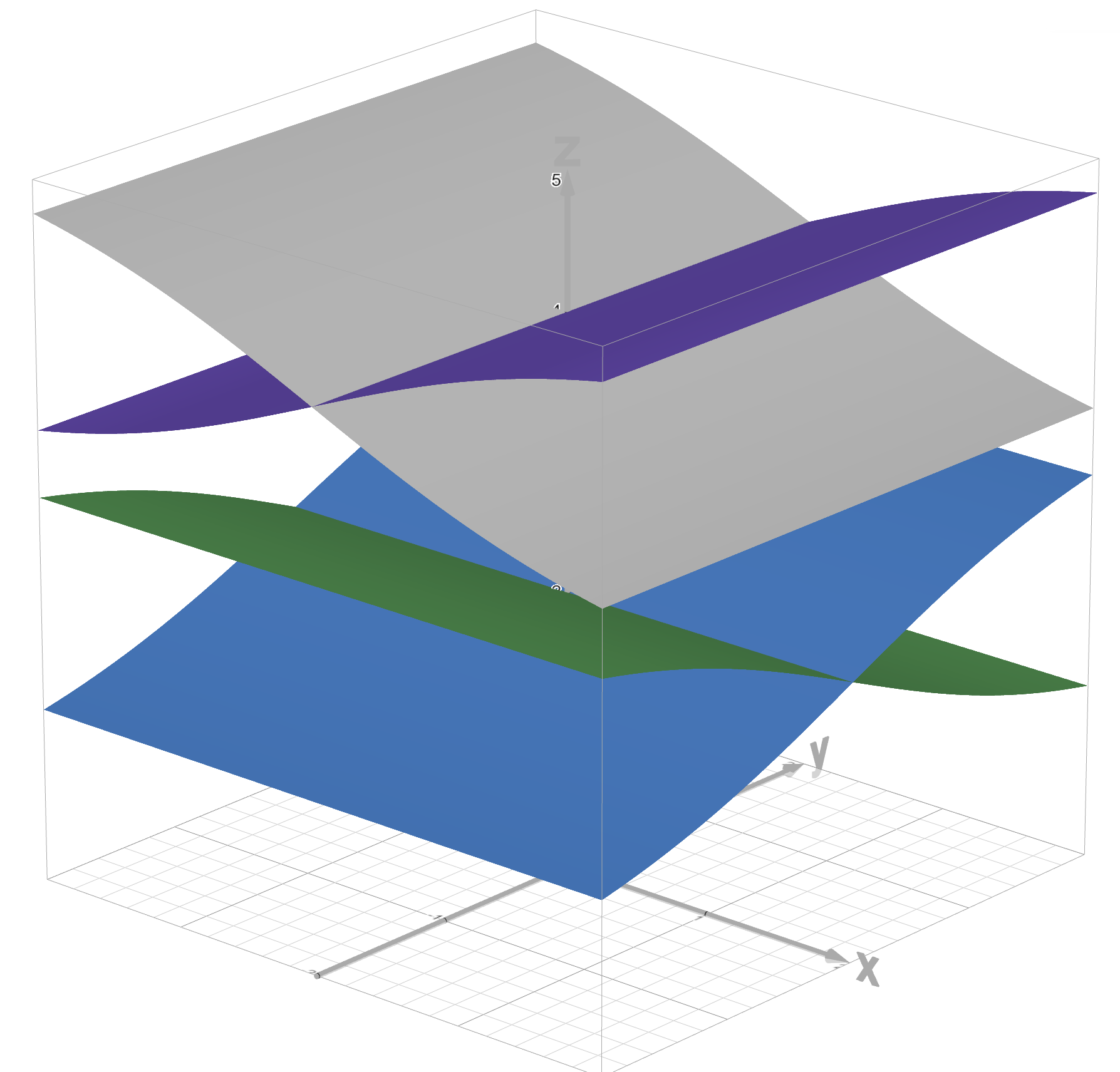}
\end{minipage}
\quad
\includegraphics[width=0.35\linewidth,align=c]{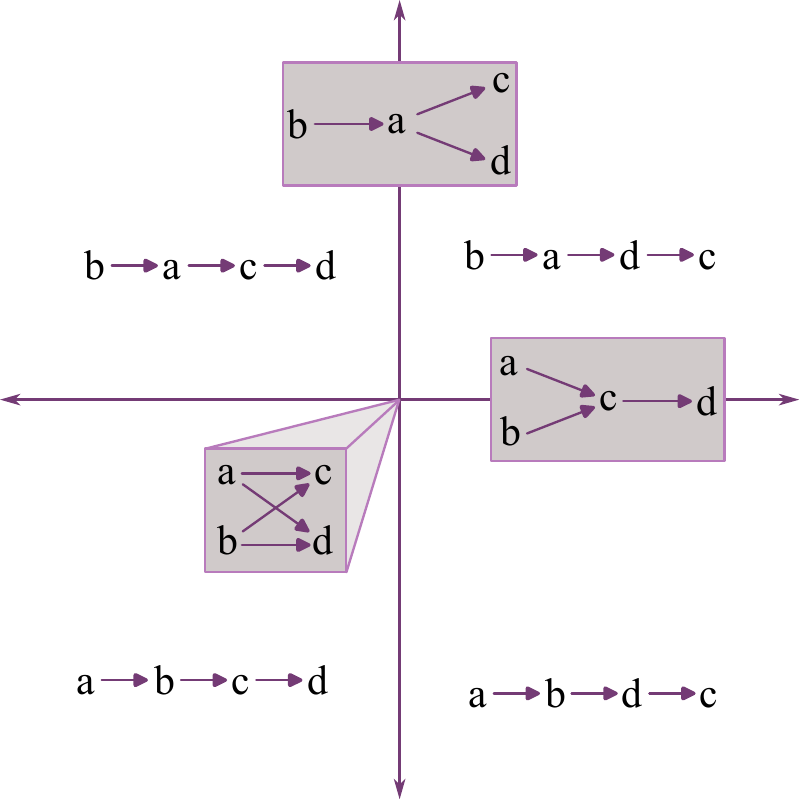}
\includegraphics[width=0.35\linewidth,align=c]{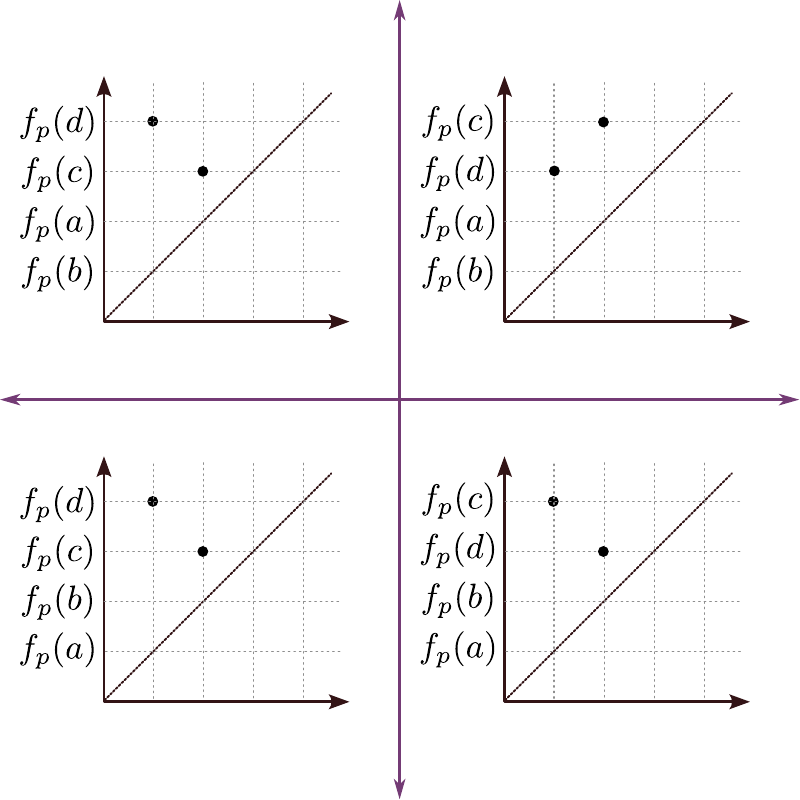}
    \caption{Left is the running example simplicial complex $L$, followed by a graph of the function in \cref{eq:AbbyExampleFunction} with $f_a$ in blue, $f_b$ in green, $f_c$ in purple, and $f_d$ in grey; see \href{https://www.desmos.com/3d/iqdanaupqx}{desmos.com/3d/iqdanaupqx} for an interactive plot.
    Next is the compatible order for each quadrant, and finally the resulting persistence diagrams in each region.
    }
    \label{fig:AbbyExamplePosets}
\end{figure}

\section{A-Canopies}
\label{sec:A-Canopies}

Now we assume we have fixed a fibered filtration function $g\colon K \times B \to \R$
with induced $B$-parameterized diagrams functions $G_A\colon B \to \DgmAsp$ and $G_D\colon B \to \DgmDsp$.
We will develop the structure of \emph{canopies} to encode these induced functions first without assuming that a fixed fibered filtration was given.
Then in subsequent sections we will show when a fibered filtration function has such canopies. 

\subsection{Definition and construction of an A-canopy}
Recall that we use the notation $[N] =\{1,\ldots,N\}$ to be the set of $N$ elements with the discrete topology. 

\begin{definition}
\label{def:A-canopy}
An \define{augmented canopy} (or simply an \define{A-canopy}) is a collection $\cE = (E,B,\pi,N,  \eta)$ where $B$ is Hausdorff space, $(E,B,\pi)$ is a bundle where $|\pi\inv(p)| = N \in \Z _{\geq 0}$ for all $p \in B$, and $\eta\colon E \to H$ is continuous.
If, additionally, there is a set $S\subset B$ such that
$$ (E\setminus \pi\inv(S),B \setminus S,\pi,[N])$$
is a fiber bundle, we say that $\cE$ is an \define{approximately fibered A-canopy}. 
If $S  = \emptyset$, (i.e.~$(E,B,\pi,[N])$ is a fiber bundle), we say that $\cE$ is a \define{fibered A-canopy}. 

\end{definition}

Consider the cartoon example of \cref{fig:A-canopy-example}. 
In this construction, we have a base space $B$ and $E$ which are both the circle $\S^1$, with $\pi$ acting as a degree 2 cover map. 
The preimages of three points in $B$ are shown as the purple triangles, the red circles, and the blue squares.
The map $\eta$ provides the location in $H$ corresponding to each point in $E$. 
The intuition here is that the image in $\eta$ of the preimages of a point $p \in B$ will end up representing the augmented persistence diagram associated to $p$. 
So in this example, we have two points in the persistence diagram at any fixed $p$; in particular for the red circle point, there is one off-diagonal and one on-diagonal point in the augmented diagram. 
Also note that, while not shown in the figure, the image $\pi(E)$ can be a non-embedding, even a non-immersion. 
For this reason, we are careful to treat the image under $\eta$ as a set with multiplicity since our diagram, which will be represented by $\eta(\pi\inv(p))$, is a set of points in $H$ with multiplicity.
\medskip

\begin{figure}[h]
\centering
\resizebox{.5\linewidth}{!}{
\begin{tikzpicture}
\def\firstball{lizRed} 
\def\firstarc{lizRed} 
\def\secondball{lizDkPur} 
\def\secondarc{lizDkPur} 
\def\thirdball{lizBlue} 
\def\thirdarc{lizBlue} 
\node[scale=2] (doubleLoopLabel) at (2.3,6.8) {$E$};
\node (doubleLoop) at (0,5) {
\begin{tikzpicture}
\begin{axis}[
    hide axis,
    view={-40}{45}, 
    width=8cm,
    height=8cm,
]
\def\R{5} 
\def\w{1.0} 
\addplot3[
    samples y=0,
    thick,
    samples=60,
    domain=0:360,
    variable=x,
    black, 
] (
    {(\R + \w*cos(x/2)) * cos(x)},
    {(\R + \w*cos(x/2)) * sin(x)},
    {\w*sin(x/2)}
);
\addplot3[
    samples y=0,
    thick,
    samples=70,
    domain=0:360,
    variable=x,
    black, 
] (
    {(\R - \w*cos(x/2)) * cos(x)},
    {(\R - \w*cos(x/2)) * sin(x)},
    {-\w*sin(x/2)}
);

\addplot3[
    samples y=0,
    line width=3pt,
    samples=70,
    domain=225:241,
    variable=x,
    \firstarc, 
] (
    {(\R - \w*cos(x/2)) * cos(x)},
    {(\R - \w*cos(x/2)) * sin(x)},
    {-\w*sin(x/2)}
);
\addplot3[
    samples y=0,
    line width=3pt,
    samples=60,
    domain=226:243,
    variable=x,
    \firstarc, 
] (
    {(\R + \w*cos(x/2)) * cos(x)},
    {(\R + \w*cos(x/2)) * sin(x)},
    {\w*sin(x/2)}
);

\addplot3[
    samples y=0,
    line width=3pt,
    samples=70,
    domain=0:30,
    variable=x,
    \secondarc, 
] (
    {(\R - \w*cos(x/2)) * cos(x)},
    {(\R - \w*cos(x/2)) * sin(x)},
    {-\w*sin(x/2)}
);
\addplot3[
    samples y=0,
    line width=3pt,
    samples=60,
    domain=20:37,
    variable=x,
    \secondarc, 
] (
    {(\R + \w*cos(x/2)) * cos(x)},
    {(\R + \w*cos(x/2)) * sin(x)},
    {\w*sin(x/2)}
);

\addplot3[
    samples y=0,
    line width=3pt,
    samples=70,
    domain=147:170,
    variable=x,
    \thirdarc, 
] (
    {(\R - \w*cos(x/2)) * cos(x)},
    {(\R - \w*cos(x/2)) * sin(x)},
    {-\w*sin(x/2)}
);
\addplot3[
    samples y=0,
    line width=3pt,
    samples=60,
    domain=135:165,
    variable=x,
    \thirdarc, 
] (
    {(\R + \w*cos(x/2)) * cos(x)},
    {(\R + \w*cos(x/2)) * sin(x)},
    {\w*sin(x/2)}
);
\end{axis}
\end{tikzpicture}
};

\node[scale=2] (baseSpaceLabel) at (2.7,-.5) {$B$};
\node (baseSpace) at (0,0) {
\begin{tikzpicture}
\begin{axis}[
    hide axis,
    view={40}{40},
    width=8cm,
]
\addplot3 [
    samples=50,
    domain=0:360,
    samples y=0,
    thick
] (
    {cos(x)},
    {sin(x)},
    {0});

\addplot3 [
    samples=50,
    domain=300:320,
    samples y=0,
    line width=3pt,
    \firstarc
] (
    {cos(x)},
    {sin(x)},
    {0});
\addplot3 [
    samples=50,
    domain=100:120,
    samples y=0,
    line width=3pt,
    \secondarc
] (
    {cos(x)},
    {sin(x)},
    {0});
\addplot3 [
    samples=50,
    domain=205:235,
    samples y=0,
    line width=3pt,
    \thirdarc
] (
    {cos(x)},
    {sin(x)},
    {0});

\end{axis}
\end{tikzpicture}
};
\node[scale=2] (DgmLabel) at (8,2) {$H$};
\node (Dgm) at (7,3) {
\adjustbox{scale=.7}{
\begin{tikzpicture}
\tikzset{
dot/.style = {circle, fill, minimum size=#1,inner sep=0pt, outer sep=0pt},
dot/.default = 6pt 
}
\def\xmax{6}
\def\ymax{6}
\draw[->, thick=2] (0,0) -- (\xmax + 0.5, 0);
\draw[->, thick=2] (0,0) -- (0, \ymax + 0.5);
\draw[dashed] (0,5.5) -- (\xmax + 0.5, 5.5);
\node[right,scale=1.8] at (\xmax + 0.5,5.5) {$\infty$};
\draw[thick=2,dashed] (0,0) -- (\xmax, \ymax);
 \draw[black,line width=1.5pt]
    (2,2)
    .. controls (0.9,1.6) and (0.9,2.6) .. (1.4,3.0)
    .. controls (1.9,3.5) and (1.4,4.5) .. (2.2,4.8)
    .. controls (2.8,5.0) and (3.2,4.6) .. (2.9,4.2)
    .. controls (2.7,3.9) and (2.9,3.6) .. (3.3,3.6)
    .. controls (3.7,3.8) and (4.3,4.5) .. (3.9,4.8)
    .. controls (3.5,5.1) and (4.2,5.4) .. (4.6,5.2)
    .. controls (4.8,5.0) and (5.0,5.0) .. (4,4)
    .. controls (4,4) and (2.0,2.0) .. (2.0,2.0);
    
\draw[fill=\firstball,draw=\firstball] (2.5,2.5) circle[radius=3pt];
\node[fill=\secondball,draw=\secondball,regular polygon, regular polygon sides=3, scale=0.5] at  (1.1,2.2) {};
\node[fill=\thirdball,draw=\thirdball,aspect=1, scale=0.8] at (1.37,3) {};
\draw[fill=\firstball,draw=\firstball] (2,4.7) circle[radius=3pt];
\node[fill=\secondball,draw=\secondball,regular polygon, regular polygon sides=3, scale=0.5] at  (2.87,3.8) {};
\node[fill=\thirdball,draw=\thirdball,aspect=1, scale=0.8] at (4.5,5.24) {};
\end{tikzpicture}
}
};

\draw[->, thick=1.5] (doubleLoop) to [out=-52,in=35] (baseSpace);
\node[right,scale=2] (gammaBLabel) at (1.9,2) {$\pi$};
\draw[->, thick=1.5] (doubleLoop) to [out=10,in=145] (Dgm);
\node[right,scale=2] (gammaHLabel) at (2.9,5.9) {$\eta$};

\draw[fill=\firstball,draw=\firstball] (0,-1) circle[radius=3pt];
\node[fill=\secondball,draw=\secondball,regular polygon, regular polygon sides=3, scale=0.5] at (0.8,0.95) {};
\node[fill=\thirdball,draw=\thirdball,aspect=1, scale=0.8] at (-2.25,0) {};
\draw[fill=\firstball,draw=\firstball] (0,2.9) circle[radius=3pt];
\draw[fill=\firstball,draw=\firstball] (0,5.1) circle[radius=3pt];
\node[fill=\secondball,draw=\secondball,regular polygon, regular polygon sides=3, scale=0.45] at  (0.8,5.83) {};
\node[fill=\secondball,draw=\secondball,regular polygon, regular polygon sides=3, scale=0.45] at  (0.8,6.77) {};
\node[fill=\thirdball,draw=\thirdball,aspect=1, scale=0.8] at (-2,3.7) {};
\node[fill=\thirdball,draw=\thirdball,aspect=1, scale=0.8] at (-2.2,6) {};
\end{tikzpicture}
}
\caption{An cartoon example of an A-canopy (see \cref{def:A-canopy}).
}
\label{fig:A-canopy-example}
\end{figure}

Assume we are given a fibered filtration function $g\colon K \times B \to \R$ and its induced map $G_A\colon B \to \DgmA$. 
The goal is to show that for nice enough inputs, there is an  A-bundle that encodes the same structure as $G_A$. 
Formally, this encoding is given in the following definition.
\begin{definition}\label{def:A-induced}
A $B$-parameterized diagram $G\colon B \to \DgmAsp$ \define{has an induced A-canopy} $\cE = (E,B,\pi,N, \eta)$ if $G(p) = \eta(\pi\inv(p))$ $\forall p \in B$, considered as a set with multiplicity. 
The \define{induced $B$-parameterized diagram} of an $A$-canopy $\cE = (E,B,\pi,N,\eta)$ is 
\[ 
\begin{matrix}
G_A^\cE \colon & B & \longrightarrow & \DgmAsp\\ 
& p & \longmapsto & \eta(\pi\inv(p)).
\end{matrix}
\]
where the image is a set with multiplicity.
\end{definition}

\begin{remark}
\label{rem:MetaDiagram-A-canopies}
In a sense, we are working in the following diagram:
\[
\begin{tikzcd}[column sep=0.5in]
\text{Fibered Filt.~Functions }K \times B \to \R 
\ar[r,"\text{induced}"]
& \Func(B,\DgmAsp) 
\ar[r,"\text{Thm.~\ref{thm:A-bundle-decomp}}", bend left]
& \text{A-canopies}
\ar[l,"\text{induced}", bend left]
\end{tikzcd}
\]
Obtaining the induced $B$-parametrized diagram from an A-canopy is straightforward from the set-up, hence the use of the term \textit{the} induced $B$-parameterized diagram. 
Moreover, it also follows immediately that if one has an induced A-canopy for a given $B$-parameterized diagram, the two-arrow loop in the diagram commutes in either order.
However, the existence of an induced A-canopy from a $B$-parametrized diagram is not immediate, and we show its existence in \cref{thm:A-bundle-decomp} so long as the input came from a fibered filtration function.
\end{remark}

Our task now is to build an A-canopy $\cE$ for a given $g\colon K \times B \to \R$. 
In particular, we assume we have fixed a collection of orderings $\{\prec_p \mid p \in B\}$ consistent for each $g_p\colon K \to \R$  with chosen pairings $\{P_{\prec_p} \mid p \in B\}$.
Then we define the set
\begin{equation*}
    E = \{ \left(p,\sigma, \tau\right) \mid p \in B, (\sigma,\tau) \in P_{\prec_p} \}. 
\end{equation*} 
Note that while this is similar to the construction of a persistence bundle in \cite{Hickok2026}, we encode the second and third coordinates of the points by pairs of simplices rather than points in $H$. 
Our first job is to build a topology on $E$. 

Since $\DgmAsp$ is a metric space with an induced topology, we have the notion of a path in diagram space, namely a continuous map $\gamma\colon [0,1] \to \DgmAsp$. 
However, we will need a stronger notion of paths to be able to track the pairings in the larger space $E$. 
We start by using the combinatorial equivalence for compatible functions (\cref{def:comb_equiv_compatible_functions}) to discretize an isomorphism between pairings arising from non-compatible functions.

\begin{definition}
\label{def:discretizeIsomorphism}
Fix two filtration functions $f,g\colon K \to \R$ with  orders $\prec_f$ and $\prec_g$ consistent with $f$ and $g$ respectively. 
Further, let $\nu\colon P_{\prec_f} \to P_{\prec_g}$ be given, along with a collection of filtration functions $\{f_i\colon K\to \mathbb{R}\}_{i=0}^{n}$ so that $f = f_0$, $g = f_n$, and each pair $f_i$ and $f_{i+1}$ are compatible.
If there exists a set of  consistent orders $\{\prec_i \coloneqq \prec_{f_i}\}_{i=0}^n$ with chosen pairs $\{P_{\prec_i}\}_{i=0}^n$ and combinatorial equivalences $\nu_i\colon P_{\prec_i} \to P_{\prec_{i+1}}$ such that $\nu = \nu_{n-1}\circ \cdots \circ \nu_1 \circ \nu_0$, then we say that \define{$\{f_i\}_{i=0}^n$ discretizes $\nu$}.
\end{definition}

\begin{figure}[h]
    \centering
\includegraphics[width=0.5\linewidth]{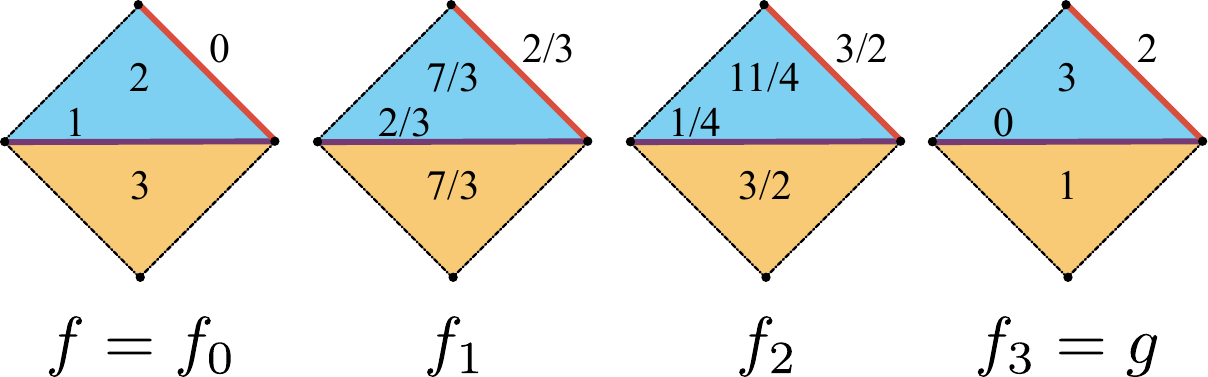}
    \caption{
    A sequence of functions on a simplicial complex $L$ from \cref{fig:CompatibleFunctions} used to elucidate the definition of a discretization of an isomorphism on pairs. 
    See details in \cref{ex:compatible_functions}
    }
    \label{fig:CompatibleFunctions-Complexes}
\end{figure}

\begin{examplet}
\label{ex:compatible_functions}
Consider the example $f$ (leftmost) and $g$ (rightmost) functions shown in \cref{fig:CompatibleFunctions-Complexes}, noting that they are not compatible with each other.  
The minimal consistent order for $f$ is $\prec_f \colon a \to b \to c \to d$, and the minimal consistent order for $g$ is $\prec_g \colon b \to d \to a \to c$. 
Then we choose $f_1$ and $f_2$ functions as shown, with compatible orders/pairings
\begin{align*}
    \prec_1 \colon &
    b \to a \to c \to d;\quad P_{\prec_1} = \{(a,c), (b,d) \} \\ 
    \prec_2 \colon & b \to d \to a \to c;\quad P_{\prec_2} = \{(b,d), (a,c) \}
\end{align*}
and sequence of combinatorial equivalences given by 
\begin{equation*}
  \begin{matrix}
   P_{\prec_f} & \xrightarrow{\nu_0} 
   & P_{\prec_1}& \xrightarrow{\nu_1} 
   & P_{\prec_2}& \xrightarrow{\nu_2} 
   & P_{\prec_g} \\ 
   (a,d) & \mapsto & (b,d)  & \mapsto & (b,d)& \mapsto & (b,d) \\
   (b,c) & \mapsto & (a,c) & \mapsto & (a,c)& \mapsto & (a,c)
\end{matrix}  
\end{equation*}
So this collection discretizes the isomorphism
\begin{equation*}
  \begin{matrix}
   P_{\prec_f} & \xrightarrow{\nu} 
   & P_{\prec_g} \\ 
   (a,d) &  \mapsto  & (b,d) \\
   (b,c) & \mapsto  & (a,c).
\end{matrix}  
\end{equation*}

As seen in previous examples, this is not the only isomorphism $P_{\prec_f} \to P_{\prec_g}$ that can be discretized. 
Choosing the other available combinatorial equivalence $\nu_0' \colon P_{\prec_f} \to P_{\prec_1}$ gives a sequence of combinatorial equivalences 
\begin{equation*}
  \begin{matrix}
   P_{\prec_f} & \xrightarrow{\nu_0'} 
   & P_{\prec_1}& \xrightarrow{\nu_1} 
   & P_{\prec_2}& \xrightarrow{\nu_2} 
   & P_{\prec_g} \\ 
   (a,d) & \mapsto & (a,c) & \mapsto & (a,c)& \mapsto & (a,c) \\
   (b,c) & \mapsto & (b,d)  & \mapsto & (b,d)& \mapsto & (b,d)
\end{matrix}  
\end{equation*}
which discretizes the isomorphism 
\begin{equation*}
  \begin{matrix}
   P_{\prec_f} & \xrightarrow{\nu'} 
   & P_{\prec_g} \\ 
   (a,d) &  \mapsto  & (a,c) \\
   (b,c) & \mapsto  & (b,d).
\end{matrix}  
\end{equation*}
\end{examplet}

Our goal is next to define a trajectory, which will be a path that follows the points in $E$ along some path in the base space, tracking relationships using the discretization of \cref{def:discretizeIsomorphism}. 

\begin{definition}
\label{def:trajectory}
Let $(p,\sigma,\tau)$ and $(p',\sigma',\tau')$ be two points in $E$. 
A \define{trajectory} 
$\Gamma \colon (p,\sigma,\tau) \rightsquigarrow  (p',\sigma',\tau') $ 
between them is a map $\Gamma \colon [0,1]\to E$ such that 
\begin{itemize}
\item $\Gamma_B \colon [0,1] \to B$, the projection onto the first factor of $E$, is continuous. 
\item $\Gamma(0) = (p,\sigma,\tau)$ and $\Gamma(1) = (p',\sigma',\tau')$. 
\item There is an isomorphism $\nu\colon P_{\prec_{f_p}} \to P_{\prec_{f_{p'}}}$ with $\nu(\sigma,\tau) = (\sigma',\tau')$ such that the following is true. For any $\{0 = t_0 <t_1<\cdots<t_k = 1\}$, there is a refinement $\{0 = s_0 <s_1<\cdots<s_n = 1 \} \supset \{t_i\}$ such that the functions 
$    \{f_{\Gamma_B(t)} \mid t \in \{s_i\} \} $
discretize $\nu$ in the sense of Defn.~\ref{def:discretizeIsomorphism}. 
\end{itemize}
We say a trajectory $\Gamma$ \define{projects to $U \subseteq B$} if $\Gamma_B([0,1]) \subseteq U$. 
\end{definition}

\begin{figure}[h]
\centering
\begin{minipage}{.15\textwidth}
    \includegraphics[width = \textwidth]{Fig/AbbyExample-justL.pdf}
\end{minipage}
\qquad
\begin{minipage}{.5\textwidth}
\begin{tikzpicture}[scale=1.2]

\def\xleft{0}
\def\xfa{1.66}  
\def\xfb{3.75} 
\def\xright{5}

\draw[thick, ->] (\xleft,-1.1) -- (\xleft,3.5);
\draw[dashed] (\xfa,-1.1) -- (\xfa,3.5);
\draw[dashed] (\xfb,-1.1) -- (\xfb,3.5);
\draw[dashed] (\xright,-1.1) -- (\xright,3.5);

\draw[thick, ->] (-.1,-1) -- (5.5, -1);


\draw[lizRed, ultra thick] (\xleft,0) -- (\xright,2); 
\draw[lizDkPur, ultra thick] (\xleft,1) -- (\xright,0); 
\draw[lizBlue, ultra thick] (\xleft,2) -- (\xright,3); 
\draw[lizYellow, ultra thick] (\xleft,3) -- (\xright,1); 

\fill (\xleft,0) circle (2pt) node[left] {$a$};
\fill (\xleft,1) circle (2pt) node[left] {$b$};
\fill (\xleft,2) circle (2pt) node[left] {$c$};
\fill (\xleft,3) circle (2pt) node[left] {$d$};

\fill (\xright,0) circle (2pt) node[right] {$b$};
\fill (\xright,1) circle (2pt) node[right] {$d$};
\fill (\xright,2) circle (2pt) node[right] {$a$};
\fill (\xright,3) circle (2pt) node[right] {$c$};

\node[below] at (\xleft,-1.2) {$0$};
\node[below] at (\xfa,-1.2) {$\tfrac{1}{3}$};
\node[below] at (\xfb,-1.2) {$\tfrac{3}{4}$};
\node[below] at (\xright,-1.2) {$1$};

\node[above] at (\xleft,3.5) {$f_0$};
\node[above] at (\xfa,3.5) {$f_{\tfrac{1}{3}}$};
\node[above] at (\xfb,3.5) {$f_{\tfrac{3}{4}}$};
\node[above] at (\xright,3.5) {$f_1$};

\node[below] at (-1,-2) {$\prec_t\colon$};
\node[below] at (\xleft,-2) {$abcd$};
\node[below] at (\xfa,-2) {$abcd$};
\node[below] at (2.7,-2) {$badc$};
\node[below] at (\xfb,-2) {$badc$};
\node[below] at (\xright,-2) {$bdac$};

\end{tikzpicture}
\end{minipage}

\caption{
A fibered filtration function for the simplicial complex $L$ shown for the simplices $a$, $b$, $c$, $d$. 
}
\label{fig:fiberedFiltFunc}
\end{figure}
\begin{examplet}
\label{ex:trajectory}
We show the definition of a trajectory on a 1-parameter fibered filtration function which is closely related to a linearized version of a 1-parameter slice of \cref{ex:L_2-param} along the line $y=x$. 
Consider the fibered filtration function $f\colon  K \times [0,1] \to \R$ where we show the functions\footnote{Note that $f_0, f_{1/3}, f_{3/4}$ and $f_1$ in \cref{fig:fiberedFiltFunc} are $f=f_0$  $f_1$, $f_2$, and $g=f_3$ respectively in \cref{fig:CompatibleFunctions-Complexes}.} 
$f_\sigma$ for $\sigma \in \{a,b,c,d \}$ in \cref{fig:fiberedFiltFunc}.
As in \cref{ex:L_compatiblefunctions} and other examples, we assume the other vertices and edges have some function value lower than $a$, $b$, $c$, and $d$; as they generate the 0-dimensional persistence, they can be safely ignored for the purposes of this example.

When we create the set $E$, we have assumed that there is a fixed compatible order for every point $p \in [0,1]$ with chosen pairings, so here we choose total compatible orders  
\begin{equation}
\label{eq:ExampleChosenOrders}
    \prec_t = 
    \begin{cases}
        a \to b \to c \to d & t \leq\tfrac{1}{3}\\ 
        b \to a \to d \to c & \tfrac{1}{3}<t\leq \tfrac{3}{4}\\
        b \to d \to a \to c & t > \tfrac{3}{4}
    \end{cases}
\end{equation}
resulting in pairings for the relevant simplices given by 
\begin{equation}
\label{eq:ExampleChosenPairings}
    P_{\prec_t} = 
    \begin{cases}
        \{ (a,d), (b,c) \}& t \leq\tfrac{1}{3}\\ 
       \{(b,d), (a,c) \} & \tfrac{1}{3}<t\leq \tfrac{3}{4}\\
       \{(b,d), (a,c) \}  & t > \tfrac{3}{4}.
    \end{cases}
\end{equation}
Then we can define a trajectory 
\begin{equation*}
    \Gamma (t) = 
    \begin{cases}
        (t, a,d) & t \in [0,\tfrac{1}{3}]\\
        (t, b,c) & t \in (\tfrac{1}{3},1]
    \end{cases}
\end{equation*}
since we can force any subdivison of $[0,1]$ to include values $\tfrac{1}{3}$ and $\tfrac{3}{4}$ and then use the discretization of $\nu$ arising from the composition
\begin{equation*}
  \begin{matrix}
   P_{\prec_0} & \xrightarrow{\nu_0} 
   & P_{\prec_{\tfrac{1}{3}}}& \xrightarrow{\nu_1} 
   & P_{\prec_{\tfrac{3}{4}}}& \xrightarrow{\nu_2} 
   & P_{\prec_1} \\ 
   (a,d) & \mapsto & (a,d)  & \mapsto & (b,d)& \mapsto & (b,d) \\
   (b,c) & \mapsto & (b,c) & \mapsto & (a,c)& \mapsto & (a,c)
\end{matrix}  
\end{equation*}
\end{examplet}

The first main result is that we can use these trajectories to define a topology on $E$. 

\begin{theorem}[A-canopies]
\label{thm:A-bundle-decomp} 
Fix a fibered filtration function $g\colon K \times B \to \cD$ with induced diagram function $G\colon B \to \DgmAsp$. 
Then $G$ has an induced A-canopy.
\end{theorem}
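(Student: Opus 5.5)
We build the A-canopy directly on the set $E = \{(p,\sigma,\tau) \mid p\in B,\ (\sigma,\tau)\in P_{\prec_p}\}$ constructed above. We set $\pi(p,\sigma,\tau)=p$ and $\eta(p,\sigma,\tau)=(g_p(\sigma),g_p(\tau))\in H$, with $g_p(\star)=\infty$. Most of the A-canopy conditions are then bookkeeping. By \cref{prop:Num_Pairs_Constant}, $|P_{\prec_p}|=m$ is independent of $p$, so every fiber $\pi\inv(p)$ has the same cardinality $N=m$. By \cref{def:dgms_from_pairing}, together with \cref{dgm_welldef_wrt_chosenpairing} and \cref{dgm_welldef_wrt_consistentordering}, the multiset $\eta(\pi\inv(p))$ equals $\DgmA(g_p)=G(p)$, independently of the chosen orders and pairings. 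The substance of the proof is therefore to put a topology on $E$ for which $\pi$ and $\eta$ are continuous.

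For the topology we use trajectories (\cref{def:trajectory}). Given $e=(p,\sigma,\tau)\in E$ and an open set $U\ni p$ in $B$, let
\[ W(e,U)=\{\Gamma(1) \mid \Gamma\colon e\rightsquigarrow e' \text{ a trajectory projecting to } U\}. \]
We declare these sets, and their intersections with $\eta\inv(V)$ for $V$ open in $H$, to form a basis. Three things must then be checked.

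First, the constant path at $p$ is a trajectory, using the identity filtration equivalence, so $e\in W(e,U)$.

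Second, the basis axiom must hold. Here the key tool is concatenation: if $\Gamma\colon e\rightsquigarrow e'$ and $\Gamma'\colon e'\rightsquigarrow e''$ are trajectories, then $\Gamma\ast\Gamma'$ is a trajectory with isomorphism $\nu'\circ\nu$. Any partition of $[0,1]$ splits into partitions of the two halves once the midpoint is added. Each half is refined by its own discretization, and the two combinatorial-equivalence chains compose. With this, if $e''\in W(e,U)\cap W(e',U')$, then $W(e'',U\cap U')$ is contained in both sets. Intersecting with the $\eta\inv(V)$ pieces handles the rest of the axiom.

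Third, $\pi$ and $\eta$ must be continuous. For $\pi$ we have $\pi(W(e,U))\subseteq U$, and $\pi\inv(U)$ is a union of basic sets, because each point has a constant trajectory inside $U$. The map $\eta$ is continuous by construction, since the sets $\eta\inv(V)$ were included as subbasic sets. One could instead try to deduce continuity of $\eta$ from \cref{dgms_welldef_continuous} together with the requirement that filtration equivalences preserve $(g(\sigma),g(\tau))$, but including $\eta\inv(V)$ avoids this.

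The main obstacle is making these choices coherent. In particular, we want to show that the topology does not secretly make $\pi$ fail to be surjective onto neighbourhoods, and that trajectories exist at all. For this we need a local existence statement:

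\begin{lemma}
For every $p$ there is a neighbourhood $U$ such that $g_q$ is compatible with $g_p$ for all $q\in U$ in a suitable two-step sense.
\end{lemma}

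Strict inequalities $g_p(\sigma)<g_p(\tau)$ persist near $p$ by continuity of the finitely many $g_\sigma$. Hence every point $q$ near $p$ has an order refining a common consistent order. By \cref{prop:compatible_pairing}, every point of $\pi\inv(q)$ is then reached by a trajectory from some point of $\pi\inv(p)$. We would prove this by taking straight-line subdivisions and inserting intermediate compatible functions.

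If the trajectory-based definition of $\eta$-continuity proves delicate, we can still fall back on the subbasis including $\eta\inv(V)$. All the required properties of an A-canopy in \cref{def:A-canopy}, together with the identity $G(p)=\eta(\pi\inv(p))$, would then hold.
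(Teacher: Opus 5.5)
Your construction of $E$, $\pi$, $\eta$ and the sets $W(e,U)$ is exactly the paper's: your $W(e,U)$ is its $U_e$. The differences are in what gets verified.

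The paper's proof just declares the $U_e$ a basis and gets continuity of $\eta$ by citing continuity of $G_A$ (\cref{dgms_welldef_continuous}). It never checks the basis axiom or continuity of $\pi$. Continuity of the diagram map also does not by itself give continuity of $\eta$ on $E$, since it says nothing about how individual points are tracked. Your concatenation argument and the observation $e\in W(e,U)\subseteq \pi^{-1}(U)$ correctly supply the first two checks.

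For $\eta$ you instead enlarge the basis by the sets $\eta^{-1}(V)$. This does prove the existence statement; for bare existence any topology making $\pi$ and $\eta$ continuous, even the discrete one, would do. But a priori it gives a finer space than the paper's, while \cref{thm:A-bundle-unique} and \cref{thm:A-bundle-decomp-generic} concern the pure trajectory topology. The hedge is in fact vacuous:
\begin{itemize}
\item Given $e=(p,\sigma,\tau)$ and a tolerance, choose $\epsilon$ below it with $2\epsilon$ at most the smallest gap between distinct values of $g_p$. Choose $U\ni p$ with $|g_q(\rho)-g_p(\rho)|<\epsilon$ for all $q\in U$, $\rho\in K$.
\item In a discretization $f_i=g_{\Gamma_B(s_i)}$ along a trajectory in $U$, each combinatorial equivalence is a composite of two filtration equivalences. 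These preserve values exactly for $f_i$ and $f_{i+1}$ respectively.
\item So each replacement of a tracked simplex changes its $g_p$-value by less than $2\epsilon$, i.e.\ not at all. Also $\star$ stays $\star$, being the only element of value $\infty$.
\item Hence every $(p',\sigma',\tau')\in W(e,U)$ has $g_p(\sigma')=g_p(\sigma)$ and $g_p(\tau')=g_p(\tau)$. Therefore $\eta(p',\sigma',\tau')$ is within $\epsilon$ of $\eta(e)$.
\end{itemize}
So $\eta$ is already continuous for the trajectory topology, and your topology coincides with the paper's. This is also the argument the paper's one-line appeal to \cref{dgms_welldef_continuous} actually needs.

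Drop the ``local existence'' lemma. \cref{def:A-canopy} asks nothing about openness of $\pi$ or existence of non-constant trajectories. As sketched it would not work anyway: a trajectory must follow an actual path in $B$, and its discretizations may only use the functions $g_{\Gamma_B(s_i)}$, so you cannot insert intermediate compatible functions. Moreover, $B$ need not be locally path-connected.

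Like the paper, you also tacitly need $B$ Hausdorff, which \cref{def:A-canopy} requires but the statement does not assume.
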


\begin{proof}
The proof is constructive to build $\cE = (E,B,\pi,N,  \eta)$. 
For every $f_p\colon K \to \R$ for $p \in B$, choose a compatible ordering $\prec_p$ for the function $f_p$. 
Then fix a chosen pairing $P_{\prec_p}$ for each $p \in B$. 
Let 
$$E = \{(p,\sigma, \tau) \mid p \in B, (\sigma,\tau) \in  P_{\prec_p} \}.$$ 
Note that $\pi\colon E \to B$ is easily defined as the projection onto the first factor. 
We define $\eta \colon E \to H$ by $\eta(p,\sigma,\tau) = (g_p(\sigma),g_p(\tau))$. 
By \cite[Paragraph 2.9, Prop. 3.2 and 3.3]{Chacholski2020}, this gives all of $G(p)$.
Further, since $\pi\inv(p) \cong P_{\prec_p}$, by Prop.~\ref{prop:Num_Pairs_Constant}, we know that $|\pi\inv(p)| =N$ is constant for some $N$. 
So, it only remains to show that $E$ is a topological space rather than just a set. 
We give $E$ a topology as follows. 
For a neighborhood $U \ni p$ in $B$ and a fixed $e = (p,\sigma,\tau) \in E$, we define the set 
\begin{equation*}
U_e = \{ e' = (p',\sigma',\tau') \mid \exists\,  \Gamma \colon e \rightsquigarrow e' \text{ that projects to }U \} \subseteq E.
\end{equation*}
Then we define $E$ to be a topological space with topology given by the basis 
\[
\mathcal{U} = \{U_e \mid e = (p,\sigma,\tau) \in E, U \ni p \text{ open in }B \} \, . 
\]
Lastly, by \cref{dgms_welldef_continuous}, $\eta$ is continuous.
\end{proof}

\begin{remark}\label{rem:pi_open}
A direct consequence of the way we defined the topology of $E$ in the proof of \cref{thm:A-bundle-decomp} is that the constructed map $\pi$ is open.
Indeed, if $V$ is open in $E$, then it is the union of basis elements $\bigcup_{i=1}^\infty U_e$ for $e \in E, U \ni \pi(e)$ open in $B$.
So $\pi(V) = \bigcup \pi(U_e) = \bigcup U$ by definition, which is an open set.
\end{remark}

We note here that a choice was made in the construction of $E$ in the proof of \cref{thm:A-bundle-decomp}, specifically for choices of compatible orderings and pairings for each $p$. 
However, in essence, this choice should not matter. 
Saying two A-canopies are the same is given by the following bundle-like definition of isomorphisms.

\begin{definition}
Two A-canopies $\cE_1 = (E_1,B,\pi_1,N,  \eta_1)$ and $\cE_2=(E_2,B,\pi_2,N,  \eta_2)$ are isomorphic if there is a homemorphism $\phi\colon E_1 \to E_2$ such that the diagrams
\begin{equation*}
\begin{tikzcd}
E_1 \ar[rr,"\phi"] \ar[dr, "\pi_1"'] && E_2 \ar[dl,"\pi_2"] 
& 
E_1 \ar[rr,"\phi"] \ar[dr, "\eta_1"'] && E_2 \ar[dl,"\eta_2"]\\
&B &&& H
\end{tikzcd}
\end{equation*}
commute.
\end{definition}

\begin{theorem}
\label{thm:A-bundle-unique}
The A-canopy of \cref{thm:A-bundle-decomp} is unique up to isomorphism.
\end{theorem}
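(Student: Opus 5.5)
We need to show that two runs of the construction in the proof of \cref{thm:A-bundle-decomp} give isomorphic A-canopies. Fix two systems of choices, $\{\prec^1_p, P^1_p\}_{p\in B}$ and $\{\prec^2_p, P^2_p\}_{p\in B}$, with resulting canopies $\cE_1=(E_1,B,\pi_1,N,\eta_1)$ and $\cE_2=(E_2,B,\pi_2,N,\eta_2)$. The plan is to build $\phi\colon E_1\to E_2$ fiberwise from filtration equivalences, and then check that $\phi$ and $\phi\inv$ send basic open sets $U_e$ to basic open sets.

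\textbf{Step 1 (fiberwise bijection).} For each $p\in B$, \cref{dgm_welldef_wrt_chosenpairing} and \cref{dgm_welldef_wrt_consistentordering} give an isomorphism of decompositions. Using the uniqueness of the interval-sphere decomposition \cite[Theorem 4.2]{Chacholski2021}, as in the proof of \cref{prop:compatible_pairing} with $f=g=f_p$, this yields a filtration equivalence $\nu_p\colon P^1_p\to P^2_p$. Note that $f_p$ is compatible with itself, so $\nu_p$ is also a combinatorial equivalence in the sense of \cref{def:comb_equiv_compatible_functions}.

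Set $\phi(p,\sigma,\tau)=(p,\nu_p(\sigma,\tau))$. Then $\pi_2\circ\phi=\pi_1$ holds by construction. Since $\nu_p$ preserves $(f_p(\sigma),f_p(\tau))$, we also get $\eta_2\circ\phi=\eta_1$. The inverse map is built in the same way from $\nu_p\inv$.

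\textbf{Step 2 (trajectories transport).} The key lemma is the following. If $\Gamma\colon e\rightsquigarrow e'$ is a trajectory in $E_1$ projecting to $U$, then $\phi\circ\Gamma$ is a trajectory $\phi(e)\rightsquigarrow\phi(e')$ in $E_2$ projecting to $U$.

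Let $\Gamma$ have associated isomorphism $\nu$, and take any partition $\{t_i\}$. Refine it to $\{s_i\}$ so that $\nu$ is discretized by chosen pairings $P_{\prec_i}$ and combinatorial equivalences $\nu_i$. Then
\[\nu_{\Gamma_B(1)}\circ\nu\circ\nu_{\Gamma_B(0)}\inv=\bigl(\nu_{\Gamma_B(1)}\circ\nu_{n-1}\bigr)\circ\cdots\circ\bigl(\nu_0\circ\nu_{\Gamma_B(0)}\inv\bigr).\]
The outer factors are compositions of a filtration equivalence with a combinatorial equivalence between pairings of compatible functions.

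\textbf{Step 3 (main obstacle).} This is where the real work lies. We must show that such a composition is again a combinatorial equivalence. Equivalently, we must check that \cref{def:discretizeIsomorphism} allows the endpoint pairings to be replaced by filtration-equivalent ones. The first approach I would try is to insert repeated points into the partition: use $s_0$ twice, with the function $f_{\Gamma_B(0)}$ at both copies, and the step between them given by $\nu_{\Gamma_B(0)}\inv$. This works because a filtration equivalence for a single function is a combinatorial equivalence with $\prec'$ as in \cref{prop:compatible_pairing}. If strictly increasing partitions are required, I would instead use continuity of each $g_\sigma$. Near $s_0$, the functions stay compatible with $f_{\Gamma_B(0)}$ on a small interval, which must be argued from the definition of trajectory or by refining further. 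A composite of combinatorial equivalences between $f_{s_0}$ and $f_{s_1}$ can then be absorbed via \cite[Lemma 4.4]{Chacholski2026}, by choosing a common order $\prec'$ for the compatible pair.

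\textbf{Step 4 (homeomorphism).} By Step 2, $\phi(U_e)\subseteq U_{\phi(e)}$. Applying the same argument to $\phi\inv$ gives $\phi\inv(U_{\phi(e)})\subseteq U_e$, so $\phi(U_e)=U_{\phi(e)}$. Hence $\phi$ maps the basis $\mathcal U_1$ bijectively onto the basis $\mathcal U_2$, so $\phi$ is a homeomorphism. Together with the commuting triangles from Step 1, this shows $\cE_1\cong\cE_2$.
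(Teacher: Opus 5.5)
Your overall route is the paper's. You define $\phi$ fiberwise from the equivalences $\nu_p$ of \cref{prop:compatible_pairing}, get the two commuting triangles for free, and prove $\phi(U_e)=U_{\phi(e)}$ by transporting trajectories with the isomorphism $\nu$ replaced by $\nu_q\circ\nu\circ\nu_p\inv$. The paper simply asserts that pre- and post-composing a discretization of $\nu$ with $\nu_p,\nu_q$ gives a discretization of the new isomorphism. You are right to isolate this as the real content, but your Step 3 does not yet prove it. Repeating $s_0$ is not permitted, since \cref{def:trajectory} uses strictly increasing partitions along the given $\Gamma_B$. Your fallback also does not reach the actual issue. $f_{s_1}$ is already compatible with $f_{s_0}$ by definition. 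What is needed is that $\nu_0\circ\nu_{\Gamma_B(0)}\inv$ is a \emph{single} combinatorial equivalence, i.e.\ that composites of filtration equivalences for one function are again filtration equivalences. The swap condition in \cref{def:filt_equiv} is not evidently transitive. \cite[Lemma 4.4]{Chacholski2026} only produces a common order for a compatible pair, so it does not supply this.

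The fix is to make your first idea legal by changing the path rather than the partition, since membership in $U_{\phi(e)}$ only requires \emph{some} trajectory projecting to $U$. Let $\Gamma'$ have projection constant at $p=\Gamma_B(0)$ on $[0,\tfrac13]$, equal to $\Gamma_B(3t-1)$ on $[\tfrac13,\tfrac23]$, and constant at $q=\Gamma_B(1)$ on $[\tfrac23,1]$. Given any partition, add $\tfrac13$ and $\tfrac23$ and refine the middle part using the trajectory property of $\Gamma$. On the constant pieces, take one step $\nu_p\inv$ (resp.\ $\nu_q$) and identity steps otherwise. Each of these is a combinatorial equivalence of $f_p$ (resp.\ $f_q$) with itself. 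For $\nu_p\inv$ this holds because the conditions in \cref{def:filt_equiv} are symmetric, so inverses of filtration equivalences are filtration equivalences. The composite is then exactly $\nu_q\circ\nu\circ\nu_p\inv$, and no closure property is needed. The same device fills the corresponding step in the paper's proof. (Minor: \cref{prop:compatible_pairing} with $f=g=f_p$ gives $\nu_p$ as a combinatorial equivalence, i.e.\ a composite of two filtration equivalences. You have not shown it is a single filtration equivalence, but you never need that.)
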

\begin{proof}
We will show that for two choices of chosen pairings at each $p \in B$, there is an isomorphism given between the resulting canopies constructed in \cref{thm:A-bundle-decomp}. 
Say canopy $\cE_1 = (E_1,B,\pi_1,N,  \eta_1)$ was constructed using pairings $P_{p,1}$ for each $p \in B$; and canopy $\cE_2=(E_2,B,\pi_2,N,  \eta_2)$ was constructed using pairings $P_{p,2}$. 
For any fixed $p$ and $(\sigma,\tau) \in P_{p,1}$, \cref{prop:compatible_pairing} gives a combinatorial equivalence 
$\nu_p \colon P_{\prec_p,1} \to P_{\prec_p,2}$.
Define the map 
$\phi \colon E \to E'$ by 
$\phi(p,\sigma_1,\tau_1) = (p,\sigma_2,\tau_2)$ where 
$\nu_p(\sigma_1,\tau_1) = (\sigma_2,\tau_2)$. 
It is immediate by construction that $\pi_2 \phi = \pi_1$ and $\eta_2\phi = \eta_1$. 
So it remains to show that $\phi$ is a homeomorphism; in particular we must show it is a continuous bijection with a continuous inverse. 
The bijectivity is immediate from the use of \cref{prop:compatible_pairing}, so we check continuity of both $\phi$ and $\phi\inv$ by showing that it sends basis open sets to basis open sets. 
Specifically, we show $\phi(U_{e_1}) = U_{e_2}$ where $e_1 = (p,\sigma,\tau)$, $\phi(e_1) = e_2 = (p,\sigma',\tau') $, and $U \subset B$ is an open set. 

First, we need that $U_{e_2} \subseteq \phi(U_{e_1})$. 
Let $(q,\alpha_2,\beta_2) \in U_{e_2}$ be an arbitrary point. 
By definition of the basis elements, there is a trajectory 
$\Gamma\colon (p,\sigma_2,\tau_2) \rightsquigarrow (q,\alpha_2,\beta_2)$ given by a map $\Gamma \colon [0,1] \to E_2$ which projects to $U$. 
Let $(q,\alpha_1,\beta_1) = \phi \inv(q,\alpha_2,\beta_2)$. 
We claim that $\phi \inv\Gamma \colon [0,1] \to E'$ is a trajectory $(p,\sigma_1,\tau_1) \rightsquigarrow (q,\alpha_1,\tau_1)$. 
Indeed, note that the projection onto the first factor is unchanged, so it is continuous. 
Further 
$\phi \inv \Gamma(0) = \phi\inv(p,\sigma_2,\tau_2) = (p,\sigma_1,\tau_1)$ 
and likewise 
$\phi \inv \Gamma(1) = \phi\inv(q,\alpha,\beta_2) = (p,\alpha_1,\beta_1)$. 
Finally, by definition of a trajectory in $E_2$, there is an isomorphism $\nu \colon P_{p,2} \to P_{q,2}$ with the final property of \cref{def:trajectory}. 
To check the same property for $\phi \inv \Gamma$, let any 
$\{0 = t_0 <t_1<\cdots<t_k = 1\}$
be given, and let $\{0 = s_0 <s_1<\cdots<s_n = 1 \} \supset \{t_i\}$ be the given refinement such that the functions 
$  \{f_{\Gamma_B(t)} \mid t \in \{s_i\} \} $
discretize $\nu$. 
Then we pre- and post-compose the combinatorial equivalences $\nu_p$ and $\nu_q$ used in the beginning of the construction of $\phi$ to get a discretization of 
\begin{equation*}
\begin{tikzcd}
P_{p,1} \ar[r, "\nu_p"] \ar[rrr, bend right, "\tilde\nu"]
& P_{p,2} \ar[r,"\nu"] & P_{q,2} \ar[r, "\nu_q\inv"] & P_{q,1}
\end{tikzcd}
\end{equation*}
where $\tilde \nu$ is a fixed isomorphism since it is the composition of fixed isomorphisms, and so $\phi\inv\Gamma$ is a trajectory in $U_e$. 
In particular, this means we started with an arbitrary $(q,\alpha_2,\beta_2) \in U_{e_2}$, and showed $(q,\alpha_1,\beta_1) \in U_{e_1}$ with $\phi(q,\alpha_1,\beta_1)=(q,\alpha_2,\beta_2)$, completing the inclusion. 

Finally, we need to show that $\phi(U_{e_1}) \subseteq U_{e_2}$, which proceeds in a similar style to the other inclusion.
Let $(q,\alpha_2,\beta_2) \in \phi(U_{e_1})$ be given, and so there is a $(q,\alpha_1,\beta_1) \in E$ with $\phi(q,\alpha_1,\beta_1)=(q,\alpha_2,\beta_2)$. 
We need to show that $(q,\alpha_2,\beta_2)$ is in $U_{e_2}$. 
Since $(q,\alpha_2,\beta_2) \in \phi(U_{e_1})$, there is a trajectory $\Gamma \colon e_1 \rightsquigarrow (q,\alpha_1,\beta_1)$ given by a map $\Gamma\colon [0,1] \to E$.
We check that $\phi\Gamma \colon e_2 \rightsquigarrow (q,\alpha_2,\beta_2)$ is also a trajectory in $U_{e_1}$. 
As before, we have a fixed $\nu \colon P_{p,1} \to P_{q,1}$ given by the trajectory, so we use the isomorphism 
\begin{equation*}
\begin{tikzcd}
P_{p,2} \ar[r, "\nu_p\inv"] \ar[rrr, bend right, "\tilde\nu"]
& P_{p,1} \ar[r,"\nu"] & P_{q,1} \ar[r, "\nu_q"] & P_{q,2}
\end{tikzcd}
\end{equation*}
noting that any discretization of $\nu$ gives a discretization of $\tilde \nu$, completing the proof.
\end{proof}

\begin{examplet}
\label{ex:L_fullA-canopy}
We conclude the discussion of \cref{ex:trajectory}  by fully computing the A-canopy of the fibered filtration function shown in \cref{fig:fiberedFiltFunc}.
In particular, we will focus on the points in $\pi \inv(\tfrac{1}{3})$ and $\pi \inv (\tfrac{3}{4})$. 
For $p = \tfrac{1}{3}$, we have points $(\tfrac{1}{3},a,d)$ and $(\tfrac{1}{3},b,c)$ in $E$ from the pairings chosen earlier. 
Fix any small neighborhood $U \ni \tfrac{1}{3}$ in $B = [0,1]$. 
We can define a trajectory whose projection $\Gamma_B$ is constant by using the same function over and over in the discretization but changing the compatible order associated to it. 
Indeed, if we choose total orders 
\begin{equation*}
\prec_i =
\begin{cases}
a \to b \to c \to d & i = 0\\
b \to a \to c \to d & i = 1\\
b \to a \to d \to c & i = 2\\
a \to b \to d \to c & i = 3 \\
a \to b \to c \to d & i = 4
\end{cases}
\end{equation*}
which are all compatible with $f_{1/3}$, then the composition of combinatorial equivalences becomes 
\begin{equation*}
  \begin{matrix}
   P_{\prec_0} & \xrightarrow{\nu_0} 
   & P_{\prec_{1}}& \xrightarrow{\nu_1} 
   & P_{\prec_{2}}& \xrightarrow{\nu_2} 
   & P_{\prec_{3}}& \xrightarrow{\nu_3} 
   & P_{\prec_4} \\ 
   (a,d) & \mapsto & (b,d)  & \mapsto & (b,d)& \mapsto & (b,d) & \mapsto & (b,c)\\
   (b,c) & \mapsto & (a,c) & \mapsto & (a,c)& \mapsto & (a,c) & \mapsto & (a,d).
\end{matrix}  
\end{equation*}
This means that the point $(\tfrac{1}{3},a,d)$ is in the open set $U_{(1/3,b,c)}$ for any open set $U \ni \tfrac{1}{3}$; and similarly  $(\tfrac{1}{3},b,c)$ is in the open set $U_{(1/3,a,d)}$.
This means $(\tfrac{1}{3},a,d)$ and $(\tfrac{1}{3},b,c)$  are non-Hausdorff points of the constructed $E$. 

On the other hand, consider the points $(\tfrac{3}{4}, b,d)$ and $(\tfrac{3}{4}, a,c)$ in $E$. 
In this case, using the same orderings and pairings as previously (\cref{eq:ExampleChosenOrders,eq:ExampleChosenPairings}) and any small $U \ni \tfrac{3}{4}$, the only available combinatorial equivalences are
\begin{equation*}
  \begin{matrix}
   P_{\prec_{3/4-\e}} & \xrightarrow{\nu_0} 
   & P_{\prec_{3/4}}& \xrightarrow{\nu_1} 
   & P_{\prec_{3/4+\e}} \\ 
   (a,c) & \mapsto & (a,c)  & \mapsto & (a,c)& \\
   (b,d) & \mapsto & (b,d)  & \mapsto & (b,d)& 
   \end{matrix}  
\end{equation*}
This means that the open set $U_{3/4,a,c}$ cannot contain the point $(t,b,d) \in E$ for any $t \in U$. 
Similarly, the open set $U_{3/4,b,d}$ cannot contain the point $(t,a,c) \in E$ for any $t \in U$. 
Taken together, the resulting space $E$ with its topology is visualized in \cref{fig:ConstructedE}.
\begin{figure}[h]
    \centering
    \begin{minipage}{.5\linewidth}
    \begin{tikzpicture}

    \draw[thick] (0, -1.5) -- (8.5, -1.5);
    
    \draw[thick] (0, -1.3) -- (0, -1.7) node[below] {$0$};
    \draw[thick] (3, -1.3) -- (3, -1.7) node[below] {$1/3$};
    \draw[thick] (6.5, -1.3) -- (6.5, -1.7) node[below] {$3/4$};
    \draw[thick] (8.5, -1.3) -- (8.5, -1.7) node[below] {$1$};

    \draw[lizDkPur, very thick] (0, 1.2) .. controls (1.5, 1.0) and (2.2, 0.5) .. (2.6, 0.35);
    \draw[lizDkPur, very thick] (0, -0.4) .. controls (1, -0.3) and (2.2, 0.0) .. (2.6, 0.15);

    \fill[lizDkPur] (3, 0.4) circle (2.5pt);
    \fill[lizDkPur] (3, 0.1) circle (2.5pt);

    \draw[lizDkPur, very thick] (3.4, 0.35) .. controls (3.8, 0.5) and (5, 1.4) .. (8.5, 1.2);
    \draw[lizDkPur, very thick] (3.4, 0.15) .. controls (3.8, 0.0) and (5, -0.5) .. (8.5, -0.4);

    \node[lizDkPur, above] at (1, 1.1) {\large $(t, a, d)$};
    \node[lizDkPur, below] at (1, -0.3) {\large $(t, b, c)$};
    
    \node[lizDkPur, above] at (3, 0.6) {\large $\left(\tfrac{1}{3}, a, d\right)$};
    \node[lizDkPur, below] at (3, -0.1) {\large $\left(\tfrac{1}{3}, b, c\right)$};
    
    \node[lizDkPur, above] at (6, 1.2) {\large $(t, a, c)$};
    \node[lizDkPur, below] at (6, -0.4) {\large $(t, b, d)$};

\end{tikzpicture}
    \end{minipage}
    \qquad \qquad 
    \begin{minipage}{.35\linewidth}
    \includegraphics[width=\textwidth]{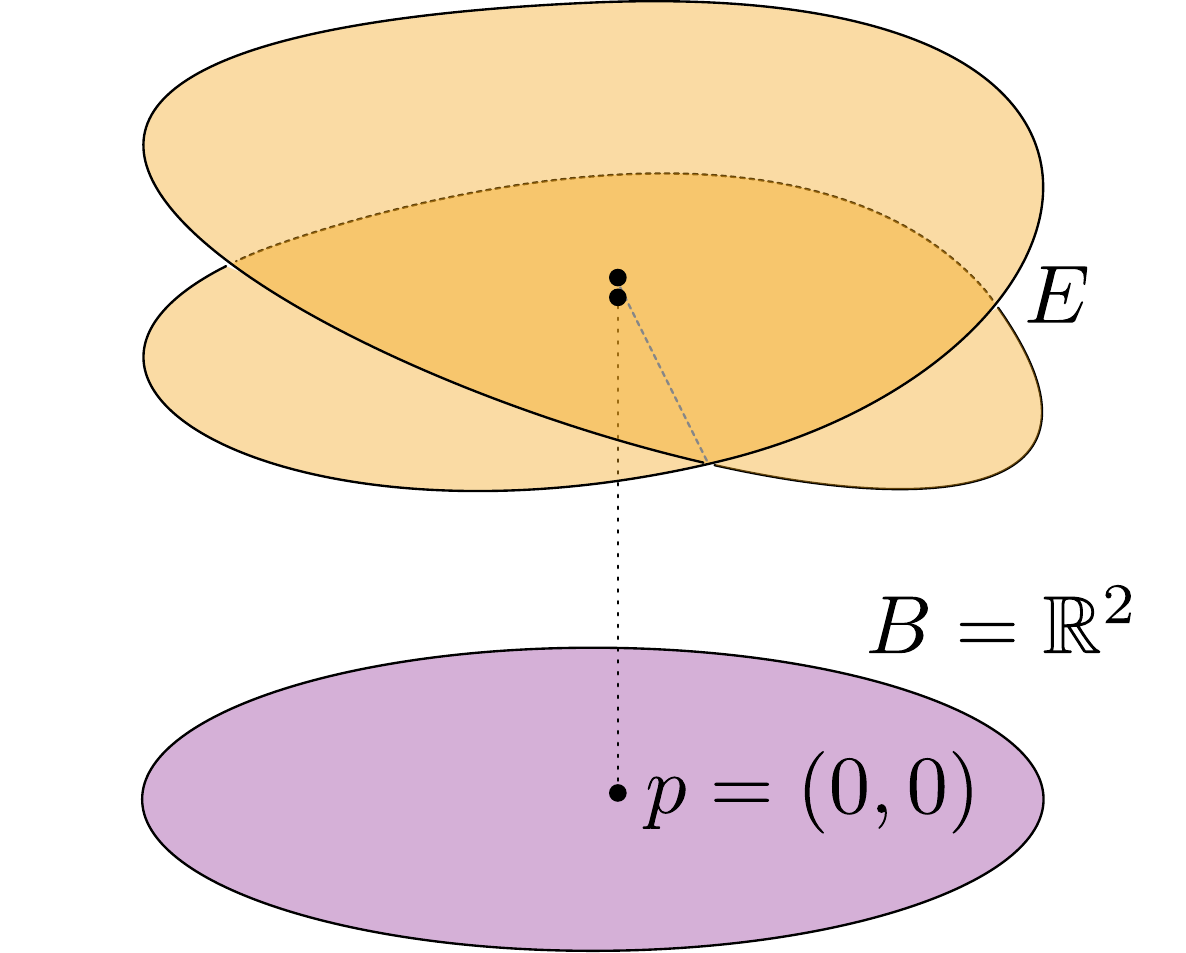}
    \end{minipage}
    \caption{Left: The space $E$ for the A-canopy of the fibered filtration function given in \cref{fig:fiberedFiltFunc} described in \cref{ex:L_fullA-canopy}. Note that there are two non-Hausdorff points over $t=1/3$. 
    Right: The space $E$ for the A-canopy of the 2-parameter fibered filtration of \cref{fig:AbbyExamplePosets} described in \cref{ex:L_2-param-fullCanopy}.
    Here, there is a pair of non-Hausdorff points over the origin; the grey dashed line in $E$ is meant to represent a non-intersection of the space.}
    \label{fig:ConstructedE}
\end{figure}
\end{examplet}

\begin{examplet}
\label{ex:L_2-param-fullCanopy}
We can also compute the A-canopy for the full 2D example from \cref{ex:L_2-param}. 
The behavior of the point at the origin is the same as the non-Hausdorff points in \cref{ex:L_fullA-canopy} over $t=1/3$ since that example was constructed from a 1-parameter slice of this example so we do not provide the details. 
However, the result of the 2-parameter version is that $E$ has a pair of non-Hausdorff points at the origin. 
The rest of the space has two points above every point in $\R^2$, and following a loop around the origin results in a degree 2 map down to the circle. 
\end{examplet}
\subsection{Structure of an A-canopy}

In \cref{thm:A-bundle-decomp}, we constructed an A-canopy from a fibered filtration function, but we did not discuss the second half of \cref{def:A-canopy}; namely which set $S$ results in the constructed A-canopy being approximately fibered?
In \cref{ex:largestS}, we show that finding a set $S$ for which an A-canopy is approximately fibered is trivial. 
However, the resulting set could be too large to be of any use. 
Therefore, in \cref{thm:A-bundle-decomp-generic}, we restrict our setting to prove that smaller sets $S$ can be found, providing the minimal one. 

\begin{examplet}(Largest $S$)\label{ex:largestS}
Let $g\colon K \times B \to \R$ be a fibered filtration function and let $\cE = (E,B,\pi,F)$ be an induced A-canopy obtained from  \cref{thm:A-bundle-decomp}.
Let 
$$S\coloneqq\{p\in B \mid  g(p,\sigma)=g(p,\tau) \text{ for some } \sigma,\tau \in K\}.$$
We note that, by continuity of $g(\cdot,\sigma)$ for all $\sigma \in K$, $S$ is closed as a subset of $B$. 
We will show that $\cE \setminus S = (E\setminus \pi\inv(S),B \setminus S,\pi,F)$ is an approximately fibered A-canopy. 
Indeed, if $B\setminus S=\emptyset$, the statement is trivially true. 
If $B\setminus S\neq \emptyset$, then $E \setminus \pi^{-1}(S)$ are the subsets of $E$ where all simplices have different filtration value. 
Moreover by the continuity of $g$, the ordering on the simplices induced by $g_p$ is total and fixed in each connected component of $\pi^{-1}(S)$, so the resulting pairings $P_{\prec_p}$ are equal through the component. 
The claim follows.
\end{examplet}

In the above example, all points in the base space for which at least two simplices have the same filtration value are excluded. 
This also excludes benign instances such as two simplices belonging to the same pair, i.e., points on the diagonal in the augmented persistence diagram, or pairs with the same birth time but different death times. 
Therefore, it is clearly too stringent. 
Thus, our next task is to show a minimal set $S$ for which there is an approximately fibered A-canopy.

We start by focusing on points in a topological space that break the Hausdorff assumption, \cref{def:non-Haus-point}. 
In particular, we can distinguish better or worse points of multiplicity in a persistence diagram $G_\cE(p)$ by whether there are non-Hausdorff points in the constructed $E$ as follows.

\begin{definition}
\label{def:Structural-Coincidental}
Fix a Hausdorff space $B$, an A-canopy $\cE = (E,B,\pi,N,  \eta)$ and the induced  $G_\cE\colon B \to \DgmAsp$, $p \mapsto \eta(\pi \inv(p))$ (\cref{def:A-induced}). 
A point $(b,d) \in G_\cE(p) \subset H$ with  $|\eta \inv(b,d)|\geq 2$ is called \define{coincidental} if all points in 
$\eta\inv(b,d) \cap \pi\inv(p)$ are Hausdorff in $E$.
Otherwise, $(b,d)$ is called \define{structural}. 
We abuse terminology to also call a point $(p,\sigma,\tau) \in E$ structural if it is a non-Hausdorff point of $E$ and $\eta(p,\sigma,\tau)$ is structural.
\end{definition}

\begin{examplet}
We have already seen examples of structural points, such as in \cref{ex:L_fullA-canopy}. 
To build a simple example with coincidental points, consider a simplicial complex with two connected components: the first has vertices $a_1$ and $b_1$ with edge $e_1$, the second has vertices $a_2,b_2$ with edge $e_2$. 
Say $f_p(a_i) = 0$, $f_p(b_i) = 1$, and $f_p(e_i) = 2$ for $i \in \{1,2\}$, then there will be a 0-dimensional point in the persistence diagram for the pairs $(b_1, e_1)$ and $(b_2,e_2)$ which are both at $(1,2)$ in $H$, but there is no combinatorial equivalence possible between them since they are in separate connected components.
\end{examplet}

Finally, we show that so long as we remove all non-Hausdorff points in $E$, the remaining A-canopy is fibered. 

\begin{theorem}[Fibered A-canopies]
\label{thm:A-bundle-decomp-generic}
Fix a fibered filtration function $g\colon K \times B \to \R$ for a Hausdorff space $B$, with induced diagram function $G\colon B \to \DgmAsp$ and A-canopy $\cE$ from \cref{thm:A-bundle-decomp}. 
Let 
\[S = \{p \in B \mid G(p) \text{ has structural points}\}.\] 
Then $\cE$ is approximately fibered with the set $S$.
Further, $\cE$ is not approximately fibered for any $S' \subsetneq S$, meaning this  $S$ is minimal. 
\end{theorem}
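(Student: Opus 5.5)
The proof has two halves: first that $\cE$ restricted to $B\setminus S$ is a fiber bundle with fiber $[N]$, and then that no proper subset $S'\subsetneq S$ works.

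\textbf{Local triviality off $S$.} Fix $p\in B\setminus S$ and write $\pi\inv(p)=\{e_1,\dots,e_N\}$. The aim is an open $U\ni p$ in $B\setminus S$ for which the basis sets $U_{e_1},\dots,U_{e_N}$ are pairwise disjoint and each meets every fiber over $U$ exactly once. Then $\phi\colon \pi\inv(U)\to U\times[N]$, $e\mapsto(\pi(e),i)$ for $e\in U_{e_i}$, is a homeomorphism: basis sets map onto products of open sets with points, and $\pi$ is open by \cref{rem:pi_open}.

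\emph{Existence of lifts.} Because each $g_\sigma$ is continuous and $K$ is finite, there is a neighborhood $U$ of $p$ on which every strict inequality $g_p(\sigma)<g_p(\tau)$ persists. Hence every $g_q$, $q\in U$, is compatible with $g_p$. By \cref{prop:compatible_pairing}, the straight path from $p$ to $q$, subdivided once, discretizes an isomorphism $P_{\prec_p}\to P_{\prec_q}$. This should give a trajectory from each $e_i$ to some point of every fiber over $U$, after shrinking $U$ to be path-connected (or working with path components).

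\emph{Uniqueness of lifts (the main obstacle).} I must rule out a second trajectory from $e_i$ ending at a different point of the same fiber, or ending at a point of $U_{e_j}$ with $j\neq i$. Suppose this happened for every $U\ni p$. Take smaller and smaller $U$, and use compatibility of all nearby functions with $g_p$ to collapse the discretizations into a chain of filtration equivalences at $g_p$ itself. This would produce a nontrivial equivalence sending $e_i$ to $e_j$, so $e_i$ and $e_j$ would not be separable by open sets. That makes $e_i$ non-Hausdorff, and since the swapped pairs share the same $\eta$-value, that value is structural, contradicting $p\notin S$. The delicate part is this collapse: showing that combinatorial equivalences through compatible neighbors of $g_p$ factor through filtration equivalences of $g_p$, using \cite[Lemma 4.4]{Chacholski2026} together with uniqueness of the decomposition (\cref{dgm_welldef_wrt_chosenpairing}).

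\textbf{Minimality.} Suppose $S'\subsetneq S$ and choose $p\in S\setminus S'$. Then $G(p)$ has a structural point, so some $e\in\pi\inv(p)$ is non-Hausdorff; by definition there is $e'$ such that every neighborhood of $e$ meets every neighborhood of $e'$. The first step is to check that $e'$ lies in the same fiber. If $\pi(e')\neq p$, Hausdorffness of $B$ separates the base points, and pulling back disjoint base neighborhoods (using the basis sets $U_e$) separates $e$ from $e'$, a contradiction. So $e'\in\pi\inv(p)\subseteq E\setminus\pi\inv(S')$.

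In a fiber bundle with discrete fiber, two distinct points of one fiber are always separated by the sheets $\phi\inv(U\times\{i\})$. This is incompatible with $e$ and $e'$ being inseparable, which could only survive if the restricted topology on $E\setminus\pi\inv(S')$ were different from the one on $E$. It is not, because $U_e$ for $U\subseteq B\setminus S'$ is computed by trajectories projecting into $U$. Hence $(E\setminus\pi\inv(S'),B\setminus S',\pi,[N])$ is not a fiber bundle, which proves that $S$ is minimal.
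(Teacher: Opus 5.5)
Your minimality half is essentially the paper's argument. Your check, via Hausdorffness of $B$, that the inseparable partner lies in the same fiber is a detail the paper leaves implicit. The gap is in the local-triviality half, in two places.

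First, its central step is the ``collapse'' of discretizations near $p$ into chains of filtration equivalences of $g_p$. You leave this unproven and flag it yourself as delicate. Second, the existence-of-lifts step does not actually produce a trajectory. \cref{def:trajectory} requires, for one fixed $\nu$, that \emph{every} partition of $[0,1]$ admit a refinement discretizing $\nu$, so subdividing once is not enough. The intermediate functions $g_{\gamma(t)}$ are each compatible with $g_p$, but they need not be compatible with one another (compare $g$ and $h$ in \cref{ex:L_compatiblefunctions}). Moreover, ``straight path'' and ``path-connected $U$'' presuppose structure on $B$ that is not assumed.

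The missing idea is that none of this trajectory analysis is needed. Since $p\notin S$, each $e_i\in\pi^{-1}(p)$ is a Hausdorff point; you use exactly this in your minimality half. Intersecting finitely many pairwise separations gives pairwise disjoint open sets $V_i\ni e_i$. The paper then sets $U=\bigcap_i\pi(V_i)$, which is open because $\pi$ is open (\cref{rem:pi_open}). For $q\in U$, each $V_i$ meets the $N$-point fiber $\pi^{-1}(q)$, and the $V_i$ are disjoint. By pigeonhole, each $V_i$ contains exactly one point of $\pi^{-1}(q)$, and together they exhaust it. So $x\mapsto(\pi(x),i)$ for $x\in V_i\cap\pi^{-1}(U)$ is a bijection onto $U\times[N]$. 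It is continuous, since $\phi^{-1}(\tilde U\times I)=\bigcup_{i\in I}\pi^{-1}(\tilde U)\cap V_i$, and open, since $\pi$ is.

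Your ``uniqueness of lifts'' is exactly this pigeonhole. Once $N$ pairwise disjoint sets each meet an $N$-point fiber, there can be neither a second lift of $e_i$ over the same $q$ nor a lift landing in another sheet. If you want the basic sets $U_{e_i}$ themselves as sheets, you only need two facts. They shrink with $U$, and they form a neighborhood base at $e_i$ (from $e_i\in U_{e_i}$ and concatenation of trajectories). Disjointness for small $U$ then follows directly from the $V_i$, with no detour through filtration equivalences of $g_p$. Existence of lifts is precisely what the paper takes from \cref{rem:pi_open}, and you may do the same.
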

\begin{proof}
Denote the A-canopy by $\cE = (E,B,\pi,N,\eta)$. 
We need to show that
\[(E\setminus \pi\inv(S),B \setminus S,\pi,[N])\]
is a fiber bundle. 
This means that for every $p \in B \setminus S$, we need to find a neighborhood $p \in U \subset B \setminus S$ so that the diagram 
\begin{equation*}
\begin{tikzcd}
 \pi \inv(U) \ar[r, "\phi"] \ar[d,"\pi"'] & U \times [N] \ar[dl, "{\text{proj}_1}"] \\ 
U & 
\end{tikzcd}
\end{equation*}
commutes, where $\phi \colon \pi \inv(U) \to U \times F$ is a homeomorphism.

So, fixing $p \in B\setminus S$, we start by constructing $U$.
Since by \cref{prop:Num_Pairs_Constant}, $\pi\inv(p)$ has a constant number of points, we denote this by  $\pi\inv(p) = \{e_1,\ldots,e_{N} \}$.
By construction, these are all Hausdorff points, so there exists $V_i \ni e_i$ which is open in $E$ for $i \in [N]$ with $V_i \cap V_j = \emptyset$ for $i \neq j$. 
By \cref{rem:pi_open}, $\pi$ is an open map, so $\pi(V_i)$ is open in $B$ for all $i$. 
Then we define 
\[U =\bigcap_{i \in [N]} \pi(V_i)\]
which is an open set in $B$ containing $p$.
By construction, $\pi\inv(U) = \coprod V_i$ since the sets are pairwise disjoint. 
We define $\phi\colon \pi \inv(U) \to U \times [N]$ by $x \mapsto (\pi(x),i)$ if $x \in V_i$ and note that the $i$ is unique by assumption.

Next, we show that $\phi$ is a  homeomorphism. 
To see that it is injective, say $x$ and $y$ are in $\pi\inv(U)$ such that $\phi(x)=\phi(y) = (p',i)$. 
Because the second coordinate is the same, $x,y \in V_i$. 
Since the first coordinate is the same, $\pi(x) = \pi(y) = p' \in U$.
By  \cref{prop:Num_Pairs_Constant}, $|\pi\inv(p')|=N$, and so by the pigeonhole principle, there is a $V_j$ such that $V_j \cap \pi\inv(p') = \emptyset$. 
But this contradicts $p' \in U =\bigcap \pi(V_i) $, so $x=y$ proving injectivity.
For surjectivity, let $(p',i) \in U \times [N]$ be given. 
Again, $V_i \cap \pi\inv(p') \neq \emptyset$ since $p' \in U = \bigcap \pi(V_i)$, so let $x \in V_i \cap \pi\inv(p')$. 
Then $\phi(x) = (p',i)$ as required for surjectivity. 

Finally, we need to ensure that $\phi$ is continuous with a continuous inverse. 
Since $U \times [N]$ has the product topology and $[N]$ has the discrete topology, an open set is of the form $\tilde U \times I$  where $\tilde U \subset U$ is open in $B$ and $I\subset[N]$ is any set. 
Then 
\[ 
\phi \inv (\tilde U \times I) = \displaystyle\bigcup_{i\in I} (\pi\inv(\tilde U) \cap V_i).
\]
Since $\pi$ is continuous, $\pi\inv(\tilde U)$ is open in $E$. 
Thus $\phi \inv (\tilde U \times I) $ consists of finite intersections and unions of open sets, and thus $\phi$ is continuous. 
Showing that $\phi$ has a continuous inverse is equivalent to showing that it is an open map. 
Let $W \subset \pi \inv(U)$ be open in $E$. 
As $\pi$ is an open map (\cref{rem:pi_open}), $\pi(W)$ is open in $B$.  
So, 
\[\phi(W) = \displaystyle\bigcup_{i \in [N]} \left( \pi(W \cap V_i) \times \{i\}\right)\]
is open, thus $\phi$ is an open map and so $\phi$ has a continuous inverse.
Taken together, $\phi$ is a homeomorphism, so $\cE \setminus S$ is a fibered A-canopy.

Assume $S' \subsetneq S$ exists where $\cE \setminus S'$ is fibered. 
Then there is a $p \in S \setminus S'$ for which $\pi \inv(p)$ has a non-Hausdorff point $x$. 
In particular, this means that for the neighborhood $U \ni p$ given by the fiber bundle, $\pi \inv(U)$ contains $x$ and thus is not a Hausdorff space. 
However, $U \subset B$ was assumed to be Hausdorff. 
But then the given map $\phi\colon\pi \inv(U) \to U \times [N]$ is a homeomorphism between a non-Hausdorff space and a Hausdorff space, which is a contradiction. 
Thus, $S$ is a minimal set for which $\cE \setminus S$ is fibered. 
\end{proof}

\section{D-Canopies}
\label{sec:D-Canopy}

Next, we give our definition of a diminished canopy. 
This construction is the closest to what one familiar with TDA would likely be thinking of in terms of parameterized diagrams since we will encode only the off-diagonal point behavior rather than the full augmented persistence diagrams.

\subsection{Definition of D-canopy}
\label{ssec:d-canopy-defn}

Given the augmented canopy information, we can drop the contents on the diagonal to get a structure which we call a diminished canopy. 
However, for these D-canopies, we want to enforce the behavior of points entering and leaving the diagonal. 
This leads to the following construction.

\begin{definition}
\label{def:D-canopy}
A \define{diminished canopy} (or simply \define{D-canopy}) is a collection $\cE' = (E',B,\pi', \eta')$ such that $(E',B,\pi')$ is a bundle, $\eta'\colon  E' \to H'$,  and $(\pi')\inv(p)$ is a finite discrete set for any $p \in B$.
Further, every $p \in B$ has an open neighborhood $U$ such that  
either $\pi \inv(U) = \emptyset$, or  there is a homeomorphism $\phi$
such that 
\begin{equation}
\label{eq:ball_half_space_homeomorphism}
\begin{tikzcd}
 (\pi') \inv(U) \ar[r, "\phi"] \ar[d,"\pi'"'] & \displaystyle\coprod_{i=1}^k U_i \ar[dl, "\iota"] \\ 
U & 
\end{tikzcd}
\end{equation}
commutes, where each $U_i \subset U$, and $\iota$ sends $x\mapsto x$.
In addition,  
\[
\eta'\left(\varphi\inv(\partial(U_{i}) \setminus \partial(U))\right)= \{\Delta\}
\]
where $\partial U$ and $\partial U_i$ are the boundary of each set as a subset of $B$.

Given $\cE' = (E',B,\pi',\eta')$ with $\pi'\colon E' \to B$ and $\eta'\colon B \to H'$ and a set $S \subset B$ for which 
\[\cE' \setminus S \coloneqq (E' \setminus{{\pi'}\inv(S)},B \setminus S,\pi', \eta')\]
is a D-canopy, then we call $\cE'$ an \define{approximate D-canopy}.
\end{definition}

\begin{figure}[h]
    \centering
    \resizebox{.5\linewidth}{!}{
\begin{tikzpicture}
\def\firstball{lizRed} 
\def\firstarc{lizRed} 
\def\secondball{lizDkPur} 
\def\secondarc{lizDkPur} 
\def\thirdball{lizBlue} 
\def\thirdarc{lizBlue} 
\node[scale=2] (doubleLoopLabel) at (2.3,6.8) {$E$};
\node (doubleLoop) at (0,5) {
\begin{tikzpicture}
\begin{axis}[
    hide axis,
    view={-40}{45}, 
    width=8cm,
    height=8cm,
]
\def\R{5} 
\def\w{1.0} 
\addplot3[
    samples y=0,
    thick,
    samples=60,
    domain=0:360,
    variable=x,
    black, 
] (
    {(\R + \w*cos(x/2)) * cos(x)},
    {(\R + \w*cos(x/2)) * sin(x)},
    {\w*sin(x/2)}
);
\addplot3[
    samples y=0,
    thick,
    samples=70,
    domain=0:200,
    variable=x,
    black, 
] (
    {(\R - \w*cos(x/2)) * cos(x)},
    {(\R - \w*cos(x/2)) * sin(x)},
    {-\w*sin(x/2)}
);

\addplot3[
    samples y=0,
    line width=3pt,
    samples=60,
    domain=210:233,
    variable=x,
    \firstarc, 
] (
    {(\R + \w*cos(x/2)) * cos(x)},
    {(\R + \w*cos(x/2)) * sin(x)},
    {\w*sin(x/2)}
);

\addplot3[
    samples y=0,
    line width=3pt,
    samples=70,
    domain=0:30,
    variable=x,
    \secondarc, 
] (
    {(\R - \w*cos(x/2)) * cos(x)},
    {(\R - \w*cos(x/2)) * sin(x)},
    {-\w*sin(x/2)}
);
\addplot3[
    samples y=0,
    line width=3pt,
    samples=60,
    domain=20:37,
    variable=x,
    \secondarc, 
] (
    {(\R + \w*cos(x/2)) * cos(x)},
    {(\R + \w*cos(x/2)) * sin(x)},
    {\w*sin(x/2)}
);

\addplot3[
    samples y=0,
    line width=3pt,
    samples=70,
    domain=147:170,
    variable=x,
    \thirdarc, 
] (
    {(\R - \w*cos(x/2)) * cos(x)},
    {(\R - \w*cos(x/2)) * sin(x)},
    {-\w*sin(x/2)}
);
\addplot3[
    samples y=0,
    line width=3pt,
    samples=60,
    domain=135:165,
    variable=x,
    \thirdarc, 
] (
    {(\R + \w*cos(x/2)) * cos(x)},
    {(\R + \w*cos(x/2)) * sin(x)},
    {\w*sin(x/2)}
);
\end{axis}
\end{tikzpicture}
};

\node[scale=2] (baseSpaceLabel) at (2.7,-.5) {$B$};
\node (baseSpace) at (0,0) {
\begin{tikzpicture}
\begin{axis}[
    hide axis,
    view={40}{40},
    width=8cm,
]
\addplot3 [
    samples=50,
    domain=0:360,
    samples y=0,
    thick
] (
    {cos(x)},
    {sin(x)},
    {0});

\addplot3 [
    samples=50,
    domain=300:320,
    samples y=0,
    line width=3pt,
    \firstarc
] (
    {cos(x)},
    {sin(x)},
    {0});
\addplot3 [
    samples=50,
    domain=100:120,
    samples y=0,
    line width=3pt,
    \secondarc
] (
    {cos(x)},
    {sin(x)},
    {0});
\addplot3 [
    samples=50,
    domain=205:235,
    samples y=0,
    line width=3pt,
    \thirdarc
] (
    {cos(x)},
    {sin(x)},
    {0});

\end{axis}
\end{tikzpicture}
};
\node[scale=2] (DgmLabel) at (8,2) {$H$};
\node (Dgm) at (7,3) {
\adjustbox{scale=.7}{
\begin{tikzpicture}
\tikzset{
dot/.style = {circle, fill, minimum size=#1,inner sep=0pt, outer sep=0pt},
dot/.default = 6pt 
}
\def\xmax{6}
\def\ymax{6}
\draw[->, thick=2] (0,0) -- (\xmax + 0.5, 0);
\draw[->, thick=2] (0,0) -- (0, \ymax + 0.5);
\draw[dashed] (0,5.5) -- (\xmax + 0.5, 5.5);
\node[below left,scale=1.8] at (\xmax + 0.5,5.5) {$\infty$};
\draw[thick=2,dashed] (0,0) -- (\xmax, \ymax);
 \draw[black,line width=1.5pt]
    (2,2)
    .. controls (0.9,1.6) and (0.9,2.6) .. (1.4,3.0)
    .. controls (1.9,3.5) and (1.4,4.5) .. (2.2,4.8)
    .. controls (2.8,5.0) and (3.2,4.6) .. (2.9,4.2)
    .. controls (2.7,3.9) and (2.9,3.6) .. (3.3,3.6)
    .. controls (3.7,3.8) and (4.3,4.5) .. (3.9,4.8)
    .. controls (3.5,5.1) and (4.2,5.4) .. (4.6,5.2)
    .. controls (4.8,5.0) and (4.8,5.0) .. (4.8,4.8)
;
    
\node[fill=\secondball,draw=\secondball,regular polygon, regular polygon sides=3, scale=0.5] at  (1.1,2.2) {};
\node[fill=\thirdball,draw=\thirdball,aspect=1, scale=0.8] at (1.37,3) {};
\draw[fill=\firstball,draw=\firstball] (2,4.7) circle[radius=3pt];
\node[fill=\secondball,draw=\secondball,regular polygon, regular polygon sides=3, scale=0.5] at  (2.87,3.8) {};
\node[fill=\thirdball,draw=\thirdball,aspect=1, scale=0.8] at (4.5,5.24) {};
\end{tikzpicture}
}
};

\draw[->, thick=1.5] (doubleLoop) to [out=-52,in=35] (baseSpace);
\node[right,scale=2] (gammaBLabel) at (1.9,2) {$\pi$};
\draw[->, thick=1.5] (doubleLoop) to [out=10,in=145] (Dgm);
\node[right,scale=2] (gammaHLabel) at (2.9,5.9) {$\eta$};

\draw[fill=\firstball,draw=\firstball] (0,-1) circle[radius=3pt];
\node[fill=\secondball,draw=\secondball,regular polygon, regular polygon sides=3, scale=0.5] at (0.8,0.95) {};
\node[fill=\thirdball,draw=\thirdball,aspect=1, scale=0.8] at (-2.25,0) {};
\draw[fill=\firstball,draw=\firstball] (0,4.95) circle[radius=3pt];
\node[fill=\secondball,draw=\secondball,regular polygon, regular polygon sides=3, scale=0.45] at  (1.1,5.7) {};
\node[fill=\secondball,draw=\secondball,regular polygon, regular polygon sides=3, scale=0.45] at  (1.1,6.7) {};
\node[fill=\thirdball,draw=\thirdball,aspect=1, scale=0.8] at (-1.7,3.6) {};
\node[fill=\thirdball,draw=\thirdball,aspect=1, scale=0.8] at (-2,6) {};

\node[fill=black,draw=\thirdball,aspect=1, scale=0.8] at (2.15,6) {};
\node[aspect=1, scale=1.5] at (2.6,6) {$e_1$};

\node[fill=black,draw=\thirdball,aspect=1, scale=0.8] at (-.8,2.85) {};
\node[aspect=1, scale=1.5] at (-.3,2.85) {$e_2$};

\end{tikzpicture}
}
    \caption{A cartoon of a D-canopy (\cref{def:D-canopy}), modified from \cref{fig:A-canopy-example}. }
    \label{fig:D-canopy-example}
\end{figure}

Consider \cref{fig:D-canopy-example}, with a cartoon example of a D-canopy, analogous to the A-canopy of \cref{fig:A-canopy-example}. 
The portion of the image of $E$ in $H$ that was contained in the diagonal has been deleted to construct $E'$. 
Note that we still retain the boundary points of $E'$. 
In particular, the points $e_1$ and $e_2$ are the closure of the set for which $E$ maps to an off-diagonal point; here $\eta'(e_1)=\eta'(e_2) = \Delta$.

The idea behind this construction is to behave essentially like a fiber bundle where possible. 
However, the inverse image under $\pi'$ of an open set $U$ might not give a full homeomorphic copy of $U$. 
This happens when a portion of the points in the relevant diagrams goes into the diagonal. 
Ensuring that the portions of $U$ that are missing in the inverse image correspond only to things entering the diagonal is enforced by the requirement $\eta'\left(\varphi\inv(\partial(U_{i}) \setminus \partial(U))\right)= \{\Delta\}$. 
In essence, this means that points that are in the boundary of $U_i$ that were not already present in $U$ are required to go to the diagonal.

Our main goal of this section is to show that any parameterized filtration function has a D-canopy. 
To relate a potential D-canopy to a fibered filtration function $g$, we have the following definition, which parallels \cref{def:A-induced}.

\begin{definition}\label{def:D-induced}
Given $G\colon B \to \DgmDsp$, the D-canopy $\cE' = (E',B,\pi', \eta')$ \define{is an induced D-canopy of $G$} if for every $p \in B$, 
\[G_D(p) = \eta'((\pi')\inv(p)) \setminus \{ \Delta\} \]
considered as a multiset. 
The \define{induced $B$-parameterized diagram} of a D-canopy $\cE' = (E',B',\pi',\eta')$ is 
\[ 
\begin{matrix}
G_D^{\cE'} \colon & B & \longrightarrow & \DgmDsp\\ 
& p & \longmapsto & \eta'((\pi')\inv(p)) \setminus \{  \Delta\}.
\end{matrix}
\]
\end{definition}

First, note the care taken with the diagonal. 
As we defined D-diagrams to have only strictly off-diagonal points, the example points $e_1$, $e_2$ from \cref{fig:D-canopy-example} would not map to something explicitly in the diagrams over their respective $p \in B$. 
This is remedied by deleting $\{\Delta\}$ from consideration in each case. 
Also, note that in a similar fashion to \cref{rem:MetaDiagram-A-canopies}, we always have an induced $B$-parameterized diagram from a D-canopy; 
however, the existence of a D-canopy for an input fibered filtration function is the focus of \cref{thm:D-bundle-decomp}.

\subsection{Structure theorem for D-canopies}

Now, we can show that any fibered filtration function has a D-canopy so long as we get rid of the structural points. 

\begin{definition}
\label{def:Structural-Coincidental_D}
Fix a Hausdorff space $B$, an D-canopy $\cE' = (E',B,\pi', \eta')$ and the induced  $G_{\cE'}\colon B \to \DgmDsp$, $p \mapsto \eta'((\pi') \inv(p))$ (\cref{def:D-induced}). 
A point $(b,d) \in G_{\cE'}(p)$ with  $|(\eta') \inv(b,d)|\geq 2$ is called \define{coincidental} if all points in 
$(\eta')\inv(b,d) \cap (\pi')\inv(p)$ are Hausdorff in $E$.
Otherwise, $(b,d)$ is called \define{structural}. 
\end{definition} 

\begin{theorem}[D-canopies]
\label{thm:D-bundle-decomp}
Fix a fibered filtration function $g\colon K \times B \to \cD$ with induced diagram function $G\colon B \to \DgmDsp$.
Then $G$ has an induced, approximate D-canopy $\cE'$ with  
\[S = \{p \in B \mid G(p) \text{ has structural points}\}.\] 
\end{theorem}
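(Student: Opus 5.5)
The plan is to build $\cE'$ from the A-canopy $\cE = (E,B,\pi,N,\eta)$ of \cref{thm:A-bundle-decomp}, restricted away from $S$, and read off the required local structure from \cref{thm:A-bundle-decomp-generic}. Let $q\colon H \to H'$ be the quotient collapsing the diagonal to $\Delta$; it is continuous. Set $E^\circ = \{e \in E \mid \eta(e) \in \mathring H\}$, the set of points with $g_p(\sigma) \neq g_p(\tau)$, and let $E'$ be the closure of $E^\circ$ in $E$. Take $\pi' = \pi|_{E'}$ and $\eta' = q\circ \eta|_{E'}$, which is continuous as a composition of continuous maps. For each $p$, $(\pi')\inv(p) \subseteq \pi\inv(p)$, which has $N$ points by \cref{prop:Num_Pairs_Constant}, so the fibers are finite. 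The induced condition then holds: $\eta'((\pi')\inv(p)) \setminus \{\Delta\}$ consists exactly of the images of the points of $E^\circ$ over $p$, and by \cref{def:dgms_from_pairing} and \cref{thm:A-bundle-decomp} these are precisely $G_D(p)$ with multiplicity.

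For the local condition, fix $p \in B \setminus S$. \Cref{thm:A-bundle-decomp-generic} gives a trivializing neighborhood $U \subseteq B\setminus S$ and a homeomorphism $\phi\colon \pi\inv(U) \to U \times [N]$, with sheets $V_i = \phi\inv(U\times\{i\})$. Each $V_i$ is homeomorphic to $U$ via $\pi$. Put $U_i = \pi(E' \cap V_i)$, discarding the indices for which this set is empty; if every index is discarded, then $(\pi')\inv(U) = \emptyset$. Since $\pi|_{V_i}$ is a homeomorphism onto $U$, it restricts to a homeomorphism $E'\cap V_i \to U_i$ commuting with inclusion into $U$. Taking the disjoint union over $i$ gives the map $\phi$ of \eqref{eq:ball_half_space_homeomorphism}, and the triangle commutes by construction.

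The step I expect to be the main obstacle is the boundary condition $\eta'(\phi\inv(\partial U_i \setminus \partial U)) = \{\Delta\}$, together with deciding which relative boundary points actually lie in $E'$. Inside the sheet $V_i \cong U$, the image $\pi(E^\circ \cap V_i) = \{x \in U \mid \eta(\text{sheet } i \text{ at } x) \notin \Delta\}$ is open, because $\eta$ is continuous and $\mathring H$ is open in $H$. Its closure in $U$ is $U_i$, since $V_i$ is open and closed in $\pi\inv(U)$. A point $x \in \partial U_i \setminus \partial U$ therefore lies in $U$ but outside the open set $\pi(E^\circ\cap V_i)$, so its sheet-$i$ point has $\eta$ on the diagonal, and $\eta'$ sends it to $\Delta$. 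The remaining subtleties are twofold. First, one must check that the closure of $E^\circ$ taken in $E$ agrees with the sheetwise closure; this holds because $\pi\inv(U)$ is open in $E$ and splits as the disjoint union of the $V_i$. Second, the whole construction must be restricted to $E'\setminus(\pi')\inv(S)$ so that it is a D-canopy over $B\setminus S$. This gives the approximate D-canopy.

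Points over $S$ are deliberately left uncontrolled. There, non-Hausdorff structural points prevent any trivialization, exactly as in the minimality part of \cref{thm:A-bundle-decomp-generic}.
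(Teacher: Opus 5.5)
Your proposal is correct and follows essentially the same route as the paper. Both arguments take $E'$ to be the closure of the off-diagonal part of the A-canopy, restrict the trivialization from \cref{thm:A-bundle-decomp-generic} sheet by sheet to obtain the $U_i = \pi(E'\cap V_i)$, and use that $\eta^{-1}(\mathring H)$ is open to force relative boundary points onto the diagonal. Your direct version of the last step (boundary points lie outside the open set $\pi(E^\circ\cap V_i)$, whose closure in $U$ is $U_i$) is slightly more careful than the paper's proof by contradiction, and you also verify the induced-diagram condition, which the paper leaves implicit.
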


\begin{proof}
As in \cref{thm:A-bundle-decomp}, the proof is constructive where we build $\cE' = (E',B,\pi', \eta')$. 
From \cref{thm:A-bundle-decomp}, there is an approximately fibered A-canopy $\cE = (E,B, \pi, N,\eta)$ for $G_A\colon B \to \DgmAsp$, which is to say that $\cE \setminus S$ is a fibered A-canopy. 
Recall from the proof of \cref{thm:A-bundle-decomp} that we have fixed a chosen pairing $P_{\prec_p}$ for each $p \in B$ for a compatible order $\prec_p$ for  the function $g_p$.
This was used to build the space
$E = \{(p,\sigma, \tau) \mid p \in B, (\sigma,\tau) \in  P_{\prec_p} \}$ with topology basis given by  
\begin{equation*}
U_e = \{ e' = (p',\sigma',\tau') \mid \exists\,  \Gamma \colon e \rightsquigarrow e' \text{ that projects to }U \} \subseteq E
\end{equation*}
for all $e = (p,\sigma,\tau) \in E$ and neighborhood $U \ni p$.
Let 
\begin{equation*}
\tilde{E} = 
\{(p,\sigma, \tau) 
\mid p \in B, (\sigma,\tau) \in  P_{\prec_p}, g_p(\sigma) \neq g_p(\tau) \} \subseteq E
\end{equation*}
and let $E' = \mathrm{cl}(\tilde{E}) \subset E$ be the closure of $\tilde E$ in $E$. 
We give $E'$ the subspace topology from $E$.

The map $\pi'\colon E' \to B$ is given by the restriction of $\pi\colon E \to B$, the projection on the first factor. 
In order to construct the map $\eta' \colon E' \to H'$ (recall \cref{def:AllTheHs} for the forms of $H$), 
we note that the map $ \eta\colon E \to H$ sending $(p,\sigma,\tau) \to (f_p(\sigma), f_p(\tau))$ can be restricted to $E' \subseteq E$ and then post-composed with the quotient map $H \to (H / \sim) = H' = H\cup \{\Delta\}$ to get a map $\eta'\colon E' \to H'$. 
In essence, this map $\eta'$ is the same as $\eta$ when the diagram point is in $\mathring{H}$, and for any $e = (p,\sigma,\tau) \in E'$ with $g_p(\sigma) = g_p(\tau)$, we have $\eta'(e) = \Delta$. 
We will show that $\cE' \setminus S = (E' \setminus (\pi')\inv S, B \setminus S, \pi', \eta')$ is a D-canopy.

Fix a $p \in B\setminus S$.
Because $\cE \setminus S$ is a fibered A-canopy, we can assume that $U \subset B \setminus S$ and there is a homeomorphism $\phi$ such that the diagram
\begin{equation}
\begin{tikzcd}
 \pi \inv(U) \ar[r, "\phi"] \ar[d,"\pi"'] & U \times [N] \ar[dl, "{\text{proj}_1}"] \\ 
U & 
\end{tikzcd}
\end{equation}
commutes.
For each $i \in [N]$, denote 
$V_i = \phi \inv(U \times \{i\} )$ and $V_i' = V_i \cap E'$,  noting that $(\pi')\inv(U) = \coprod V_i'$. 
Define $U_i = \pi( V_i')$.
Then we have a restriction of $\phi$ to $(\pi')\inv(U)$,
\begin{equation}
\begin{tikzcd}
 (\pi') \inv(U) \ar[r, "\phi"] \ar[d,"\pi'"'] & \displaystyle\coprod_{i=1}^k U_i \ar[dl, "\iota"] \\ 
U & 
\end{tikzcd}
\end{equation}
where we abuse notation and write the restriction also as $\phi$. 
By construction, the restricted $\phi$ is still a homeomorphism.

It remains to show that $\eta'\left(\phi\inv(\partial(U_{i}) \setminus \partial(U)) \right)= \{\Delta\}$. 
Let $x \in \partial(U_{i}) \setminus \partial(U)$ and let $e = \phi\inv(x) \in V_i$, writing $e=(p_e,\sigma_e,\tau_e)$. 
If $e \in \cl(\tilde E) \setminus \tilde E$, then it must be that $g_{p_e}(\sigma_e) = g_{p_e}(\tau_e)$ and thus $\eta'(e) = \Delta$, so we are done.

Assume instead $e \in \tilde E \cap V_i$. 
Note that $\tilde E$ is open in $E$ since $E \setminus \tilde E = \eta \inv(\{(x,x) \in H \}$ is the inverse image of a closed set under a continuous map, and therefore closed.
So there is an open $W$ with $e \in W \subset \tilde E \cap V_i \subset V_i'$.
Thus $\phi(W) \subset U_i$ and contains $x = \phi(e)$, and is open because $\phi$ is a homeomorphism. 
However, since $x \in \partial U_i \setminus \partial U$ and $\phi(W) \ni x$ is an open neighborhood, there is a point $q \in \phi(W) \cap (U \setminus U_i)$ which is a contradiction to the assertion that $\phi(W) \subset U_i$, completing the proof.
\end{proof}

We conclude with an immediate corollary of the same result from the A-canopies that this construction is unique up to isomorphism which we state for completeness here.

\begin{definition}
Two D-canopies 
$\cE'_1 = (E'_1,B,\pi'_1, \eta'_1)$
and
$\cE'_2 = (E'_2,B,\pi'_2, \eta'_2)$
 are isomorphic if there is a homemorphism $\phi\colon E_1' \to E_2'$ such that the diagrams
\begin{equation*}
\begin{tikzcd}
E_1' \ar[rr,"\phi'"] \ar[dr, "\pi_1'"'] && E_2' \ar[dl,"\pi_2'"] 
& 
E_1' \ar[rr,"\phi'"] \ar[dr, "\eta_1'"'] && E_2' \ar[dl,"\eta_2'"]\\
&B &&& H
\end{tikzcd}
\end{equation*}
commute.
\end{definition}

\begin{theorem}
\label{thm:D-bundle-unique}
The D-canopy of \cref{thm:D-bundle-decomp} is unique up to isomorphism.
\end{theorem}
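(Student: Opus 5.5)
The plan is to derive this directly from \cref{thm:A-bundle-unique} by restricting the A-canopy isomorphism to the D-canopies. Suppose $\cE_1'$ and $\cE_2'$ are two D-canopies produced by the construction of \cref{thm:D-bundle-decomp}. They come from A-canopies $\cE_1=(E_1,B,\pi_1,N,\eta_1)$ and $\cE_2=(E_2,B,\pi_2,N,\eta_2)$, built from chosen pairings $P_{p,1}$ and $P_{p,2}$. By construction $E_j'=\cl(\tilde E_j)$ inside $E_j$, carrying the subspace topology, with $\pi_j'=\pi_j|_{E_j'}$ and $\eta_j'=q\circ\eta_j|_{E_j'}$, where $q\colon H\to H'$ is the quotient map. \cref{thm:A-bundle-unique} provides a homeomorphism $\phi\colon E_1\to E_2$ satisfying $\pi_2\phi=\pi_1$ and $\eta_2\phi=\eta_1$. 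I will take $\phi'=\phi|_{E_1'}$ as the candidate isomorphism.

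The first step is to show that $\phi(\tilde E_1)=\tilde E_2$. A point $e\in E_1$ lies in $\tilde E_1$ exactly when $\eta_1(e)$ is off the diagonal. Since $\eta_2(\phi(e))=\eta_1(e)$, we get $\phi(e)\in\tilde E_2$ precisely when $e\in\tilde E_1$. Applying the same reasoning to $\phi\inv$ gives equality.

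The second step uses the fact that a homeomorphism preserves closures. From the first step, $\phi(\cl(\tilde E_1))=\cl(\phi(\tilde E_1))=\cl(\tilde E_2)$, so $\phi$ restricts to a bijection $E_1'\to E_2'$. This restriction is a homeomorphism for the subspace topologies, because the restriction of a homeomorphism to a subspace is a homeomorphism onto its image.

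The third step checks that the two diagrams commute. We have $\pi_2'\phi'=\pi_2\phi|_{E_1'}=\pi_1|_{E_1'}=\pi_1'$. Similarly, $\eta_2'\phi'=q\,\eta_2\phi|_{E_1'}=q\,\eta_1|_{E_1'}=\eta_1'$. The definition writes the target of the second diagram as $H$, but the relevant target is $H'$, and commutativity holds there after composing with $q$.

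If uniqueness is meant for the approximate D-canopies $\cE_j'\setminus S$, the same restriction works. The set $S$ depends only on $G$, and $\phi$ commutes with the projections, so $\phi$ maps $(\pi_1')\inv(S)$ onto $(\pi_2')\inv(S)$. It therefore restricts to the complements as well.

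The only place needing care is the first step: the argument requires that $\tilde E_j$ be characterized purely through $\eta_j$, so that it is transported by $\phi$. This holds because $\eta_j(p,\sigma,\tau)=(g_p(\sigma),g_p(\tau))$, and the condition $g_p(\sigma)\neq g_p(\tau)$ defining $\tilde E_j$ is a statement about $\eta_j$ alone. Once this is in place, the rest is formal.
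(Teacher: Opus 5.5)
Your proposal is correct and takes the same route as the paper. The paper's proof is the one-line remark that the D-canopy is built directly from the A-canopy, so the isomorphism of \cref{thm:A-bundle-unique} restricts; you have supplied the details that make this work ($\phi(\tilde E_1)=\tilde E_2$ because $\eta_2\phi=\eta_1$, and homeomorphisms preserve closures and subspace topologies). One small correction concerns your remark on the approximate version: $S$ is defined through structural, i.e.\ non-Hausdorff, points, so it depends on the canopy's topology and not only on $G$. It is nonetheless the same set for both canopies, because $\phi$ is a homeomorphism commuting with $\pi$ and $\eta$.
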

\begin{proof}
This is an immediate result from combining \cref{thm:A-bundle-unique} with the proof of \cref{thm:D-bundle-decomp} since it is directly constructed from the given A-canopy.
\end{proof}

\begin{examplet}
\label{ex:pht}
Consider the example of \cref{fig:D-CanopyPHT}. 
Here we have an embedded graph $K$ with vertices $a$, $b$, and $c$. 
The directional transform gives a fibered filtration function where $f_v(\omega) = \langle v, \omega\rangle$ for $\omega \in \S^1$, and $v$ any of the vertices $\{a,b,c\}$. 
The function on edges is from the lower-star filtration, namely $f_{uv}(\omega) = \max \{f_u(\omega), f_v(\omega) \}$.
Choices of pairings are shown for each portion of the circle at left. 
In this example, the full A-canopy has $E$ given by a disjoint union of three circles.
The first blue pair in each list gives the infinite class. 
Since this is never on the diagonal, the entire circle shows up in both $E$ and $E'$, with image on the infinity line shown in the far right $H$. 
The second, purple pairs are always equivalent to each other because the pairings before and after a critical change of direction are filtration equivalent, resulting in a second circle in $E$. 
In this case, only the pairs in the upper left quadrant of the circle result in off-diagonal points in the diagram.
This is shown where only the top left quadrant of the second circle in $E$ is included in $E'$; we mark the portion in $E'$ with a solid line and the portion in $E\setminus E'$ with a dashed line. 
Its image comes out of the diagonal and reenters on the same path, so the image is drawn as a single curved purple line in the far right $H$. 
The third pair in each collection, written in yellow, never leaves the diagonal. 
So there is a third circle in $E$ which is not included in $E'$.
\begin{figure}[h!]
    \centering
    \begin{minipage}
    {0.5\textwidth}
    \begin{tikzpicture}[scale=1,line cap = round]
	\def\R{1.7}
	\def\tick{0.18}
	\def\annR{1.35}

	\draw[lizRed, line width=1.2pt] (0,0) circle[radius=\R];

	\draw[lizDkPur, line width=1.4pt, line cap=round]
		(-.5,-.5) -- (-.5,.5) -- (.5,.5);
	\fill[lizDkPur] (-.5,-0.5) circle (2.7pt) node[below left] {$a$};
	\fill[lizDkPur] (-.5,0.5) circle (2.7pt) node[above left] {$b$};
	\fill[lizDkPur] (0.5,0.5) circle (2.7pt) node[above right] {$c$};

	\draw[black, line width=1.1pt] (0:{\R-\tick}) -- (0:{\R+\tick});
	\node[black, anchor=west] at (0:{\R+0.25}) {$\perp ab$};

	\draw[black, line width=1.1pt] (90:{\R-\tick}) -- (90:{\R+\tick});
	\node[black, anchor=south] at (90:{\R+0.25}) {$\perp bc$};

	\draw[black, line width=1.1pt] (135:{\R-\tick}) -- (135:{\R+\tick});
	\node[black, anchor=south east] at (135:{\R+0.25}) {$\perp ac$};

	\draw[black, line width=1.1pt] (180:{\R-\tick}) -- (180:{\R+\tick});
	\node[black, anchor=east] at (180:{\R+0.25}) {$\perp ab$};

	\draw[black, line width=1.1pt] (270:{\R-\tick}) -- (270:{\R+\tick});
	\node[black, anchor=north] at (270:{\R+0.25}) {$\perp bc$};

	\draw[black, line width=1.1pt] (315:{\R-\tick}) -- (315:{\R+\tick});
	\node[black, anchor=north west] at (315:{\R+0.25}) {$\perp ac$};

	\node[align=center] at (45:{\R+\annR}) {${\color{lizBlue}(a,\star)}$\\${\color{lizLtPur}(b,ab)}$\\${\color{lizYellow}(c,bc)}$};
	\node[align=center] at (112.5:{\R+\annR}) {${\color{lizBlue}(a,\star)}$\\${\color{lizLtPur}(c,ab)}$\\${\color{lizYellow}(b,bc)}$};
	\node[align=center] at (157.5:{\R+\annR}) {${\color{lizBlue}(c,\star)}$\\${\color{lizLtPur}(a,ab)}$\\${\color{lizYellow}(b,bc)}$};
	\node[align=center] at (225:{\R+\annR}) {${\color{lizBlue}(c,\star)}$\\${\color{lizLtPur}(a,ab)}$\\${\color{lizYellow}(b,bc)}$};
	\node[align=center] at (292.5:{\R+\annR}) {${\color{lizBlue}(b,\star)}$\\${\color{lizLtPur}(a,ab)}$\\${\color{lizYellow}(c,bc)}$};
	\node[align=center] at (337.5:{\R+\annR}) {${\color{lizBlue}(b,\star)}$\\${\color{lizLtPur}(a,ab)}$\\${\color{lizYellow}(c,bc)}$};
\end{tikzpicture}
    \end{minipage}
    \begin{minipage}
    {0.4\textwidth}
    \begin{tikzpicture}[scale=1, line cap=round]
	\def\cx{-1.8}
	\def\rx{1.15}
	\def\ry{0.35}

	\def\yt{2.0}
	\def\ym{0.8}
	\def\yb{-0.4}
	\def\ybb{-2.25}

	\node[black] at (\cx,\yt+0.65) {$E'$};

	\draw[lizBlue, line width=1.5pt] (\cx,\yt) ellipse [x radius=\rx, y radius=\ry];

	\draw[lizLtPur, line width=1.0pt, dashed] (\cx,\ym) ellipse [x radius=\rx, y radius=\ry];
	\draw[lizLtPur, line width=1.5pt] ($(\cx,\ym)+(90:\ry)$)
		arc[start angle=90, end angle=180, x radius=\rx, y radius=\ry];

	\draw[lizYellow, line width=1.0pt, dashed] (\cx,\yb) ellipse [x radius=\rx, y radius=\ry];

	\draw[lizRed, line width=1.5pt] (\cx,\ybb) ellipse [x radius=\rx, y radius=\ry];
	\node[black] at (\cx,\ybb-0.65) {$B$};

	\draw[black, line width=1.0pt, ->] (\cx,\yb-0.52) -- (\cx,\ybb+0.55);
	\node[black] at (\cx+0.23,{(\yb+\ybb)/2}) {$\pi$};

	\foreach \yy in {\yt,\ym,\yb,\ybb}{
		\draw[black, line width=1pt] (\cx+\rx+0.10,\yy) -- (\cx+\rx-0.10,\yy);
		\draw[black, line width=1pt] (\cx-\rx-0.10,\yy) -- (\cx-\rx+0.10,\yy);
		\draw[black, line width=1pt] (\cx,\yy+\ry+0.10) -- (\cx,\yy+\ry-0.10);
		\draw[black, line width=1pt] (\cx,\yy-\ry-0.10) -- (\cx,\yy-\ry+0.10);

		\pgfmathsetmacro{\xL}{\rx*cos(135)}
		\pgfmathsetmacro{\yL}{\ry*sin(135)}
		\pgfmathsetmacro{\nL}{sqrt((\xL)^2 + (\yL)^2)}
		\draw[black, line width=1pt]
			({\cx + \xL + 0.10*(\xL/\nL)}, {\yy + \yL + 0.10*(\yL/\nL)})
			--
			({\cx + \xL - 0.10*(\xL/\nL)}, {\yy + \yL - 0.10*(\yL/\nL)});

		\pgfmathsetmacro{\xR}{\rx*cos(315)}
		\pgfmathsetmacro{\yR}{\ry*sin(315)}
		\pgfmathsetmacro{\nR}{sqrt((\xR)^2 + (\yR)^2)}
		\draw[black, line width=1pt]
			({\cx + \xR + 0.10*(\xR/\nR)}, {\yy + \yR + 0.10*(\yR/\nR)})
			--
			({\cx + \xR - 0.10*(\xR/\nR)}, {\yy + \yR - 0.10*(\yR/\nR)});
	}

	\begin{scope}[shift={(2.5,0.75)}]
		\draw[black, line width=1pt,dotted] (-1.30,0) -- (1.4,0);
		\draw[black, line width=1pt,dotted] (0,-1.55) -- (0,1.55);
		\draw[black, line width=1pt] (-1.25,-1.25) -- (1.35,1.35);

		\draw[black, line width=1.0pt, dashed] (-1.30,1.05) -- (1.45,1.05);
		\draw[lizBlue, line width=2.50pt] (-0.75,1.05) -- (0.75,1.05);
		\node[black] at (1.8,1) {$\infty$};

		\draw[lizLtPur, line width=2.5pt] (0,0.72) arc[start angle=90, end angle=45, radius=0.72];

		\node[black] at (0.8,-0.8) {$H$};
	\end{scope}

	\node[lizDkPur] at (0.52,2.8) {$\eta'$};

	\draw[lizDkPur, line width=1.0pt, ->] (-0.25,\yt+0.03) to[out=6, in=175] (1.30,1.8);

	\draw[lizDkPur, line width=1.0pt, ->] (-0.25,\ym+0.03) to[out=0, in=185] (2.4,1.4);
\end{tikzpicture}
    \end{minipage}
    \caption{The D-canopy (right) for an input fibered filtration function given by the PHT of the embedded graph at left. See \cref{ex:pht} for a full description.}
    \label{fig:D-CanopyPHT}
\end{figure}
\end{examplet}

\section{Implications}
\label{sec:implications}

In this section, we describe some immediate implications of the constructions above. 

\subsection{Defining a vine}

First, a goal of doing this work was to provide a mathematically rigorous definition of a vine, which is often mentioned in the context of vineyards in the literature, but always has the caveat (explicit or implicit) that it is only well defined away from points of multiplicity. 
For example, \cite{CohenSteiner2006} notes ``it is not clear whether arbitrary vineyards can be decomposed into well-defined, individual vines."
The construction of the canopy implies an answer to this question; there is no well-defined decomposition into simple paths because, at structural points, either choice of pairing is equally correct. 
In particular, \cref{ex:L_fullA-canopy} shows that while there are four 1-dimensional pieces, they are inextricably connected by the non-Hausdorff point and thus cannot be separated across that location.
Thus, we propose the following definition for a vine. 

\begin{definition}
\label{defn:vine}
    An \define{augmented (resp.~diminished) \emph{vine}}, also called \define{A-vine} (resp., \define{D-}), is a canopy $\cE$ (resp.~$\cE'$) with connected $E$ (resp.~$E'$).

    Given an $A$-canopy $\cE = (E,B,\pi,N,\eta)$ (resp.~$D$-canopy $\cE' = (E',B, \pi',\eta')$), a \define{vine decomposition} of $\cE$ (resp. $\cE'$) is a collection of $A$-vines $\{\cE_i \}_{i \in I}$ (resp.~$D$-vines $\{\cE'_i \}_{i \in I}$)  such that $E = \coprod_{i \in I} E_i$ (resp.~$E' = \coprod_{i\in I}E_i'$) with additional maps being defined by restrictions. 
    
    If $B$ and all $E_i$ (resp.~$E_i'$) are Hausdorff, we call the vine decomposition \define{regular}. 
\end{definition}

With this definition, we have the following decomposition theorem. 
\begin{proposition}
Any A-canopy (resp.~D-canopy) has a vine decomposition.  
If the B-parameterized diagram for the canopy has no points of multiplicity and $B$ is Hausdorff, then the vine decomposition is regular; and further the vines are fibered for the A-canopy.
\end{proposition}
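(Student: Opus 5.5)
The existence of a vine decomposition is a connected-components argument. Given an A-canopy $\cE=(E,B,\pi,N,\eta)$, I would write $E=\coprod_{i\in I}E_i$, where the $E_i$ are the connected components of $E$. Each $E_i$ gets the subspace topology, and I set $\pi_i=\pi|_{E_i}$ and $\eta_i=\eta|_{E_i}$; the restriction $\eta_i$ is automatically continuous. The only part of \cref{def:A-canopy} that needs care is the constant fiber cardinality, so I would read $N$ for each vine as $N_i=|\pi_i\inv(p)|$ and argue it is constant, at least on each component of $B$. If it is not, I would interpret a vine as a canopy over $\pi(E_i)$.

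To see that $N_i$ is locally constant, I would use \cref{rem:pi_open}. Each $E_i$ is a union of basis sets $U_e$, since basis sets are built from trajectories and are therefore path-connected-like. Then $\pi(E_i)$ is open, and fiber counts are locally constant wherever the basic sets separate fiber points. The D-canopy case is handled the same way: decompose $E'$ into connected components and restrict $\pi'$ and $\eta'$. The local homeomorphism condition \eqref{eq:ball_half_space_homeomorphism} restricts to each component, since a component meets $(\pi')\inv(U)$ in a union of some of the pieces $U_i$, after possibly shrinking $U$ to be connected. The boundary condition on $\eta'$ is inherited directly.

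For regularity, suppose $G$ has no points of multiplicity and $B$ is Hausdorff. By \cref{def:Structural-Coincidental}, structural points require $|\eta\inv(b,d)|\ge 2$, so there are none, and the set $S$ of \cref{thm:A-bundle-decomp-generic} is empty. That theorem then makes $\cE$ a fibered A-canopy, i.e. a covering map with finite discrete fiber. Over a Hausdorff base, a covering space is Hausdorff. To show this, take two points $x\neq y$ in $E$. If $\pi(x)\neq\pi(y)$, separate their images in $B$ and pull back. If $\pi(x)=\pi(y)$, they lie in different sheets $V_i$ of a trivializing neighborhood. Hence each $E_i$ is Hausdorff and the decomposition is regular. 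Each $E_i$ is a union of sheets over trivializing neighborhoods, so $(E_i,B,\pi_i,[N_i])$ is again a fiber bundle, and the vines are fibered. For the D-canopy, $E'\subseteq E$ carries the subspace topology, so it is also Hausdorff, giving regularity there too.

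The main obstacle is the hidden requirement that each $E_i$ be a canopy with constant fiber size $N_i$ over all of $B$. In the D-case this is not required, since D-canopy fibers need not be constant. In the A-case without multiplicity, it follows from the covering structure when $B$ is connected, and in general I would apply this component-wise on $B$. For the general, non-regular A-case, I would try to prove that $\pi|_{E_i}$ is surjective onto a component of $B$ with constant fiber count. This would use the openness of $\pi$ together with closedness of $\pi(E_i)$, obtained by extending trajectories across limit points via \cref{prop:compatible_pairing}. This is the step I expect to be hardest.
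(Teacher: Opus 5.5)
You take the same first step as the paper: decompose into connected components. Your regularity step, however, has a genuine gap.

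\textbf{The D-case regularity step.} You get Hausdorffness of $E$ from \cref{thm:A-bundle-decomp-generic} with $S=\emptyset$. That requires the \emph{augmented} diagram to be multiplicity-free, and the canopy to be the constructed one. You then pass it to the D-case through $E'\subseteq E$. But for a D-canopy the hypothesis only concerns off-diagonal points, and coinciding diagonal pairs break the transfer. On $L$ over $B=\R$, set $g_t(a)=2-|t|$, $g_t(b)=2-2|t|$, $g_t(c)=2+|t|$, $g_t(d)=2+2|t|$. The D-diagram never repeats a point. Yet the two $1$-dimensional pairs over $t=0$, both at $(2,2)$, are non-separable by the order-cycling of \cref{ex:L_fullA-canopy}, and both lie in $\cl(\tilde E)=E'$.

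\textbf{The missing idea.} It is the observation the paper's one-line proof rests on. If $x\neq y$ cannot be separated, continuity of $\pi$ and $\eta$ into the Hausdorff spaces $B$ and $H$ forces $\pi(x)=\pi(y)$ and $\eta(x)=\eta(y)$. So $x,y$ witness a multiplicity of $G(\pi(x))$. This makes $E$, hence every vine, Hausdorff for any A-canopy without multiplicities, and no structure theorem is needed. A genuine D-canopy is Hausdorff outright: $(\pi')\inv(U)$ is open and homeomorphic over $U$ to $\coprod_i U_i$, so distinct points of one fiber lie in different pieces. You do need \cref{thm:A-bundle-decomp-generic} for the claim that the vines are fibered. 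There your covering argument works for the constructed canopy once $B$ is connected and locally connected, so that components are unions of sheets with constant fiber count.

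\textbf{Constant fiber count.} You are right that ``take components'' ignores the requirement that each vine be an A-canopy, with constant fiber count over all of $B$; the paper's proof passes over this too. But the plan you sketch for the non-regular case cannot work at the generality of the statement. Take $B=[0,1]$, $E=B\times[2]$ with the discrete topology, $\pi$ the projection and $\eta(p,i)=(0,i)$. This satisfies \cref{def:A-canopy} and has no multiplicities. Yet its components are single points, none of which is an A-canopy over $B$, and $E$ is Hausdorff but not fibered. So both the existence claim and the fiberedness claim must be restricted to the canopies of \cref{thm:A-bundle-decomp} over a connected, locally path-connected $B$. Alternatively, vines can be read as canopies over $\pi(E_i)$, as in your fallback. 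Even then, your ``union of basis sets'' step is unsupported: trajectories are only required to have continuous $\Gamma_B$, not to be paths in $E$. Moreover, \cref{rem:pi_open} tacitly uses $\pi(U_e)=U$, which fails unless $U$ is path-connected.
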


\begin{proof}
The first statement is immediate from taking $E$ (resp.~$E'$) to be the disjoint union of its connected components. 
The second comes from the fact that the non-Hausdorff points in the A- or D-canopies are a subset of points of multiplicity.
\end{proof}

\subsection{Defining monodromy}

The canopy setting also allows us to give an explicit definition of what we mean by monodromy in $B$-parameterized diagrams.

\begin{definition}
In both A- and D-canopies, $k$-monodromy is defined by a subset $M \subset B$ with $B$ Hausdorff and $M$ homeomorphic to $\S^1$ with $\{C_i\}_{i \in I}$ connected components of $\pi\inv(M)$ which are Hausdorff and 
\[
k = \max_{ C_i}
\sup_{p\in B}
\left|
\pi\inv(p) \cap C_i
\right|.
\]
If $k \geq 2$, we say the canopy has non-trivial monodromy; otherwise we say it has trivial monodromy. 
\end{definition}

Note that in the case of a fibered A-canopy, this definition exactly coincides with a connected component of $\pi \inv(M)$ being a degree $k$ covering map of the circle $M$.
However, this equivalence does not apply in the case of D-canopies. 

\begin{proposition}
\label{prop:no_monodromy_for_fib_spheres}
A fibered A-canopy with $B=\S^d$, for $d\geq 2$, has no non-trivial monodromy. 
\end{proposition}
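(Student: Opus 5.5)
The plan is to reduce the statement to the classification of covering spaces of spheres recalled in \cref{ssec:bundles}, applied \emph{not} to $B$ but to the loop $M$ itself. Let $\cE=(E,\S^d,\pi,N,\eta)$ be a fibered A-canopy, so $(E,\S^d,\pi,[N])$ is a fiber bundle with discrete fiber, i.e.\ an $N$-sheeted covering space. Fix any $M\subset \S^d$ homeomorphic to $\S^1$, and let $C$ be a connected component of $\pi\inv(M)$. We must show that $|\pi\inv(p)\cap C|\le 1$ for every $p$, i.e.\ that $k\le 1$.

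\textbf{Step 1 (global triviality of the cover).} Since $\S^d$ is simply connected for $d\ge 2$ and locally path-connected, every connected component $E_j$ of $E$ is a path-connected covering of $\S^d$, hence by \cite[Thm.~1.38]{Hatcher2002} it is isomorphic to the trivial cover $\Id\colon\S^d\to\S^d$. Because each point of the fiber lies in exactly one component, there are exactly $N$ components. So we get a homeomorphism $\Phi\colon E\to \S^d\times[N]$ with $\mathrm{proj}_1\circ\Phi=\pi$. I will spell out that each component is open and closed, since components of a covering of a locally path-connected space are open. I will also note that $E$ is locally path-connected, so that components and path components agree.

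\textbf{Step 2 (restriction to $M$).} Under $\Phi$, $\pi\inv(M)$ corresponds to $M\times[N]$. Its connected components are the sheets $M\times\{i\}$, since $M$ is connected and $[N]$ is discrete. Hence each $C_i$ meets every fiber $\pi\inv(p)$, $p\in M$, in exactly one point, and meets no fiber over $p\notin M$. Each $C_i$ is also Hausdorff, being homeomorphic to $M\subset\S^d$, so the hypotheses in the monodromy definition are satisfied. Therefore $\sup_p|\pi\inv(p)\cap C_i|=1$ for all $i$, so $k=1$ and the monodromy is trivial.

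\textbf{Main obstacle.} The main point of care is conceptual rather than computational. A circle $M$ does admit nontrivial connected covers, namely the degree-$n$ maps recalled earlier. One must therefore argue that the covering $\pi\inv(M)\to M$ is the \emph{restriction} of a covering of $\S^d$, so it is pulled back from the trivial cover and cannot be one of these nontrivial ones. Equivalently, the monodromy action of $\pi_1(M)$ on a fiber factors through $\pi_1(\S^d)=0$. I will phrase it this way as a cross-check. I will also verify the routine facts needed: $\S^d$ is path-connected and locally path-connected, and the classification applies to each component, not just to a connected $E$. The case $N=0$ is vacuous.
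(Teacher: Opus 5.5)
Your proposal is correct and follows the same route as the paper. Both arguments observe that a fibered A-canopy over $\S^d$ is a covering space, that covers of $\S^d$ for $d\geq 2$ are trivial, and hence that every component of $\pi^{-1}(M)$ meets each fiber in at most one point. Your version is more careful than the paper's one-line proof: it treats a disconnected $E$ component by component, and it explicitly justifies why the restricted cover $\pi^{-1}(M)\to M$, being pulled back from the trivial cover of $\S^d$, cannot be one of the nontrivial circle covers.
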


\begin{proof}
A fiber bundle with discrete fibers is a covering space, and as discussed in \cref{ssec:bundles}, the only covering space for a sphere of dimension at least 2 is the trivial cover. 
So by definition, the maximum number of points over $p$ in any connected component of $\pi\inv(M)$ is 1, and the result follows.
\end{proof}

Note that there is not an equivalent version of this proposition for D-canopies. 
Suppose there is an A-canopy over $\S^d$ with monodromy, but the structural points required by the proposition above are contained in the diagonal. 
Then the resulting D-canopy obtained by deleting the diagonal points would remove these issues.
So we cannot promise that a D-canopy over a higher dimensional sphere does not have monodromy.

\begin{examplet}
\label{ex:AbbyExampleFunction-v2}
Note that by \cref{prop:no_monodromy_for_fib_spheres} and \cref{thm:A-bundle-decomp-generic}, an A-canopy of $\S^d$ for $d\geq 2$ with monodromy implies the existence of a structural point in the canopy. 
However, we can construct an example with a structural point in the A-canopy but no monodromy by modifying \cref{ex:L_2-param}. 
Using the same simplicial complex $L$, we use the same function $f_\sigma$ for all simplices except $c$ and $d$, for which we define setting $r = \sqrt{x^2+y^2}$,
\[
\begin{aligned}
f_c(x,y) &= \sin^2\left(\frac{r}{1+r^2}\right)+4;\,\\
f_d(x,y) &= - \sin^2\left(\frac{r}{1+r^2}\right)+4
\end{aligned}
\]
This function has the property that $f_c(0,0)=f_d(0,0)=4$, and $f_c(x,y)<f_d(x,y)$ for all other points; see \cref{fig:AbbyExamplePosets-v2}. 
In this case, every point which is not on the $x$-axis has a total compatible ordering for $f_{x,y}$: $a \to b \to c \to d$ below the line and $b \to a \to c \to d$ above the line.  
Thus there are unique pairings available in these portions, and crossing the $x$-axis gives a unique combinatorial equivalence, so the result is two distinct sheets in $E$. 
However, at the origin, we have a combinatorial equivalence since we have both pairings $P_1=\{(b,c),(a,d)\}$ and $P_2=\{(a,c),(b,d)\}$, and, for example, $(b,c)$ can be mapped to $(a,c)$ or to $(b,d)$.
Thus the two points over the origin are always in the same open set. 
Hence the origin is still a structural point, but the A-canopy does not have monodromy.
\end{examplet}

We note that the above example still has trivial monodromy even if we perturb the filtration function, so that there is a whole circle of parameters for which $c$ and $d$ have the same filtration value. 

\begin{figure}[h]
\centering
\begin{minipage}{.2\textwidth}
\centering 

\includegraphics[width =.5\textwidth,align=c]{Fig/AbbyExample-justL.pdf}

\includegraphics[width = \textwidth, align = c]{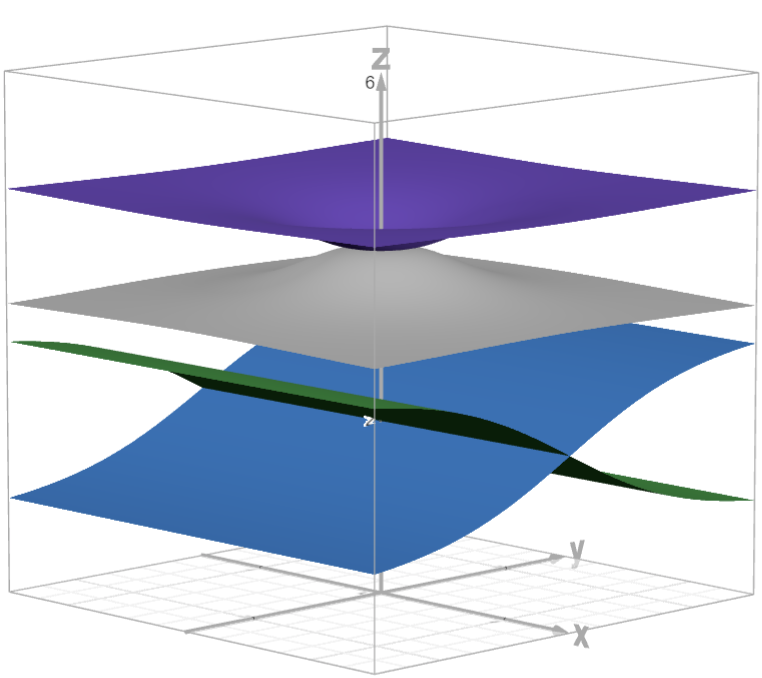}
\end{minipage}
\quad
\includegraphics[width=0.35\linewidth,align=c]{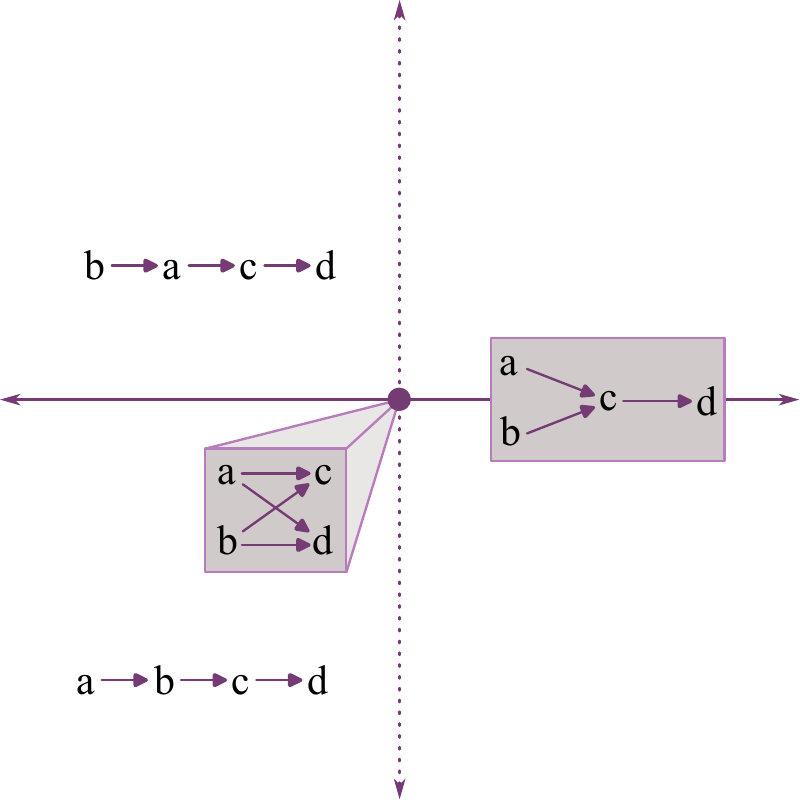}
\includegraphics[width=0.3\linewidth,align=c]{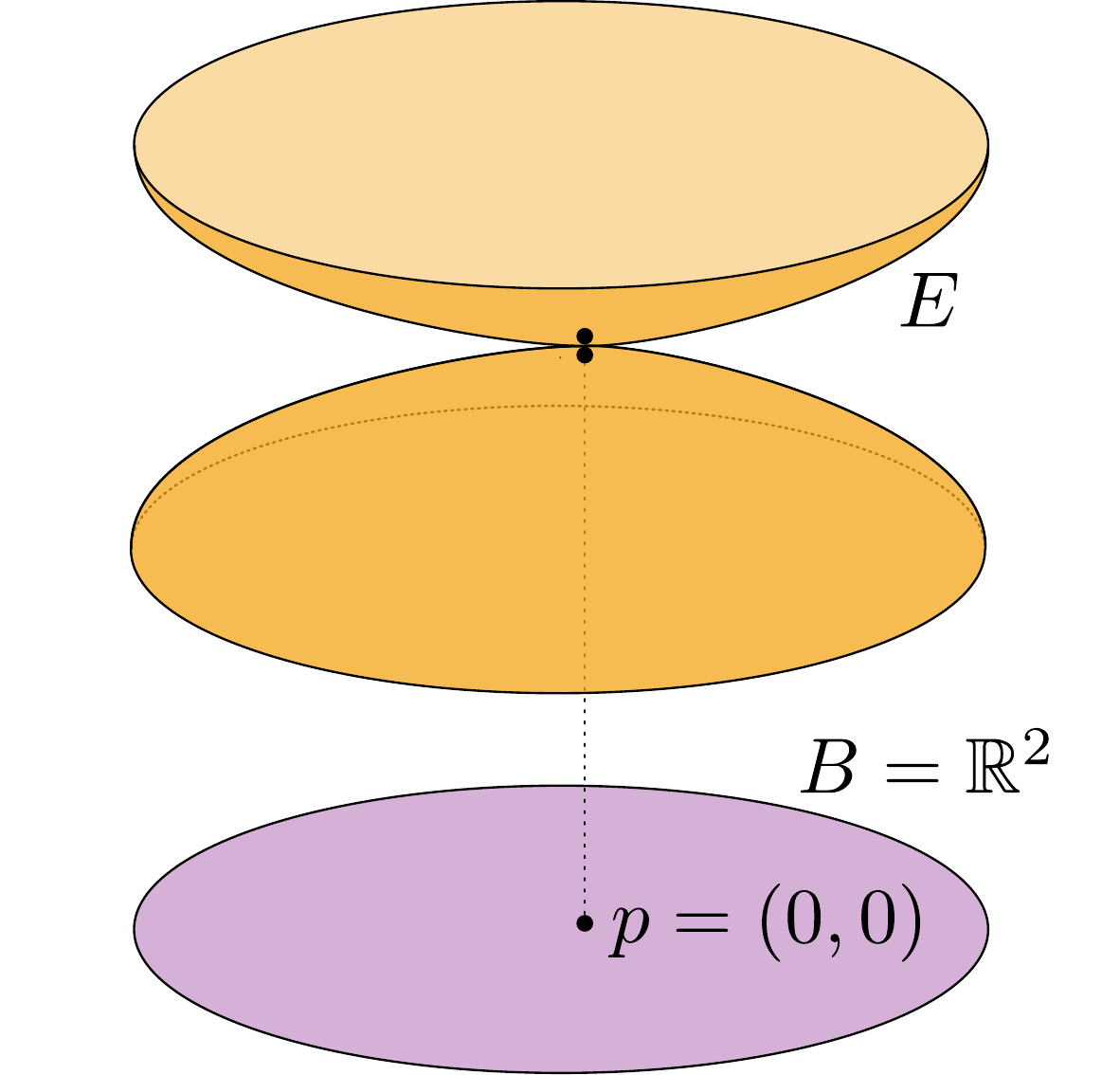}
    \caption{Left is the running example simplicial complex $L$, followed by a graph of the function in \cref{ex:AbbyExampleFunction-v2} with $f_a$ in blue, $f_b$ in green, $f_c$ in purple, and $f_d$ in grey; see \href{https://www.desmos.com/3d/pkbu1iv7lz}{desmos.com/3d/pkbu1iv7lz} for an interactive plot.
    Next is the compatible order for each region, and finally the resulting A-canopy which has a pair of non-Hausdorff points above the origin.
    }
    \label{fig:AbbyExamplePosets-v2}
\end{figure}

\subsection{The canopy of a persistent homology transform}

The persistent homology transform provides a wealth of examples for constructing fibered filtration functions to study. 
Indeed, we have \cref{ex:pht}; but also \cite[Example 2.24]{Arya2024} showing that an embedded spiral has a PHT that exhibits monodromy.
In particular, there is broad interest in understanding what gives rise to monodromy in these settings, as well as what properties of an embedded shape promise no monodromy is possible.
For example, \cite{Arya2024} proves that star-shaped objects in 2D cannot have monodromy, but this is also not a complete characterization and does not extend to higher-dimensional shapes.

For completeness, we recall that the directional transform of an embedded simplicial complex $K \subset \R^d$ is a fibered filtration function 
\begin{equation*}
\begin{matrix}
g \colon & K \times \S^{d-1} & \longrightarrow & \R\\ 
& (\sigma, \omega) & \longmapsto & \max_{v \in \sigma} \langle v, \omega\rangle
\end{matrix}
\end{equation*}
where we abuse notation to write $v$ as a vertex of $K$, and also $v \in \R^n$ is the coordinates of the embedding of $v$. 
Then the PHT of $K \subset \R^d$ is the induced diagram function, traditionally viewed as the D-diagram version 
\begin{equation*}
\begin{matrix}
\mathrm{PHT}(K) \colon & \S^{d-1} & \longrightarrow & \DgmDsp\\ 
& \omega & \longmapsto & \DgmD(g_\omega)
\end{matrix}
\end{equation*}
but of course we can also define the A-diagram version
\begin{equation*}
\begin{matrix}
\mathrm{PHT}_A(K) \colon & \S^{d-1} & \longrightarrow & \DgmAsp\\ 
& \omega & \longmapsto & \DgmA(g_\omega).
\end{matrix}
\end{equation*}
We say that an embedded simplicial complex exhibits monodromy if the A- or D-canopy of $\mathrm{PHT}$ or $\mathrm{PHT_A}$ has monodromy.

As an immediate consequence of our setting, we can use \cref{prop:no_monodromy_for_fib_spheres}
to show that there is a tight relationship between monodromy and the existence of structural points.

\begin{corollary}
Fix a simplicial complex $K$ embedded in $\R^d$ for $d \geq 3$.  
If the  A-canopy of $\mathrm{PHT_A}$ is fibered, then there is no monodromy. 
\end{corollary}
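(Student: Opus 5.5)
The corollary is a direct specialization of \cref{prop:no_monodromy_for_fib_spheres}, so the plan is to check that the directional transform fits its hypotheses. First, the directional transform $g\colon K \times \S^{d-1} \to \R$, $g(\sigma,\omega) = \max_{v\in\sigma}\langle v,\omega\rangle$, is a fibered filtration function in the sense of \cref{ssec:fiberedFiltFcn}. For each fixed $\omega$, $g_\omega$ is a lower star filtration, hence a filtration function. For each fixed $\sigma$, $g_\sigma$ is a finite maximum of continuous linear functionals restricted to the sphere, hence continuous. So \cref{thm:A-bundle-decomp} applies and produces the A-canopy $\cE = (E,\S^{d-1},\pi,N,\eta)$ of $\mathrm{PHT}_A(K)$. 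By \cref{thm:A-bundle-unique}, this canopy is well defined up to isomorphism, so the hypothesis ``fibered'' is independent of the choices of orderings and pairings.

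Second, since $d \geq 3$, the base space is $B = \S^{d-1}$ with $d-1 \geq 2$, and $B$ is Hausdorff. By hypothesis $\cE$ is fibered, so \cref{prop:no_monodromy_for_fib_spheres} applies directly and gives that $\cE$ has no non-trivial monodromy. To make this step explicit, I would recall why the proposition holds here. A fiber bundle over $\S^{d-1}$ with discrete fiber $[N]$ is a covering space, and $\S^{d-1}$ is simply connected. So each path-component of $E$ maps homeomorphically onto $\S^{d-1}$, and $E \cong \S^{d-1}\times[N]$. For any circle $M \subset \S^{d-1}$, $\pi\inv(M) \cong M \times [N]$, whose connected components are the sheets $M\times\{i\}$. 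Each sheet meets each fiber in at most one point, giving $k = 1$.

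The only real subtlety is the meaning of ``no monodromy'' in the statement. By the paper's convention, an embedded complex exhibits monodromy if the A- or D-canopy of the PHT does. The statement concerns only the A-canopy, and the remark after \cref{prop:no_monodromy_for_fib_spheres} shows that no implication to the D-canopy should be claimed. So I will read the conclusion as ``the A-canopy of $\mathrm{PHT}_A$ has trivial monodromy.''

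A second point is that the supremum in the definition of $k$ ranges over all $p\in B$, not just $p \in M$. This is harmless, because $C_i \subseteq \pi\inv(M)$ means $\pi\inv(p)\cap C_i$ is empty for $p \notin M$. With these checks the corollary follows.
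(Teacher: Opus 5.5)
Your proposal is correct and takes the same route as the paper, which simply invokes \cref{prop:no_monodromy_for_fib_spheres} with $B=\S^{d-1}$, $d-1\geq 2$. Your check that the directional transform is a fibered filtration function and your explicit trivial-cover argument only fill in details the paper leaves implicit. One small correction to your side remark: when the A-canopy is fibered, you do not need to exclude the D-canopy. Since $E'$ has the subspace topology from $E\cong \S^{d-1}\times[N]$, every connected component of $(\pi')^{-1}(M)$ lies in a single sheet, so the D-canopy also has trivial monodromy. The remark after the proposition concerns A-canopies whose structural points lie on the diagonal, not fibered ones.
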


In other words, if the $A$-canopy of a $\mathrm{PHT_A}$ over $\S^d$, for $d\geq 3$, has monodromy, it must have a structural point.

\begin{proof}
By \cref{prop:no_monodromy_for_fib_spheres}, if $\mathrm{PHT}_A$ has a fibered A-canopy, then it consists of a fiber bundle over a sphere of dimension at least 2 and hence must be a trivial bundle. 
Thus, a connected component over any loop in $B$ is a trivial fibered bundle itself and thus does not give non-trivial monodromy.
\end{proof}

\section{Discussion}
\label{sec:Discussion}

In this paper, we provided a new construction for studying parameterized persistence in the setting of a fibered filtration function. 
Specifically, for a function $f \colon B \times K \to \R$, we can track the augmented persistence diagrams as a function $F_A \colon B \to \DgmAsp$ or the diminished persistence diagrams as a function $F_D \colon B \to \DgmDsp$.
We then build an A-canopy by constructing a set $E = \{ (p,\sigma,\tau) \mid (\sigma,\tau) \in P_{\prec_p}\}$ for fixed choices of pairings, and then defining a topology on the space based on trajectories that relate the points algebraically.
Then we can build a D-canopy by taking the A-canopy built and essentially forgetting the points on the diagonal while taking care of boundary conditions. 
We showed that away from particularly bad points of multiplicity in the persistence diagrams, the A-canopy has the structure of a fiber bundle, and showed that the D-canopy has similar restricted structure without being a true fiber bundle.
Indeed, these problematic points arise as non-Hausdorff points in the respective spaces. 

The canopy construction is very versatile, as it allows us to provide a definition of vines that include points with multiplicity and to discuss monodromy, thus either solving or setting the stage to solve some open problems in \cite{Turner2023}, and to answer the long standing question in persistence theory of how to follows the points in the persistence diagram when their paths cross. 
This latter answer may seem unsatisfactory: it is in practice telling us that whenever the choice is non-canonical, we should not choose at all because the features are somewhat morphing into one another. 
Nonetheless, it provides a practical tool to handle these situations rather than having to assume they never occur.

This work opens up a wealth of possible directions for future study. 
We focused on using as few requirements as possible on the fibered filtration function to make the canopies applicable to a wide range of inputs.
In particular, we did not use the stratification studied in \cite{Hickok2026} for general bundles with perturbations; or the stratification built for the PHT specifically given by \cite{Curry2022}. 
We believe additional control on the topological spaces involved, or on the functions themselves, would provide potentially stronger results than were available here. 
It would also be interesting to understand how the resulting constructions that arise from other fibered filtration functions available in the literature, such as those arising from distance transforms, or studying the state space of dynamical systems. 
Further study to understand the resulting constructions for more general input spaces than finite simplicial complexes is also an interesting open direction for the future.

\paragraph{Acknowledgments}
BG and EM are grateful to Erin Chambers, Abigail Hickok, Kate Turner, and Kevin Woytowich for helpful discussions in the course of this work. 
This material is partially based upon work supported by the Swedish Research Council under grant no.~2021-06594 while the authors were in residence at Institut Mittag-Leffler in Djursholm, Sweden during the summer of 2025.
This work was funded in part by the National Science Foundation [CCF-2106578, CCF-2142713, Conference Grant DMS-2500005, EM]; and by a grant from the Simons Foundation [MPS-TSM-00007525, BG].

\printbibliography

\end{document}